\documentclass[11pt]{article}
\usepackage{textcomp}
\usepackage{setspace}
\usepackage{scalefnt}
\usepackage[vcentering,dvips]{geometry}
\usepackage{color}
\usepackage{natbib}
\usepackage{subfig}

\usepackage[normalem]{ulem}

\usepackage[usenames,dvipsnames]{xcolor}

\usepackage{graphicx}

\usepackage{xurl}

\newtheorem{theorem}{Theorem}

\newtheorem{corollary}{Corollary}

\newtheorem{lemma}{Lemma}
\newtheorem{proposition}{Proposition}

\newenvironment{proof}[1][Proof]{\noindent\textbf{#1.} }{\ \rule{0.5em}{0.5em}}

\newcommand{\E}{\mathbb{E}}

\newcommand{\dd}{\,d}

\usepackage{enumitem}
\usepackage{pdflscape}
\usepackage{amsmath}
\usepackage{amssymb}

\newcommand{\1}{\mathbf{1}}

\usepackage{cancel}

\begin{document}
\onehalfspacing

\title{Pure-Strategy Equilibrium in the \\Generalized First-Price Auction\thanks{We are grateful to Tatul Ayrapetyan, Sebasti\'{a}n Bauer, and Morteza Honarvar for excellent research assistance and to seminar participants at the 2022 NBER Market Design Working Group Meeting, the EC'25 Frontiers of Online Advertising Workshop, the 2026 International Workshop on Game Theory and Economic Applications, UC Irvine, Amazon, and Etsy for helpful comments and suggestions.}
}

\author{Michael Ostrovsky\thanks{ostrovsky@stanford.edu. Stanford University and NBER.}
\and Andrzej Skrzypacz\thanks{skrz@stanford.edu. Stanford University.}
}
\date{\today} 

\maketitle
\begin{abstract}
We revisit the classic result on the (non-)existence of pure-strategy Nash equilibria in the Generalized First-Price Auction for sponsored search advertising and show that the conclusion may be reversed when ads are ranked based on the product of stochastic quality scores and bid amounts, rather than solely on the bids or on the product of bids and deterministic quality scores. Moreover, the expected revenue in the pure strategy equilibrium of the Generalized First-Price Auction may substantially exceed that of the Generalized Second-Price Auction, although under some conditions the relation may also be reversed.
\end{abstract}

\newpage

\section{Introduction}
\label{sec:introduction}
The initial design of sponsored search auctions, implemented in 1997 by the company GoTo (later renamed Overture and subsequently acquired by Yahoo), used a very simple payment rule. Namely, each advertiser, following a click from a user, paid to GoTo the bid that it submitted to the system: e.g., if an advertiser submitted a bid of \$1, then that was the amount charged every time some user clicked on that advertiser's ad. 

This format, dubbed the Generalized First-Price Auction (GFP), was intuitive and easy to explain to the advertisers, but it suffered from a serious shortcoming. The ads on the results page were sorted purely based on the submitted bids (in the descending order), and so each advertiser, given a particular position, had an incentive to outbid the next highest competitor only by the smallest possible amount -- 1 cent. This incentive structure resulted in the non-existence of pure-strategy equilibria, and instead produced pronounced rapid ``cycling'' patterns, in which advertisers would continually outbid each other by 1 cent, then one advertiser would drop her bid by a large amount (when she decided that competing for the top position was no longer worth it, and settling for a lower one would be better) -- and then immediately her competitor would drop his bid as well (because he did not want to outbid her by more than 1 cent). This behavior led to stress on GoTo's ad servers, and also lowered ad auction efficiency and revenues \citep{EO2007}.\footnote{This cycling pattern is analogous to ``Edgeworth price cycles'' in the context of Bertrand price competition, in which competitors repeatedly undercut each other's price by a small amount, until one of them raises its price, restarting the cycle \citep{Edgeworth1925,Maskin1988,Eckert2004,Noel2007}. See \cite{ZF2011} for a detailed discussion of this connection.}

When Google came out with their own sponsored search auction design in 2002, they introduced two major changes. First, instead of ranking ads purely based on the corresponding bids, they ranked them based on the product of each ad's bid and its \textit{quality score}. Initially, quality score was equal to the estimated ``clickability'' of the ad, i.e., the probability that the ad would be clicked by a user if it was shown in the top position, although in later iterations it started including other factors (such as, e.g., the quality of the landing page). The second change was the introduction of a different payment rule. Instead of paying his own bid, each advertiser would pay the smallest amount to outbid his nearest competitor (given their respective quality scores). With identical quality scores, the payment rule reduced to simply paying the bid of the next highest bidder, and the format was dubbed the Generalized Second-Price Auction \citep{EOS2007,Varian2007}.

The Generalized Second-Price Auction format (GSP) has become enormously successful, becoming the de facto industry standard. Beyond Google, it has been deployed by most of the search engines worldwide. It is used by Amazon for its sponsored products offering (which is a major part of the company's \$31 billion per year, and rapidly growing, advertising business\footnote{\url{https://www.cnbc.com/2022/02/03/amazon-has-a-31-billion-a-year-advertising-business.html}}), and many other online retailers and marketplaces use it to power their ``Retail Media'' advertising offerings. DoorDash and Uber Eats use it to allow restaurants to advertise on their apps and websites.\footnote{\url{https://digiday.com/marketing/the-holy-grail-of-e-commerce-advertising-why-doordash-is-bolstering-its-advertising-offerings/}} Apple's App Store and Google's Play Store use it to let developers promote their apps. Travel sites like TripAdvisor, Expedia, Booking, and others use it to allow hotels to promote themselves to travelers. Job search sites like Indeed allow firms to post Sponsored Jobs that receive premium placement when job-seekers search for relevant terms. Instacart is using GSP to power its booming ad auction business.\footnote{\url{https://www.wsj.com/articles/instacart-goes-deeper-into-digital-advertising-as-grocery-delivery-slows-11630920600}}

Thus, until recently, it may have looked like the GSP had ``won,'' while GFP had fallen by the wayside, due to its poor equilibrium properties. In this paper, we show that this conclusion may be premature, and the situation is more nuanced. Specifically, we reconsider the question of equilibrium existence in the Generalized First-Price Auction when we include in the analysis a feature that is realistic (and common) in many real-world settings in which these types of auctions are used---stochastic quality scores. In the original \cite{EO2007} paper, quality scores were not considered (or equivalently, were constant and equal to 1), but the paper's conclusions on the non-existence of pure-strategy equilibria would have remained largely the same even if the scores were different from 1, but were constant. In practice, however, the scores are stochastic, and may vary widely from one impression to the next (e.g., one ad may perform particularly well with women in New York, while another ad for the same search term may perform particularly well with men in San Francisco). Other real-world features may also make the environment stochastic (e.g., bidders' campaigns may be ``paced'' if their daily budgets are too low).\footnote{See  \cite{AtheyNekipelov} and \cite{PinKey} for discussions and empirical evidence of randomness in quality scores and other auction parameters.} In such cases, the GFP strategy ``bid 1 cent higher than my opponent'' is in general no longer a best response, and the payoff functions become continuous, smooth, and better behaved. As a result, in such environments, we show that the Generalized First-Price Auction may in fact have a pure strategy equilibrium. Moreover, perhaps even more surprisingly, we show that the expected revenue under that equilibrium may substantially exceed that of the Generalized Second-Price Auction.\footnote{In Section 4, where we present this comparison, we explain why this finding does not contradict the celebrated Revenue Equivalence Theorem of \cite{myerson}, despite both formats resulting in identical allocations.} Together with other attractive features of the Generalized First-Price Auction, such as that it is easy to explain to the advertisers (and their managers) and that it does not suffer from potential auctioneer credibility problems \citep{AkbarpourLi}, our results suggest that it may deserve another look in some settings.

We start out by setting up the basic model to illustrate the above features (Section 2). The model is deliberately ``stripped down'' to its bare essence, to illustrate the key driving forces behind our results and make the derivations as transparent as possible. In particular, the model has only two bidders and two slots, and assumes that the distribution of quality scores is uniform on $[0,1]$. The simplicity of the model makes the revenue non-equivalence result particularly striking. After proving the existence and uniqueness of the equilibrium of GFP in this basic setting (Section 3) and comparing the revenues of GFP and GSP (Section 4), we explore a number of extensions and generalizations of the basic model.

In Section 5, we allow for more than two bidders and slots, and more general distributions of quality scores. In this more general setting, the pure-strategy equilibrium of GFP is not guaranteed to exist, but we show that there is only one ``candidate'' equilibrium, and provide a simple expression for computing it (Theorem 3). For any particular setting, one can then check whether the strategy provided by this candidate equilibrium expression is in fact the best response to all other bidders using that strategy, and thus determine whether the symmetric pure-strategy equilibrium exists. We also provide some sufficient conditions for equilibrium existence (Lemmas 1 and 2).

We then explore a variety of special cases of this framework: keeping the two-bidder, two-slot assumption but allowing for more general distributions of quality scores (Subsection 5.1), keeping the two-slot assumption but allowing both for $n \ge 2$ bidders and more general distributions (Subsection 5.2), and finally allowing for the general case of more than two slots, with the additional assumption of geometrically declining slot visibilities, going from top to bottom (Subsection 5.3). 

In Subsection 5.1.2, we discuss a particularly instructive special case that illustrates the role of quality score heterogeneity. Specifically, we maintain all of the simplifying assumptions of the basic model of Section 2, with only one change. Instead of assuming that quality scores are distributed uniformly on $[0,1]$, so that the ratio between the highest possible score and the lowest possible score is infinite, we ``shift'' the uniform distribution away from zero, so that the ratio is finite, so that there is less uncertainty about the ratio of the bidders' quality scores (and in the limit, the ratio of the highest possible quality score to the lowest possible quality score converges to 1, i.e., the no uncertainty case of \cite{EO2007}). 
Consistent with the intuition, we find that as the amount of uncertainty decreases, the range of other parameter values for which the pure-strategy equilibrium of GFP exists shrinks, completely disappearing in the limit as uncertainty goes away.\footnote{This is analogous to the finding that in the Edgeworth-Bertrand price competition context, a non-negligible amount of product differentiation is needed to restore the existence of a pure strategy equilibrium \citep{Benassy1989,Caplin1991,Canoy1996}. Of course, both the context of that setting and the mechanics of the results are very different from those in our paper.} Revenue comparisons between GSP and GFP also become more subtle as the amount of uncertainty is reduced. These findings suggest that the generalized first-price auction might be particularly attractive in settings in which the amount of uncertainty in quality scores is high (such as, e.g., ``display advertising'' types of settings, in which user intent is not clear and the ad system is trying to predict user interests based on their highly variable characteristics), whereas the generalized second-price auction might be relatively more attractive in settings in which the amount of uncertainty in quality scores is lower (such as, e.g., in pure ``sponsored search,'' in which ads are shown directly in response to user queries which clearly contain a lot of information and thus make the additional information about user characteristics relatively less important---thus making overall quality scores less variable). This comparative static is consistent with the current state of advertising at such search engines as Google, whose sponsored search system on Google.com (AdWords) is using the generalized second-price auction, while its ad system for external publishers, closer to display advertising (AdSense) has recently shifted to the generalized first-price auction.\footnote{\url{https://blog.google/products/adsense/our-move-to-a-first-price-auction/}}

In Section 6, we go beyond the assumption that all bidders are identical. In Subsection 6.1, we again go back to the most basic model of Section 2, but now assume that the two bidders' per-click values are heterogeneous. We find that even with this generalization, the equilibrium of the generalized first-price auction continues to exist. Moreover, equilibrium bids exhibit robust comparative statics with respect to key model parameters. The revenue comparison between GSP and GFP, however, is again more subtle. In Subsections 6.2 and 6.3, we allow bidders to possess private information at the time they submit their bids (about their values in 6.2 and about their quality scores 6.3), and prove that in these settings, the symmetric pure-strategy equilibrium is always guaranteed to exist, under general distributions and numbers of bidders and slots (Theorems 5 and 6). Section 7 concludes. 

\section{Basic Model}
\label{sec:basicmodel}
Two advertisers are competing for two advertising slots on an internet page. 
Each values a click from a user at \$1. The two slots vary in terms of visibility to the user.
The better slot has visibility normalized to $1$ and the lesser slot has visibility $\alpha\in \left( 0,1\right)$.

The bidders  simultaneously submit bids $\{ b_1,b_2\}$. Then a user arrives to the website. The website has information about user characteristics and the potential relevance of each ad to the user,  
and for each ad estimates the probability that the user will click on the ad conditional on noticing it. 
The probabilities for the two ads are (proportional to) $\{q_1,q_2\}.$ 
Each $q_i$ is drawn independently from the uniform distribution on $[0,1]$.\footnote{Since all payoffs in the model are proportional to the probability of a click, $q_i=1$ does not mean that there are users with an estimated probability of click equal to 1. Instead, whatever is the highest probability of click, we normalize it to be 1, without loss of generality.}

The auctioneer runs a generalized first-price auction (GFP) with scoring the bids
by $q_i$'s. 

That is, the publisher after observing $q_i$'s and bids, ranks the bids based on the scores:
\[
S_{i}=b_{i}q_{i}.
\]

The bidder with the highest score wins the top slot and, conditional on a click, pays his bid. The bidder with the second-highest score wins the lesser slot and likewise pays his bid if the user clicks on his ad.

The expected profits of the winner and loser, conditional on bid $b$ and realized $q$ are:
\begin{eqnarray*}
U_{W}(q,b) &=& q( 1-b), \\
U_{L}(q,b) &=&\alpha q(1-b).
\end{eqnarray*}

We are interested in characterizing the symmetric pure-strategy Nash equilibria of this game. From now on, when we talk about equilibrium (for example, the existence of equilibrium, revenues in equilibrium etc.), then unless we explicitly specify otherwise, we mean pure-strategy symmetric equilibrium.

\section{Existence of a Pure Strategy Equilibrium}
Our first main result is:
\begin{theorem}\label{Theorem1}
In the generalized first-price auction described above, there exists a unique symmetric pure-strategy equilibrium, with the equilibrium bids equal to $$b_1 = b_2 = \frac{2(1-\alpha)}{4-\alpha}.$$
\end{theorem}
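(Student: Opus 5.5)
The plan is to compute bidder 1's expected payoff in closed form when she bids $b$ and her opponent bids $c$. With that in hand, the symmetric equilibrium follows from a first-order condition, and a monotonicity argument shows the first-order condition is sufficient. Writing $r=b/c$, bidder 1 always receives at least the lower slot and wins the top slot when $bq_1>cq_2$. Her expected payoff is therefore
$$U(b;c)=(1-b)\Big[\tfrac{\alpha}{2}+(1-\alpha)\,g(b/c)\Big],\qquad g(r)=\E\big[q_1\1\{q_2<rq_1\}\big].$$
Using that $q_1,q_2$ are independent uniform on $[0,1]$, a direct integration gives
$$g(r)=\frac{r}{3}\ \text{ for } r\le 1,\qquad g(r)=\frac12-\frac{1}{6r^2}\ \text{ for } r\ge 1.$$
Both pieces equal $1/3$ and have derivative $1/3$ at $r=1$. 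Hence $U(\cdot;c)$ is continuously differentiable in $b$ on $(0,1)$. This smoothness is exactly the feature that breaks the ``outbid by one cent'' logic of the deterministic case.

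\emph{Necessity.} First I rule out corner candidates. If $c\ge 1$, the bidder earns a nonpositive payoff and can profitably deviate to any $b\in(0,1)$. If $c=0$, both bidders score zero and the top slot is allocated by tie-breaking, so each gets at most $\alpha/2+(1-\alpha)/3<1/2$ in expectation. A deviation to a small $\varepsilon>0$ wins the top slot surely and earns $(1-\varepsilon)/2$, which is larger for $\varepsilon$ small. So any symmetric equilibrium has $c\in(0,1)$ and must satisfy $\partial U/\partial b=0$ at $b=c$. Since $g(1)=g'(1)=1/3$, this condition reads
$$-\Big[\tfrac{\alpha}{2}+\tfrac{1-\alpha}{3}\Big]+\frac{(1-c)(1-\alpha)}{3c}=0,$$
i.e.\ $2(1-\alpha)(1-c)=(2+\alpha)c$. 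Its unique solution is $c=\frac{2(1-\alpha)}{4-\alpha}$, which proves uniqueness.

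\emph{Sufficiency.} Fix $c=\frac{2(1-\alpha)}{4-\alpha}$; I must show $b=c$ is a global best response. Bids outside $[0,1)$ are clearly dominated, since they give a nonpositive payoff. The key step is to show that $\partial U/\partial b$ is decreasing on all of $(0,1)$, so that the first-order condition identifies the global maximum.
\begin{itemize}
\item On $b\le c$, $U$ is the concave quadratic $(1-b)\big(\tfrac{\alpha}{2}+\tfrac{(1-\alpha)b}{3c}\big)$, so its derivative decreases.
\item On $b\ge c$, write $D=(1-\alpha)/6$. Then $U=(1-b)\big(\tfrac12-Dc^2/b^2\big)$ and
$$U'(b)=-\tfrac12+Dc^2\,\frac{2-b}{b^3},$$
which is strictly decreasing in $b$ on $(0,1)$.
\end{itemize}
Because $U'$ is continuous at $b=c$, it is decreasing on all of $(0,1)$ and vanishes at $b=c$. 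Hence $U$ increases up to $c$ and decreases after it, so $b=c$ is the global maximizer.

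The main obstacle I expect is this global-optimality step, in particular the piece $b>c$, where $U$ is not concave in an obvious way. The explicit derivative formula above resolves it. Everything else is routine integration and algebra.
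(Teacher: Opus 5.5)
Your proposal is correct and follows essentially the same route as the paper. Both use the same piecewise closed form for the payoff, the same first-order condition at the symmetric profile, and the same global-optimality argument: each piece is concave (your $U'$ on $b\ge c$ differentiates to exactly the paper's $\frac{(1-\alpha)(b-3)c^2}{3b^4}<0$), and the pieces are joined by differentiability at $b=c$.

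Your explicit exclusion of the corner candidates $c=0$ and $c\ge 1$ is a small addition that the paper only makes later, in the proof of Theorem 3. There is one caveat in that step. The claim that \emph{each} bidder gets at most $\alpha/2+(1-\alpha)/3$ at $c=0$ depends on the tie-breaking rule. It is not needed, though: the two zero-bid payoffs sum to at most $E[\max q_i]+\alpha E[\min q_i]=(2+\alpha)/3$, so at least one bidder gets at most $(2+\alpha)/6<1/2$ and profits from deviating to a small $\varepsilon>0$.
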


Before we present the proof, we discuss the intuition. When $q$'s are known (or not used at all as in the original auctions), there is no pure-strategy equilibrium because at a tie, one of the bidders wants to either deviate to an $\epsilon$ higher bid or to zero.
Unequal bids cannot be an equilibrium either since then the loser wants to deviate to bid zero, the winner to almost match him, but then the loser should ``jump over.''

When bidders do not know $q$'s, the probability of winning changes smoothly in bid. Deviating to an $\epsilon$-higher bid changes the probability of getting the top position only slightly.
When there is enough uncertainty about $\frac{q_i}{q_j}$, the best response payoff becomes concave in own bid, and there exists a fixed point.\footnote{As we show in Section 5 when we consider a generalization of the basic model, concavity is important---continuity alone is not enough for existence.} 

The proof uses the fact that with the distribution of $q_i$'s having support starting at zero, the ratio of $q$'s ranges from $0$ to $\infty$. As we show in Section 5, it is possible to generalize the existence result for less variation in that ratio, but the existence of a pure-strategy equilibrium depends then on $\alpha$. The smaller $\alpha$ is, the smaller variation in the ratio is needed for existence. 

\begin{proof}
Since the optimization problem is symmetric for the two players, we consider only player~1.
Conditional on the two bids being \{$b_1,b_2\}$, the expected profit of bidder $1$ is
\begin{equation} \label{eq1-main}
\begin{split}
EU_1 (b_1,b_2) & =(1-b_1) E[q_1(\alpha+(1-\alpha) \1 _{b_1 q_1 > b_2 q_2})] \\
& = (1-b_1) \int_{0}^{1} q(\alpha +( 1-\alpha ) h(b_1,b_2,q)) dq, 
\end{split}
\end{equation}
where $h(b_1,b_2,q)$ is the probability that the score of bidder $1$ is higher than the score of bidder $2,$ given the
bids and $q_{1}=q$. 

In this expression, $\left( 1-b_{1}\right) $ is the expected profit
conditional on a click, $q\alpha $ is the probability of a click in position 
$2$ and the additional probability of a click $q\left( 1-\alpha \right) $ is
only conditional on winning the auction.

Player 1 wins whenever $b_1 q_1 > b_2 q_2$, so 
\[
h(b_1,b_2,q) = \min \left\{ 1, \frac{b_{1}}{b_{2}}q \right\}.
\]
Simplifying the expressions for expected payoffs we get: if $b_{1} \geq b_2$:
\begin{eqnarray}\label{eq2-main}
EU_{1}\left( b_{1},b_{2}\right)  = \left( 1-b_{1}\right) \left[ 
\frac{\alpha}{2}+\left( 1-\alpha \right) \frac{1}{6}\left( 3-\left( \frac{b_{2}}{
b_{1}}\right) ^{2}\right) \right],
\end{eqnarray}
and if $b_{1} \leq b_{2}$:
\begin{eqnarray}\label{eq3-main}
EU_{1}\left( b_{1},b_{2}\right)  = \left( 1-b_{1}\right) \left[  
\frac{\alpha}{2}+\left( 1-\alpha \right) \frac{1}{3}\frac{b_{1}}{b_{2}}\right] .
\end{eqnarray}

A necessary condition for equilibrium is that the FOC holds at the symmetric bidding profile $b_1=b_2=b$. 
By inspection, $EU_1(b_1,b_2)$ is differentiable in $b_1$ at symmetric bidding profiles and the derivative at such vectors is
\begin{eqnarray*}
&&\frac{\partial EU_1(b_1,b_2) }{\partial b_{1}} = \frac{2(1-\alpha) - b(4-\alpha)}{6b}.
\end{eqnarray*}
The unique solution of the FOC is hence a unique candidate for the pure symmetric equilibrium:

\begin{eqnarray*}
&&\frac{\partial EU_1(b_1,b_2)}{\partial b_{1}}=0 \text{ and } b_1=b_2=b  \iff  b_1 = b_2 = \frac{2(1-\alpha)}{4-\alpha}.\\
\end{eqnarray*}

To check that this is actually an equilibrium we need to check that these bids are global best responses. Direct inspection of (\ref{eq3-main}) shows that $EU_1(b_1,b_2)$ is concave in $b_1$ (since it is quadratic with a negative coefficient on $b_1$). Similarly, for any $b_1 \leq 1$ the expression in (\ref{eq2-main}) is concave in $b_1$ (note that $\frac{\partial^2 EU_1(b_1,b_2)}{\partial b_{1}^2}=\left( 1-\alpha \right)
\left( b_{1}-3\right) \frac{b_{2}^{2}}{3b_{1}^{4}}<0$ and bidding above $1$ is a dominated strategy).
Hence, the FOC is both necessary and sufficient for the optimality of the best response of bidder~1. 
\end{proof}

Given the equilibrium bid $b$, the equilibrium payoff is $$EU_1(b,b) =(1-b) \left( \frac{1}{6}\alpha +\frac{1}{3}\right). $$
The intuition is that the bidder gets payoff $(1-b)$ conditional on a click. They are half of the time the bidder with the higher score and half of the time with the lower score. Conditional on having the higher $q_i$, the expected $q_i$ is $\frac{2}{3}$, which is the expected probability of a click in the top position. Conditional on having the lower $q_i$, the expected $q_i$ is $\frac{1}{3}$, so the expected probability of a click on the lower position is $\alpha /3$. This yields the expression for the expected equilibrium number of clicks.

\section{Revenue Comparison: GFP vs. GSP}
Our second result compares bidding and revenues between the equilibrium of the GFP we characterized above and the equilibrium in a generalized second-price auction, GSP.

In the equilibrium of the GFP the expected revenue is $b(\frac{\alpha}{3} + \frac{2}{3})$ because the expected $q$ of the winner is $\frac{2}{3}$ and the expected $q$ of the loser is $\frac{1}{3}$. Plugging in the equilibrium bidding strategy we get
\begin{eqnarray}
REV_{GFP} =\frac{1}{3}\left( 1-\alpha \right) \frac{2\left( \alpha +2\right) }{4-\alpha}.
\end{eqnarray}

In a GSP, the loser pays $0$ and the winner $i$ pays 
\[
P=b_{j}\frac{q_{j}}{q_{i}}.
\]
That is the lowest amount the winner could have bid and still at least tied with the loser (with a bid $b_i=P$ the scores of the two bidders are equal).

With two ad slots and two bidders, the equilibrium bidding strategies in GSP are straightforward to find. Each bidder is guaranteed to win at least the second position and pay 0, so an equilibrium strategy is to bid per click the full value of the incremental expected number of clicks from the upgrade to the top position:
\[
b_{GSP}=1-\alpha.
\]
To see this, suppose bidder 1 happens to know $q_1$, $q_2$, and $b_2 = (1-\alpha)$. Then his options are: \begin{enumerate}
    \item Lose the auction, pay 0, and get the second position (and thus $\alpha q_1$ clicks), for the expected payoff of $\alpha q_1$; or 
    \item Win the auction, pay $b_2 q_2 = (1-\alpha) q_2$ in expectation, and get $q_1$ clicks, for the expected payoff of $q_1 - (1-\alpha) q_2$.
 \end{enumerate}

The second option is more attractive to bidder 1 whenever $q_1 > q_2$---and that is exactly the outcome that the bidder will get if he bids $b_1 = (1-\alpha)$.

The expected revenue in the GSP can be computed as: 

\[
REV_{GSP}=2\int_{0}^{1}q\left( \int_{0}^{q}\left( \left( 1-\alpha \right) \frac{
q_{2}}{q}\right) dq_{2}\right) dq=\frac{1}{3}\left( 1-\alpha \right). 
\]

This expression is intuitive: given the realized first and second order statistics of $q,$ $q^{\left( 1\right) }$ and $q^{(2)},$ respectively, only the winner pays and the expected payment of the winner per impression is 
\[
q^{\left( 1\right) } \cdot b\frac{q^{\left( 2\right) }}{q^{\left( 1\right) }}%
=bq^{\left( 2\right) }. 
\]%
The equilibrium bid is $b=\left( 1-\alpha \right) $ and the expected second-highest $q$ is $\frac{1}{3}.$

Finally, observe that $$REV_{GFP} = \frac{1-\alpha}{3} \frac{2\left( \alpha +2\right) }{4-\alpha} > \frac{1-\alpha}{3} = REV_{GSP},$$ because $\frac{2(\alpha +2)}{4-\alpha}>1$.

Summarizing the analysis above yields our second main result:

\begin{theorem}
The expected revenue in the Generalized First-Price Auction is equal to $$REV_{GFP} = \frac{2(1-\alpha)(\alpha + 2)}{3(4-\alpha)}.$$

The expected revenue in the Generalized Second-Price Auction is equal to $$REV_{GSP} = \frac{1-\alpha}{3}.$$

For all $\alpha \in (0,1)$, $$REV_{GFP} >  REV_{GSP}.$$

\end{theorem}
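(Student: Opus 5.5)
The plan is to compute each revenue as the expected payment per impression, using the equilibria already identified, and then compare the two closed forms.

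\textbf{GFP revenue.} By Theorem~\ref{Theorem1}, both bidders submit $b=\frac{2(1-\alpha)}{4-\alpha}$. With equal bids, the ranking by $S_i=bq_i$ coincides with the ranking by $q_i$. The top ad is therefore the one with $q^{(1)}=\max\{q_1,q_2\}$ and receives $q^{(1)}$ clicks in expectation. The other ad receives $\alpha q^{(2)}$ clicks. Each click pays $b$, so expected revenue is
\[
b\left(\E[q^{(1)}]+\alpha\E[q^{(2)}]\right).
\]
For two independent uniforms, $\E[q^{(1)}]=\frac{2}{3}$ and $\E[q^{(2)}]=\frac{1}{3}$. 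Substituting gives $REV_{GFP}=\frac{2(1-\alpha)}{4-\alpha}\cdot\frac{2+\alpha}{3}$, which is the stated formula.

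\textbf{GSP revenue.} First I need to confirm that $b_{GSP}=1-\alpha$ is an equilibrium. Fix $b_2=1-\alpha$. For any realized $(q_1,q_2)$, bidder~1 faces two options:
\begin{enumerate}
\item lose, pay nothing, and get payoff $\alpha q_1$;
\item win and get payoff $q_1-(1-\alpha)q_2$.
\end{enumerate}
Winning is better exactly when $q_1>q_2$. Bidding $1-\alpha$ produces exactly that outcome. Hence this bid is optimal pointwise in $(q_1,q_2)$, so it is also optimal in expectation.

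Given these bids, only the winner pays. The winner is charged $b\,q^{(2)}/q^{(1)}$ per click and receives $q^{(1)}$ clicks, so its expected payment is $b\,q^{(2)}$. Thus
\[
REV_{GSP}=(1-\alpha)\,\E[q^{(2)}]=\frac{1-\alpha}{3}.
\]
Equivalently, one can evaluate the double integral $2\int_0^1 q\int_0^q (1-\alpha)\frac{q_2}{q}\,dq_2\,dq$.

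\textbf{Comparison.} Factoring out $\frac{1-\alpha}{3}>0$, the claim $REV_{GFP}>REV_{GSP}$ reduces to $\frac{2(\alpha+2)}{4-\alpha}>1$. This is equivalent to $2\alpha+4>4-\alpha$, that is, $3\alpha>0$, which holds for every $\alpha\in(0,1)$.

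\textbf{Main obstacle.} The only step that is more than arithmetic is the GSP part. The revenue figure is tied to the particular equilibrium $b=1-\alpha$, which is the natural one given the pointwise best-response argument above. I would state the comparison with respect to this equilibrium, just as the GFP side uses the unique equilibrium from Theorem~\ref{Theorem1}. Everything else follows from the order-statistic expectations.
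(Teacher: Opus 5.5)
Your proposal is correct and follows the paper's argument essentially step for step. You compute GFP revenue as $b\bigl(\E[q^{(1)}]+\alpha\E[q^{(2)}]\bigr)$ with order-statistic means $2/3$ and $1/3$, verify $b_{GSP}=1-\alpha$ by the pointwise best-response argument, reduce the GSP winner's expected payment to $b\,q^{(2)}$, and finish the comparison via $\frac{2(\alpha+2)}{4-\alpha}>1$. Like the paper, you evaluate GSP revenue at the equilibrium $b=1-\alpha$ without claiming it is unique.
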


Note that as $\alpha$ gets close to $0$, the expected revenues from those two formats converge. But for all higher $\alpha$ the expected revenues are not the same, despite both formats resulting in efficient (and thus identical) equilibrium allocations. The difference can be substantial. For example, suppose $\alpha = 0.7$, which is a fairly typical ``dropoff'' value in ad auctions, meaning that the second slot gets 70\% as many clicks as the first one. Then the revenue in the generalized second-price auction is equal to 0.1, while the revenue for the generalized first-price auction is equal to $0.1 \cdot \frac{5.4}{3.3} \approx 0.164$---a difference of more than 60\%. 

\subsection{Understanding Revenue Non-Equivalence}

At first glance, this result on revenue differences seems to violate the Revenue Equivalence Theorem (RET) of \cite{myerson}---after all, in both formats, we end up with fully efficient (and thus identical) allocations: whichever bidder has the higher relevance $q_i$ gets the top position (because in both GFP and GSP, equilibria are symmetric, and the bidders are submitting identical bids). Since the value per click of each bidder is the same (equal to 1), that is the efficient allocation. And yet the revenues are different!

Of course, there is no contradiction. The purely technical reason is that RET holds subject to the payoffs of the lowest types being equal. And since in our model each bidder has only one informational type (the value per click of each bidder is fixed at 1, and relevance $q_i$ is not known to the bidder beforehand), which is thus automatically ``the lowest,'' RET has no bite. 

More substantively, the machinery behind the proof of RET in fact helps us understand the difference in expected revenues between the two formats. Specifically, consider a slightly modified version of the GFP ``auction.'' Bidder 2 has value $v_2=1$ and mechanically submits bid $b_2 = \frac{2(1-\alpha)}{4-\alpha}$ (the same as the equilibrium bid in Theorem 1 of Section 3). Bidder 1's value $v_1$ can take values between 0 and 1. The exact distribution of bidder 1's values is not important, but for concreteness, assume that it is uniform on $[0,1]$. Computing the equilibrium in this auction reduces to solving the single-agent bidding problem for every value $v_1 \in [0,1]$ of bidder 1. For $v_1 = 1$, we already know from Theorem 1 that the optimal best response $b_1(v_1)$ is equal to the bid of bidder 2, $\frac{2(1-\alpha)}{4-\alpha}$. For lower values $v_1 < 1$, we know from the standard monotonicity arguments that the optimal bid $b_1(v_1)$ must be weakly lower than $b_1(1) = b_2$. Thus, by the derivation parallel to that behind equation (\ref{eq3-main}) above, the expected utility of bidder 1 with value $v_1$ who submits bid $b_1$ is equal to  

\begin{eqnarray*}\label{rnet-main}
EU_{1}\left( v_1, b_{1}, b_{2}=\frac{2(1-\alpha)}{4-\alpha}\right)  &=& \left( v_1-b_{1}\right) \left[  
\frac{\alpha}{2}+\left( 1-\alpha \right) \frac{1}{3}\frac{b_{1}}{b_{2}}\right] \\
& = & \left( v_1-b_{1}\right) \left[  
\frac{\alpha}{2}+\left( 1-\alpha \right) \frac{1}{3}\frac{b_{1}}{\left(\frac{2(1-\alpha)}{4-\alpha}\right)}\right]\\
& = & \frac{1}{6} \left( v_1-b_{1}\right)\left[3\alpha + b_1\left(4-\alpha\right)\right].
\end{eqnarray*}
Since $b_1$ cannot be negative, and since $\alpha \in [0,1]$ (and thus $4-\alpha > 0$), this expression is maximized at the optimal bid
$$b_1(v_1) = \max\left\{0,\frac{v_1}{2}-\frac{3\alpha}{2(4-\alpha)}\right\}.$$
Thus, for small values of $v_1$, bidder 1 will optimally bid zero and settle for the second slot with no payment, while above the threshold value $v_1 = {3\alpha}/({4-\alpha})$, the bid is increasing linearly in~$v_1$. And for $v_1 =1$, as expected, we get $b_1(v_1 = 1) = \max\left\{0,\frac{1}{2}-\frac{3\alpha}{2(4-\alpha)}\right\} = \frac{2(1-\alpha)}{4-\alpha}$, which is equal to~$b_2$ and to the equilibrium bid of both bidders in the basic model of Sections 2 and 3.

Let $a^{GFP} (v_1)$ be the expected number of clicks that bidder 1 receives in this auction when he submits the optimal bid $b_1(v_1)$. That number is equal to $$ a^{GFP} (v_1) =   
\frac{\alpha}{2}+\left( 1-\alpha \right) \frac{1}{3}\frac{b_{1}(v_1)}{b_{2}}.$$

Consider now an analogous modified GSP auction, in which bidder 2's value is $v_2 = 1$ and his bid is fixed at $b_2^{GSP} = (1-\alpha)$. Bidder 1's value $v_1$ is again distributed uniformly on $[0,1]$. By the arguments analogous to those in the beginning of Section~4, the optimal bid for bidder 1 with value~$v_1$ is $b_1^{GSP}(v_1) = (1-\alpha)v_1$. And the expected number of clicks that bidder 1 receives in this auction when behaving in this optimal way is equal to $$ a^{GSP} (v_1) =   
\frac{\alpha}{2}+\left( 1-\alpha \right) \frac{1}{3}\frac{b^{GSP}_{1}(v_1)}{b^{GSP}_{2}}.$$

Note that in both GSP and GFP, the expected payoff of bidder 1 whose value is zero is also equal to zero. Thus, by the standard envelope argument underlying the Revenue Equivalence Theorem, we know that the expected payoff of bidder 1 whose value is 1 is equal to $\int_0^1 a^{GFP}(v_1)dv_1$ in the Generalized First-Price Auction and to $\int_0^1 a^{GSP}(v_1)dv_1$ in the Generalized Second-Price Auction. (And of course, these numbers are precisely the expected payoffs obtained by each bidder under these formats in the basic model in which both bidders' values are equal to~1.)

Now, the key observation is that for all $v_1$, we have $a^{GFP}(v_1)\le a^{GSP}(v_1).$ For the values of $v_1$ for which $b_1(v_1)=0$, this is immediate. For the remaining values of $v_1$, this inequality is equivalent to 
\begin{eqnarray*}
\frac{b_1(v_1)}{b_2} & \le & \frac{b^{GSP}_1(v_1)}{b^{GSP}_2}\\
& \Updownarrow & \\
\frac{\left(\frac{v_1}{2}-\frac{3\alpha}{2(4-\alpha)}\right)}{\left(\frac{1}{2}-\frac{3\alpha}{2(4-\alpha)}\right)} & \le & v_1\\
& \Updownarrow  & \\
{\frac{v_1}{2}-\frac{3\alpha}{2(4-\alpha)}} & \le & {\frac{v_1}{2}-v_1\frac{3\alpha}{2(4-\alpha)}},
\end{eqnarray*}
which is immediate. 

Thus, the expected payoff of bidder 1 whose value is 1 (and thus the expected payoff of each bidder in the basic model) is weakly lower in GFP than in GSP ($\int_0^1 a^{GFP}(v_1)dv_1 \le \int_0^1 a^{GSP}(v_1)dv_1$), and it is straightforward to further verify that the inequality is strict unless $\alpha = 0$ or $\alpha = 1$. But as we observed above, in the basic model both formats result in efficient allocations, and thus have identical total surpluses---and the expected revenue of the auctioneer, which is equal to the total surplus less the expected payoffs of the bidders, is therefore higher in the Generalized First-Price Auction than in the Generalized Second-Price Auction. 

To provide an illustration for the above arguments, consider a special case $\alpha = \frac{1}{2}$.  Using the above derivations, for this case, the optimal bid in GFP is $b^{GFP}_1(v_1) = 0$ for $v_1 \le \frac{3}{7}$ and $b^{GFP}_1(v_1) = \frac{v_1}2 - \frac{3}{14}$ for $v_1 > \frac{3}{7}$. The expected number of clicks for such a bidder is $\frac{1}{4}$ when $v_1 \le \frac{3}{7}$ (the bidder gets the second slot for sure) and $\frac{1}{4} + \frac{1}{6}\left(\frac{v_1 - 3/7}{1 - 3/7}\right)$ when $v_1 > \frac{3}{7}$ (there is a positive chance of getting the first slot). 

In GSP, the optimal bid is $b^{GSP}_1(v_1) = \frac{v_1}{2}$ for all $v_1$. The expected number of clicks is $\frac{1}{4} + \frac{1}{6}v_1$.

\begin{figure}
\centering
\includegraphics[width = 5in]{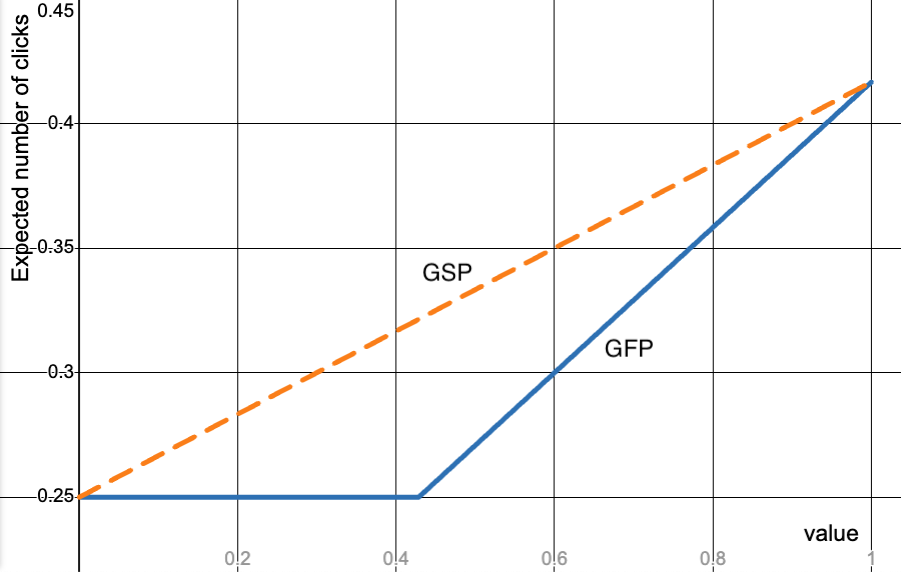} 
\caption{Expected number of clicks under GFP and GSP}
\label{envelope}
\end{figure}

Figure~\ref{envelope} plots the expected numbers of clicks under GFP and under GSP for $v_1 \in [0,1]$. For $v_1 = 0$, both values are the same---under both formats, the bidder gets the second slot for sure (and thus gets $\frac{1}{4}$ clicks in expectation). For $v_1 = 1$, both values are also the same---under both formats, the bidder bids the same amount as his or her opponent, and thus gets the top slot whenever his or her click-through rate is higher than that of the opponent (which of course is exogenous and is independent of the format). In that case, under both formats, the bidder gets $\frac{1}{4}+\frac{1}{6}=\frac{5}{12}$ clicks in expectation. The difference between the two formats is in the middle. Under GSP, the bidder bids truthfully for fraction $\frac{1}{2} = 1 - \alpha$ of a click (which is the marginal value of moving from the second slot to the first one), and so even for small values of $v_1$, the bid is positive and there is a positive chance of the bidder getting the top slot (when his or her click-through rate happens to be  high and the opponent's click-through rate happens to be low). The expected number of clicks is a linear function connecting the points $(0,\frac{1}{4})$ and $(1,\frac{5}{12})$. Under GFP, for small values of $v_1$, the bidder does not bid a positive amount---doing so would have little upside (very low chance of getting the top slot) and substantial downside (higher per-click payment even when getting the second slot). So for values below a threshold, $v_1^* = \frac{3}{7}$ in this case, the optimal bid is zero and the expected number of clicks won is flat at $\frac{1}{4}$. For values higher than $v_1^*$, the optimal bid becomes positive, there is a positive chance of getting a top slot, and the expected number of clicks starts rising linearly (at a steeper rate than under GSP) until getting to the same $(1,\frac{5}{12})$ point as in the graph of the expected number of clicks under GSP. Of course, it is immediate that for all intermediate values of $v_1$, the expected number of clicks won under GFP is lower than that under GSP. Thus the integral of the former function is strictly lower than the integral of the latter one, which by the arguments above (adapting the canonical \cite{myerson} reasoning) translates into the corresponding differences in expected bidder payoffs and expected auction revenues under the two formats. 

\section{Generalizations and Extensions: Identical Bidders}
\label{sec:ext-identical}

In this section, we explore a number of extensions and generalizations of the basic model of Section~\ref{sec:basicmodel}, maintaining the assumption that all bidders are identical: they have the same value $v = 1$ per click and do not have any private information.\footnote{In Section~\ref{sec:ext-heterog}, we study variations of the basic model in which bidders are heterogeneous.}

We allow a general number of bidders and slots and a general distribution of click probabilities. For this general case, we show that there exists at most one symmetric pure-strategy equilibrium of the Generalized First-Price Auction, and moreover, we show that a candidate for this equilibrium (or more precisely, for the equilibrium bid of each advertiser) is characterized by a linear equation. Thus, while we cannot guarantee existence for all combinations of numbers of bidders and slots and for all distributions of click-through rates (and as we show in subsections below, a symmetric equilibrium may indeed fail to exist in some cases), for any specific setting, our general result provides an approach for checking whether such an equilibrium exists. First, the solution of the linear equation is the unique candidate for the symmetric equilibrium bid $b^*$. Next, it is straightforward to check whether for a particular advertiser, if everyone else bids $b^*$, it is a best response to also bid~$b^*$ (e.g., Figure~\ref{checkingcandidate} in Subsection~\ref{lessheterog} below illustrates this ``candidate checking'' approach for a particular setting of that subsection). If it is a best response, then $b^*$ is the unique symmetric pure-strategy equilibrium; if it is not, then such an equilibrium does not exist. After stating and proving the general result, we explore a number of special cases in more detail. 

Formally, suppose there are $n\ge 2$ bidders, each with value $v=1$ per click, and $n$ slots, with visibilities $1 = \alpha_1 \ge \alpha_2 \ge \dots \ge \alpha_n \ge 0$ , with at least one inequality strict (several of the slots can have visibility zero, allowing for the possibility that there are fewer positive-visibility slots than bidders). The click-through rate of each bidder $i$, $q_i$, is drawn independently from distribution $F$ on  $[\underline{q}, \overline{q}]$ that has a probability density function $f$ that is continuous everywhere except perhaps at a finite number of points and $\int q^2f(q)^2dq<\infty$.\footnote{The lower bound $\underline{q}$ is non-negative. The upper bound $\overline{q}$ can be equal to $+\infty$, in which case we assume that distribution $F$ has a finite expectation.} 

\begin{theorem}
The Generalized First-Price Auction in the above setting has at most one symmetric pure-strategy equilibrium $b^*$.
\end{theorem}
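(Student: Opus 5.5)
The plan is to mirror the proof of Theorem~\ref{Theorem1}. We fix bidder 1, suppose all $n-1$ opponents bid some common $b>0$, and show that the first-order condition at $b_1=b$ is a linear equation in $b$ with a unique solution. This solution is the only possible symmetric equilibrium bid.

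\textbf{Step 1: write bidder 1's payoff.} Conditional on $q_1=q$, bidder 1 lands in slot $k$ when exactly $k-1$ opponents have $b q_j > b_1 q$, that is, $q_j > (b_1/b)q$. Let $r=b_1/b$ and $G(x)=1-F(x)$. The probability of slot $k$ is then
\[
\binom{n-1}{k-1}\, G(rq)^{k-1}\, F(rq)^{n-k}.
\]
Hence
\[
EU_1(b_1,b)=(1-b_1)\,A(r), \qquad A(r)=\int q \sum_{k=1}^n \alpha_k \binom{n-1}{k-1} G(rq)^{k-1}F(rq)^{n-k} f(q)\,dq .
\]
Here $A(r)$ is the expected number of clicks. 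The key point is that the bids enter the allocation only through the ratio $r$.

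\textbf{Step 2: differentiate at $r=1$.} We differentiate in $b_1$, using $dr/db_1 = 1/b$:
\[
\frac{\partial EU_1}{\partial b_1}\Big|_{b_1=b} = -A(1) + \frac{1-b}{b}\,A'(1).
\]
Differentiating under the integral gives
\[
A'(1)=\int q^2 f(q)^2 \sum_k \alpha_k \binom{n-1}{k-1}\Big[(n-k)G^{k-1}F^{n-k-1}-(k-1)G^{k-2}F^{n-k}\Big]\,dq .
\]
This is finite by the assumption $\int q^2 f^2<\infty$, together with boundedness of the bracket. The quantity $A(1)$ is finite because $F$ has a finite mean. Both constants are independent of $b$.

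\textbf{Step 3: solve the FOC.} Setting the derivative to zero gives $b\,(A(1)+A'(1)) = A'(1)$, which is linear in $b$. We need $A'(1)>0$. After an Abel summation in $k$, the bracket becomes a combination of the nonnegative differences $(\alpha_k-\alpha_{k+1})$ times positive binomial-type densities, and at least one of these differences is strictly positive. So increasing $r$ strictly raises expected clicks, and $A'(1)>0$. It follows that $A(1)+A'(1)>0$, and the equation has the unique solution
\[
b^*=\frac{A'(1)}{A(1)+A'(1)}.
\]

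\textbf{Step 4: rule out the remaining candidates.} A symmetric profile with $b=0$ cannot be an equilibrium. In that case any small positive bid wins the top slot whenever the opponents bid $0$ (ties at score $0$ aside), which is a discrete gain in clicks at an arbitrarily small cost. Bids $b\ge 1$ give nonpositive payoff, whereas bidding $0$ gives positive expected clicks at zero price, so $b\ge 1$ is not an equilibrium either. Any interior equilibrium must satisfy the FOC, and this gives uniqueness.

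The main obstacle is Step 2: justifying differentiability of $A$ at $r=1$ by dominated convergence. This is needed when $f$ is discontinuous at finitely many points and when $\overline q=\infty$, and it is where the integrability condition on $q^2 f^2$ enters. The first thing I would try is to write $A(r)$ as a function of $F(rq)$ and apply the chain rule almost everywhere. At discontinuity points of $f$ the derivative involves $f(rq)$, and only $r=1$ matters. A fallback is to use one-sided derivatives, which must satisfy the one-sided optimality inequalities; these coincide at $r=1$ because $q f(rq)$ is integrated against $f(q)$ and the finitely many discontinuities form a null set.
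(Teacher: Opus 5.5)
Your proposal follows the paper's proof essentially step for step. You reduce clicks to the ratio function $A(r)$, observe that the first-order condition at $r=1$ is linear in $b$ with $b$-independent coefficients $A(1)>0$ and $0<A'(1)<\infty$, and treat the boundary profiles separately; your Abel-summation argument supplies the justification of $A'(1)>0$ that the paper only asserts.

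There is one slip in Step 4. The setting allows $\alpha_n=0$ (e.g.\ two positive-visibility slots and $n>2$ bidders). In that case a zero bid earns zero clicks, so it does not strictly beat the zero payoff at $b=1$. Rule out $b=1$ as the paper does, by deviating to $1-\epsilon$, which yields $\epsilon\,A(1-\epsilon)>0$. Equivalently, your FOC analysis already covers every $b>0$, since $b_1=b$ is interior to the bid space, and at $b=1$ the derivative equals $-A(1)<0$.
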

\begin{proof}
Consider bidder $i$, and suppose all the other bidders bid $b^*$.  Note that in any symmetric equilibrium, $b^*\in(0,1)$. Bids $b^*>1$ would imply negative payoffs. If $b^*=1$, then bidding slightly below $1$ would be a profitable deviation. If $b^*=0$, then bidding slightly above zero would be a profitable deviation.

So consider any $b^*\in(0,1)$, and let bidder $i$ submit some bid $b_i$. The expected number of clicks that $i$ receives depends only on the ratio $r:=\frac{b_i}{b^*}$, because if both $b_i$ and $b^*$ are multiplied by the same positive number, the expected number of clicks that $i$ receives remains unchanged. Denote by $C\left(\frac{b_i}{b^*}\right)$ the expected number of clicks that $i$ receives. 

The expected number of clicks of bidder $i$ as a function of the bid ratio is:
\[
  C(r)=\int_{\underline{q}}^{\overline{q}} q\,\Psi\big(F(rq)\big)f(q)\,dq,
  \qquad
  \Psi(s):=\sum_{j=1}^{n}\alpha_{j}\binom{n-1}{j-1}s^{\,n-j}(1-s)^{\,j-1},
\]
where $\Psi(s)$ is the expected visibility of a bidder who outranks each of his opponents
independently with probability $s$.\footnote{We use the conventions that $0^{0}=1$ (for the terms with $j=n$
and $F(\cdot)=0$, and with $j=1$ and $1-F(\cdot)=0$), and that $F(\cdot)=0$ below the support and
$F(\cdot)=1$ above it.} 
Differentiating,
\[
  C'(r)=\int_{\underline{q}}^{\overline{q}} q^{2}\,\Psi'\big(F(rq)\big)f(rq)f(q)\,dq,
\]
which is finite and continuous in $r.$ So, $C$ is continuously differentiable on $(0,\infty)$ and  
non-decreasing. Moreover, because $\Psi'\ge0$, and at
least one of the inequalities $\alpha_{1}\ge\cdots\ge\alpha_{n}$ is strict, $0<C'(1)<\infty$.\footnote{Together with $C(1)>0$, this implies that the candidate bid derived below satisfies $b^*\in (0,1)$. Finally, $C(r)>0$ for
every $r>\rho:=\underline{q}/\overline{q}$. Moreover, $C(r)>0$ for all of $r\in(0,\infty)$ whenever $\alpha_{n}>0$ or
$\underline{q}=0$, which is the case in every specific setting considered below.
}

The expected payoff of bidder $i$ is equal to $$(1-b_i)C\left(\frac{b_i}{b^*}\right).$$ Thus, the first-order condition for the optimal best response is given by 
$$-C\left(\frac{b_i}{b^*}\right) + \frac{1-b_i}{b^*} C'\left(\frac{b_i}{b^*}\right) = 0.$$
In a symmetric equilibrium, when $b_i = b^*$, this expression simplifies to 
$$-C\left(1\right) + \frac{1-b^*}{b^*} C'\left(1\right) = 0,$$ which is equivalent to the linear equation $-b^*C(1) + (1-b^*)C'(1)=0$, giving us a formula for the unique candidate bid profile:
$$b^* = \frac{C'(1)}{C'(1)+C(1)}.$$

\end{proof}

If all opponents bid $b^*$, we can express bidder $i$'s strategy as choosing $r \in [0, 1/b^*]$ and bidding $b_i = rb^*$ (bidding more than the top of this range is dominated since the value of a click is $1$; bidding less than 0 is not allowed). With this notation, the best response of bidder $i$ is to maximize
$$ U(r) = (1-rb^*)C(r).$$ 

To determine whether the pure-strategy equilibrium exists, it suffices to check whether $U(r)$ is globally maximized at $r=1.$ 

We can provide the following sufficient conditions:

\begin{lemma}[Single-crossing sufficient condition]
\label{lemma1SC}
Suppose $C(r)>0$ on $(0,1/b^*]$, and define
\[
M(r):=\frac{C'(r)}{C(r)+rC'(r)}.
\]
If $M(r)$ is weakly decreasing on $(0,1/b^*]$, then $r=1$ is a global maximizer of $U(r)$ on
$[0,1/b^*]$ and $b^*$ is a symmetric pure-strategy equilibrium of the GFP.

\end{lemma}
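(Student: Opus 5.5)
The plan is to study the sign of $U'(r)$ on $(0,1/b^*]$ and show that it changes sign at most once, from positive to negative, exactly at $r=1$. Differentiating $U(r)=(1-rb^*)C(r)$ gives
\[
U'(r) = C'(r) - b^*\bigl(C(r)+rC'(r)\bigr).
\]
Since $C(r)>0$ and $C'(r)\ge 0$ on $(0,1/b^*]$, the quantity $C(r)+rC'(r)$ is strictly positive there. We can therefore factor
\[
U'(r) = \bigl(C(r)+rC'(r)\bigr)\bigl(M(r)-b^*\bigr),
\]
so that the sign of $U'(r)$ coincides with the sign of $M(r)-b^*$.

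Next, evaluate $M$ at $r=1$. By the candidate formula derived in the proof of the preceding theorem, $b^*=C'(1)/(C'(1)+C(1))$, which is exactly $M(1)$. Because $M$ is weakly decreasing, $M(r)\ge M(1)=b^*$ for $r\in(0,1]$ and $M(r)\le b^*$ for $r\in[1,1/b^*]$. Hence $U'\ge 0$ on $(0,1]$ and $U'\le 0$ on $[1,1/b^*]$.

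Since $C$ is continuously differentiable on $(0,\infty)$ (established in the preceding proof), $U$ is $C^1$ on $(0,1/b^*]$. Integrating the sign information, $U$ is weakly increasing on $(0,1]$ and weakly decreasing on $[1,1/b^*]$. Therefore $U(1)\ge U(r)$ for all $r\in(0,1/b^*]$.

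The remaining step, and the one I expect to need the most care, is the endpoint $r=0$, where $C$ may not be differentiable. There I would use continuity of $C$ at $0$ from the right: $C$ is nondecreasing and bounded below by $0$, and dominated convergence in the integral representation gives $C(0)=\lim_{r\downarrow 0}C(r)$. This yields $U(0)=\lim_{r\downarrow0}U(r)\le U(1)$. Bids above $1/b^*$ correspond to bids above $1$, which are dominated and give nonpositive payoff, while $U(1)>0$. So $r=1$, that is $b_i=b^*$, is a global best response when all opponents bid $b^*$, and $b^*$ is a symmetric pure-strategy equilibrium.
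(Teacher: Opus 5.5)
Your proposal is correct and follows the paper's proof essentially step for step. It uses the same factorization $U'(r)=(C(r)+rC'(r))(M(r)-b^*)$, the identification $b^*=M(1)$, the resulting sign pattern of $U'$ on either side of $r=1$, and continuity of $U$ at $r=0$. Your explicit observations that $C(r)+rC'(r)>0$ (from $C>0$ and $C'\ge 0$) and that right-continuity of $C$ at $0$ follows by dominated convergence fill in details the paper leaves implicit.
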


\begin{proof}
Differentiate \(U(r)=(1-rb^*)C(r)\):
\[
U'(r)=-b^*C(r)+(1-rb^*)C'(r).
\]
Rearranging,
\[
U'(r)
=
\bigl(C(r)+rC'(r)\bigr)
\left(
\frac{C'(r)}{C(r)+rC'(r)}-b^*
\right).
\]
Thus
\[
U'(r)=\bigl(C(r)+rC'(r)\bigr)\bigl(M(r)-b^*\bigr).
\]
Because
\[
b^*=\frac{C'(1)}{C(1)+C'(1)}=M(1),
\]
the stated single-crossing condition implies
\[
U'(r)\ge0\quad\text{for }r<1,
\qquad
U'(r)\le0\quad\text{for }r>1.
\]

Hence $U(r)\le U(1)$ for every $r\in(0,1/b^*]$, and since $U$ is continuous at $r=0$, $r=1$ maximizes $U$ on $[0,1/b^*]$.

\end{proof}

\begin{lemma}[Convexity of $1/C(r)$ sufficient condition]
\label{Lemma2}
Suppose $C(r)>0$ and $1/C(r)$ is convex on $r\in(0,1/b^*]$. Then $r=1$ globally maximizes $U(r)$ on
$[0,1/b^*]$, and hence $b^*$ is a symmetric pure-strategy equilibrium of the GFP. Moreover,
convexity of $1/C(r)$ is implied by log-concavity of $C(r)$.
\end{lemma}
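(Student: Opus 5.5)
The plan is to reduce Lemma~\ref{Lemma2} to Lemma~\ref{lemma1SC}, or more directly to a tangent-line argument for the convex function $g(r):=1/C(r)$. The payoff $U(r)=(1-rb^*)C(r)$ is nonnegative on $[0,1/b^*]$ and $C>0$. So $U(r)\le U(1)$ is equivalent to
\[
(1-rb^*)\,g(1)\le (1-b^*)\,g(r), \qquad r\in(0,1/b^*].
\]
I will therefore show that the affine function $\ell(r):=\frac{(1-rb^*)}{1-b^*}g(1)$ lies below $g$ on $(0,1/b^*]$.

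The key step is to identify $\ell$ as the tangent line to $g$ at $r=1$. Clearly $\ell(1)=g(1)$. For the slope, $\ell'(1)=-\frac{b^*}{1-b^*}g(1)$, while $g'(1)=-C'(1)/C(1)^2$. The candidate formula $b^*=C'(1)/(C'(1)+C(1))$ from the preceding theorem gives $\frac{b^*}{1-b^*}=C'(1)/C(1)$, so $\ell'(1)=-C'(1)/C(1)^2=g'(1)$. Here $C$ is continuously differentiable with $C(1)>0$, so $g$ is differentiable at $1$. A convex differentiable function lies above its tangent lines, hence $g(r)\ge\ell(r)$ for all $r\in(0,1/b^*]$. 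Multiplying by $(1-b^*)C(r)C(1)>0$ gives $U(r)\le U(1)$ on $(0,1/b^*]$. The endpoint $r=0$ follows by continuity of $U$ at $0$, as in the proof of Lemma~\ref{lemma1SC}. Alternatively, $U(0)=C(0)\le\liminf$ of the values already bounded, using that $C$ is nondecreasing and continuous.

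For the final claim, I will write $g=e^{-\log C}$. If $C$ is log-concave, then $\phi=-\log C$ is convex. Since $x\mapsto e^x$ is convex and nondecreasing, the composition $e^{\phi}$ is convex. This step is routine.

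The only delicate point is the sign bookkeeping. In the inequality manipulation the factor $1-rb^*$ may vanish at $r=1/b^*$, but it is never negative on the domain, and $1-b^*>0$ because $b^*\in(0,1)$. I expect no real obstacle beyond checking that the tangent slope matches the first-order condition exactly; that match is what makes $r=1$ the global maximizer.
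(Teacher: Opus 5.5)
Your proof is correct and is essentially the paper's argument. Both use the tangent line to the convex function $1/C$ at $r=1$, with the slope matching through $b^*/(1-b^*)=C'(1)/C(1)$; both handle $r=0$ by continuity and derive the log-concavity claim from $1/C=e^{-\log C}$. Your rearrangement into $\ell(r)\le g(r)$ multiplies only by positive factors, so it covers the endpoint $r=1/b^*$ directly, whereas the paper treats that endpoint as a separate case.
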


\begin{proof}

By convexity of $1/C(r)$, the graph of $1/C(r)$ lies above its tangent at $r=1$:
\[
\frac{1}{C(r)} \ge \frac{1}{C(1)} -\frac{C'(1)}{C(1)^2}(r-1)
=
\frac{C(1)+(1-r)C'(1)}{C(1)^2}.
\]
By definition of $b^*$, for $r\in[0,1/b^*)$,
\[ 
C(1)+(1-r)C'(1)>0,
\]
so
\[
C(r)
\le
\frac{C(1)^2}{C(1)+(1-r)C'(1)}.
\]
Using again the definition of $b^*$, we have
\[
1-rb^*
=
\frac{C(1)+(1-r)C'(1)}{C(1)+C'(1)}.
\]
Therefore,
\[
U(r)
=
(1-rb^*)C(r)
\le
\frac{C(1)^2}{C(1)+C'(1)}.
\]
Since 
\[
U(1) = (1-b^*)C(1)  = \frac{C(1)^2}{C(1)+C'(1)},
\]
we get 
\[
U(r)\le U(1)
\]
for every $r\in(0,1/b^*)$. At the boundary $r=1/b^*$, we have $U(r)=0\le U(1)$ (and since $U$ is continuous we also get $U(0) \le U(1)$). Thus $r=1$ is a global maximizer of $U(r)$.

Finally, if $C(r)$ is log-concave then $-\log C(r)$ is convex, and since $x\mapsto e^{x}$ is convex
and increasing, $1/C(r)=e^{-\log C(r)}$ is convex as well.
\end{proof}

In the rest of this section, we explore a variety of special cases of the above setting. In Subsection~\ref{subsec:n2k2}, we consider the case of $2$ bidders and $2$ slots. In Subsection~\ref{subsec:nge3k2}, we consider the case of two slots with positive visibilities and more than $2$ bidders.  Finally, in Subsection~\ref{subsec:kge3}, we consider the case of $n\ge 3$ slots and bidders. 

\subsection{Two Slots, Two Bidders}
\label{subsec:n2k2}
Assume there are two bidders and two slots. Normalize the visibilities to be $1$ and $\alpha$. 

In this case, $C(r)$ simplifies to:
$$ C(r) = \int_{\underline{q}}^{\overline{q}} q \left[F(rq) + \alpha(1-F(rq))\right] f(q) dq = \alpha \mathbb{E}[q] + (1-\alpha) \mathbb{E}[qF(rq)]. $$

This allows us to establish existence of pure strategy equilibria in GFP for a wide range of distributions:
\begin{corollary}
\label{C1}
In the case of two slots and two bidders, if the distribution of $q$ has support starting at $\underline q =0 $ and a weakly decreasing density, then the GFP has a unique symmetric pure strategy equilibrium. 
\end{corollary}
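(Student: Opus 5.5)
The plan is to verify the hypothesis of Lemma~\ref{Lemma2}, which requires $C(r)>0$ and $1/C(r)$ convex, by showing that $C$ is in fact concave on $(0,\infty)$. Uniqueness then comes from Theorem 3, and existence from Lemma~\ref{Lemma2}. In the two-bidder, two-slot case we have the representation
\[
C(r)=\alpha\,\E[q]+(1-\alpha)\,\E[qF(rq)],
\]
so concavity of $C$ reduces to concavity of $r\mapsto \E[qF(rq)]$.

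\textbf{Step 1: $F$ is concave on $[0,\infty)$.} Since $\underline q=0$, we can write $F(x)=\int_0^x f(t)\,dt$ for $x\ge 0$, where we set $f\equiv 0$ above $\overline q$. The density is weakly decreasing on $[0,\overline q]$ and then drops to $0\le f(\overline q^-)$, so the extended $f$ is weakly decreasing on all of $[0,\infty)$. The integral of a nonincreasing function is concave, so $F$ is concave on $[0,\infty)$. This holds even though $f$ may have finitely many discontinuities and $\overline q$ may be infinite, because the argument uses only monotonicity of the integrand and never differentiates $f$. This step, making sure the kink at $\overline q$ and the jumps in $f$ do not break concavity, is the only place where care is needed, and it is where I expect the main obstacle. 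The point $\underline q=0$ is essential here: if $\underline q>0$, then $F$ would be flat at $0$ on $[0,\underline q]$ and then increase, destroying concavity.

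\textbf{Step 2: $C$ is concave.} For each fixed $q\ge 0$, the map $r\mapsto F(rq)$ is the composition of a concave function with a linear map, so it is concave in $r\ge 0$. Multiplying by $q\ge 0$ and averaging over $q$ preserves concavity, so $\E[qF(rq)]$ is concave in $r$. Since $1-\alpha>0$, $C$ is concave on $(0,\infty)$.

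\textbf{Step 3: positivity and conclusion.} We have $C(r)\ge \alpha\E[q]>0$ when $\alpha>0$. More generally, since $\underline q=0$, the footnote to Theorem 3 gives $C(r)>0$ for all $r>0$. A positive concave function is log-concave, because $\log$ is concave and increasing. By the last part of Lemma~\ref{Lemma2}, $1/C$ is therefore convex on $(0,1/b^*]$, where $b^*=C'(1)/(C'(1)+C(1))\in(0,1)$ is the candidate from Theorem 3. Lemma~\ref{Lemma2} then shows that $r=1$ is a global maximizer of $U$, so $b^*$ is a symmetric pure-strategy equilibrium. Theorem 3 shows that it is the unique one.
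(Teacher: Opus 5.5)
Your proof is correct and follows essentially the same route as the paper. Both show that $C$ is concave, conclude that $1/C$ is convex because $C>0$, invoke Lemma~\ref{Lemma2} for existence, and rely on Theorem 3 for uniqueness. The only difference is cosmetic: the paper shows that $C'(r)=(1-\alpha)\int q^2 f(rq)f(q)\,dq$ is weakly decreasing in $r$, whereas you get concavity directly from the concavity of $F$ composed with $r\mapsto rq$. That is the integrated form of the same monotonicity-of-$f$ argument, and it avoids differentiating under the integral sign.
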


\begin{proof}
Extend \(f\) by setting \(f(x)=0\) for \(x>\overline q\). Then almost everywhere:
\[
        C'(r)=(1-\alpha) J(r),
        \qquad
        \text{where }  J(r):=\int_0^{\bar q} q^2 f(rq)f(q)\,dq \geq 0.
\]
Because \(f\) is weakly decreasing, for every \(s>r\) and every \(q\geq 0\), $f(sq)\leq f(rq)$. 
Hence
\[
        J(s)\leq J(r).
\]
Thus $C'(r)$ is weakly decreasing. Since $C(r)>0$, this implies that $1/C(r)$ is convex and by Lemma \ref{Lemma2} the unique candidate $b^*$ is an equilibrium of the GFP.

\end{proof}

Examples of such distributions include the uniform distribution, an exponential distribution, a truncated normal distribution $N(\mu,\sigma^2)$ with truncation at zero and any $\mu \leq 0$, a Beta$(a,b)$ distribution with any $0<a \leq1$ and $b\geq 1$, a Gamma$(k, \theta)$ distribution for any $k\in(0,1]$, a Weibull$(k,\lambda)$ distribution for any $k\in(0,1]$. The Lomax (i.e., Pareto Type II) distribution $F(q)=1-(1+\frac{q}{\lambda})^{-\theta}$ for $\lambda>0$ and $\theta >2$ so that the mean and variance are finite, also satisfies the conditions.

\paragraph{Revenue Comparisons.}
In the two-bidder case, we can say more about revenue comparisons. Define
\[
L=\E[\min\{q_1,q_2\}],\qquad H=\E[\max\{q_1,q_2\}],
\]
and
\begin{equation}
\label{eq:D-general}
D=\int q^2f(q)^2\dd q.
\end{equation}
This allows us to express the expected number of clicks as:
\begin{equation}
\label{eq:C1-Cp1}
C(1)=\frac{H+\alpha L}{2},\qquad C'(1)=(1-\alpha)D,
\end{equation}
and the (candidate) equilibrium bid in the GFP as:
\begin{equation}
\label{eq:b-star-HLD}
b^*=\frac{(1-\alpha)D}{\frac{H+\alpha L}{2}+(1-\alpha)D}.
\end{equation}

In the GFP both bidders pay $b^*$ per click and the total expected number of clicks is $H+\alpha L$, so the expected GFP revenue is:
\begin{equation}
\label{eq:Rev-GFP}
REV_{GFP}=b^*(H+\alpha L)
=\frac{(1-\alpha)D(H+\alpha L)}{\frac{H+\alpha L}{2}+(1-\alpha)D}.
\end{equation}

In the GSP, the equilibrium bid is $1-\alpha$ and the winner pays the minimum per-click bid needed to tie the loser's score (and the loser pays nothing). Therefore, the expected GSP revenue is:
\begin{equation}
\label{eq:Rev-GSP}
REV_{GSP}=(1-\alpha)L.
\end{equation}

The revenue comparison is
\begin{equation}
\label{eq:comparison-ineq}
REV_{GFP} > REV_{GSP}
\quad\Longleftrightarrow\quad
D(H-L+2\alpha L)>\frac{L(H+\alpha L)}{2}.
\end{equation}
Define the cutoff 
\begin{equation}
\label{eq:alpha_threshold_gen}
\widehat\alpha
=\frac{\frac{LH}{2}-D(H-L)}{2DL-\frac{L^2}{2}}.
\end{equation}

Then
\begin{equation}
\label{eq:threshold-rule}
REV_{GFP}>REV_{GSP}
\quad\Longleftrightarrow\quad
\alpha>\widehat\alpha.
\end{equation}
If $\widehat\alpha\le0$, GFP dominates for every $\alpha\in(0,1)$.\footnote{It can be shown that the denominator in $\widehat \alpha$ is positive and hence the sign depends only on the sign of the numerator.} 
If $\widehat\alpha\ge1$, GSP dominates for every $\alpha\in(0,1)$. Therefore:

\begin{corollary}
In the case of two slots and two bidders, assuming existence of a pure strategy symmetric equilibrium in the GFP, if expected revenue in the GFP is higher than in the GSP for some $\alpha$, then it is also higher for all larger values of $\alpha$.
\end{corollary}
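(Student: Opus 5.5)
The plan is to show that the sign of $REV_{GFP}-REV_{GSP}$, viewed as a function of $\alpha$, changes at most once on $(0,1)$, and only from negative to positive. The key observation is that $L$, $H$ and $D$ depend only on the distribution $F$ and not on $\alpha$. So for fixed $F$, both sides of the equivalence (\ref{eq:comparison-ineq}) are affine in $\alpha$.

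First I will re-derive (\ref{eq:comparison-ineq}) from (\ref{eq:Rev-GFP}) and (\ref{eq:Rev-GSP}). Since $1-\alpha>0$, and since the denominator $\frac{H+\alpha L}{2}+(1-\alpha)D$ is positive, the inequality $REV_{GFP}>REV_{GSP}$ is equivalent to
\[
D(H+\alpha L) > L\left(\tfrac{H+\alpha L}{2}+(1-\alpha)D\right).
\]
This rearranges to
\[
\alpha\left(2DL-\tfrac{L^2}{2}\right) > \tfrac{LH}{2}-D(H-L).
\]
Define $g(\alpha):=\alpha\left(2DL-\frac{L^2}{2}\right)-\frac{LH}{2}+D(H-L)$. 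Then GFP yields strictly higher revenue exactly when $g(\alpha)>0$. Since $g$ is affine in $\alpha$, the claim follows once its slope $2DL-L^2/2$ is shown to be positive. In that case $\{\alpha: g(\alpha)>0\}$ is an upward-closed interval: if $g(\alpha_0)>0$, then $g(\alpha)>0$ for all $\alpha>\alpha_0$. This is the threshold rule (\ref{eq:threshold-rule}).

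The main step is proving $4D>L$, i.e.\ $4\int q^2 f(q)^2\,dq>\E[\min\{q_1,q_2\}]$ (the case $L=0$ is degenerate, since then $q=0$ a.s.). I would write
\[
L=\int_0^\infty (1-F(t))^2\,dt=2\int q\,(1-F(q))\,f(q)\,dq,
\]
where the second form follows from integration by parts, using the density $2(1-F)f$ of the minimum. Then Cauchy--Schwarz gives
\[
\frac{L}{2}=\int q f(q)\,(1-F(q))\,dq\le \sqrt{D}\,\sqrt{\int (1-F(q))^2dq}=\sqrt{D}\sqrt{L}.
\]
Hence $L\le 4D$. Equality in Cauchy--Schwarz would require $qf(q)\propto 1-F(q)$ on the whole half-line. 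I plan to rule this out by checking integrability: such an $f$ makes $1-F$ behave like $q^{-c}$, which cannot give both a finite $L$ and a finite $D$. If some equality edge case survives this check, then $g$ is constant, and the conclusion still holds trivially because the sign of $g$ is then the same for every $\alpha$.

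Pinning down strictness, or handling the zero-slope case as just described, is the main obstacle, together with justifying the boundary terms in the integration by parts when $\overline q=\infty$. For those boundary terms I would use the finite expectation of $F$. The remainder of the argument is pure algebra. Existence of the GFP equilibrium is assumed in the statement, so the formula (\ref{eq:b-star-HLD}) for the equilibrium bid applies throughout.
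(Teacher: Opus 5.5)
Your proof is correct and follows the paper's route: the comparison (\ref{eq:comparison-ineq}) is affine in $\alpha$ with slope $2DL-L^2/2$, which gives the threshold rule (\ref{eq:threshold-rule}). The one difference is that the paper only asserts in a footnote that this slope is positive, whereas your Cauchy--Schwarz bound $L/2=\int qf(q)(1-F(q))\,dq\le\sqrt{D}\sqrt{L}$ actually proves $L\le 4D$. As you note, the weak inequality already suffices for the corollary, and this matters. Your integrability check for strictness does not by itself exclude $1-F\propto q^{-\lambda}$ on a bounded $(0,\overline q)$ with $\lambda<1/2$. Equality is really ruled out by the boundary conditions $1-F(0)=1$ and $1-F(\overline q)=0$, or, if $\underline q>0$, by $qf=0$ while $1-F=1$ on $[0,\underline q)$.
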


To finish this section we consider a few specific distributions and discuss existence beyond the conditions of Corollary \ref{C1}.

\subsubsection{Exponential and Power Distributions}
For concrete examples, consider the exponential and power distributions. 
For the exponential distribution, $F(q)=1-e^{-\lambda q},$ we can compute $$ C(r) = \frac{1}{\lambda} \left(1 - \frac{1-\alpha}{(1+r)^2}\right).$$
That implies that the equilibrium bid in the GFP is
$$b^*=\frac{1-\alpha}{4}.$$
Moreover, the order statistics for this distribution are $H=\frac{3}{2\lambda}$, $L=\frac{1}{2\lambda}$. The expected revenues in the two auction formats are:

$$REV_{GFP}=b^*(H+\alpha L)=\frac{(1-\alpha)(3+\alpha)}{8\lambda},$$

$$REV_{GSP}=(1-\alpha)L=\frac{(1-\alpha)}{2\lambda}.$$

This shows that in the case of \emph{exponential distributions}, for all $\lambda$ and $\alpha$ the GSP yields a higher expected revenue. 

Now consider the power distributions, $F(q)=q^\theta$, for $q\in[0,1]$ and $\theta \in (0,1]$.
Then
\[
\E[q]=\frac{\theta}{\theta+1}, \qquad H=\frac{2\theta}{2\theta+1}, \qquad  L=\frac{2\theta^2}{(\theta+1)(2\theta+1)}.
\]

Also,
\[
D=\int_0^1q^2\theta^2q^{2\theta-2}\dd q
=\frac{\theta^2}{2\theta+1}.
\]
Substituting these expressions into (\ref{eq:alpha_threshold_gen}) yields $\widehat\alpha=0$. Hence, in the case of \emph{power distributions}, for all $\theta \in (0,1]$ and all $\alpha \in (0,1)$ the GFP yields higher expected revenue.

\subsubsection{Uniform Distribution with Less Heterogeneity in Click-Through Rates}
\label{lessheterog}
To finish the discussion of the two-bidder case, we return to the uniform case, but allow for a smaller range of heterogeneity in click-through rates. In our baseline model we have assumed that $q_i$'s are drawn from a uniform distribution on $[0,1]$. That implies that the range of possible ratios of $\frac{q_1}{q_2}$ is $[0,\infty)$. What if the range of possible ratios is less extreme? 

To this end, assume that the probabilities of clicks are proportional to $q_i$ which are drawn from a uniform distribution on $[u,u+1]$, for some parameter $u \geq 0$. When $u=0$ we have as a special case our previous model and as $u$ gets larger, there is less and less ex-ante uncertainty about the ratio of the CTRs.\footnote{Note that we can normalize the CTRs by dividing them by $u+1$, so that the model is equivalent to assuming that the distribution of the CTRs is uniform $[\frac{u}{u+1}, 1]$. As $u\rightarrow \infty$, the uncertainty about the CTRs disappears.}  

\paragraph{Equilibrium Existence.}

Let $\mu=\mathbb E[q]=u+\frac12$. Substituting the uniform distribution into our formula for $C(r)$ we get
\[
C(r)=
\begin{cases}
\alpha\mu,
& 0\le r\le \dfrac{u}{u+1},\\[1.1em]
\alpha\mu+(1-\alpha)
\dfrac{-3u(u+1)^2+2(u+1)^3r+u^3r^{-2}}{6},
& \dfrac{u}{u+1}\le r\le 1,\\[1.1em]
\alpha\mu+(1-\alpha)
\dfrac{3u^3+3(u+1)^2-2u^3r-(u+1)^3r^{-2}}{6},
& 1\le r\le \dfrac{u+1}{u},\\[1.1em]
\mu,
& r\ge \dfrac{u+1}{u}\quad (u>0).
\end{cases}
\]

Moreover, we can compute 
\[
L=\mathbb E[\min\{q_1,q_2\}]=u+\frac13,
\qquad
H=\mathbb E[\max\{q_1,q_2\}]=u+\frac23,
\]
and
\[
D=\int_u^{u+1}q^2f(q)^2\,dq
  =\int_u^{u+1}q^2\,dq
  =u^2+u+\frac13 .
\]
Therefore,
\[
C(1)=\frac{H+\alpha L}{2}
     =\frac{u+\frac23+\alpha\left(u+\frac13\right)}{2},
\]
and
\[
C'(1)=(1-\alpha)D
      =(1-\alpha)\left(u^2+u+\frac13\right).
\]

The candidate symmetric equilibrium bid is therefore
\[
b^{*}
=
\frac{(1-\alpha)\left(u^2+u+\frac13\right)}
{
\frac{u+\frac23+\alpha\left(u+\frac13\right)}{2}
+(1-\alpha)\left(u^2+u+\frac13\right)
}
=
(1-\alpha)
\frac{6u(u+1)+2}
{3u(1+2u)(1-\alpha)+6u+4-\alpha}.
\]
Recall 
\[
U(r)=(1-rb^*)C(r).
\]
By construction, \(U'(1)=0\).  
We consider all possible deviations. 

A deviation to a zero bid yields
\[
U(0)=\alpha\mu=\alpha\left(u+\frac12\right).
\]
At the candidate equilibrium,
\[
U(1)
=
(1-b^*)C(1)
=
\frac{C(1)^2}{C(1)+C'(1)}.
\]
Hence the zero-bid deviation is not profitable iff
\[
U(1)\ge U(0),
\]
or equivalently,
\[
C(1)^2-\alpha\left(u+\frac12\right)\bigl(C(1)+C'(1)\bigr)\ge 0.
\]
Substituting \(C(1)=(H+\alpha L)/2\) and \(C'(1)=(1-\alpha)D\), this becomes
\[
\left(\frac{H+\alpha L}{2}\right)^2
-
\alpha\left(u+\frac12\right)
\left[
\frac{H+\alpha L}{2}+(1-\alpha)D
\right]
\ge 0.
\]
Using the values of \(H,L,D\) for this distribution, the left-hand side simplifies to
\[
\frac{(1-\alpha)
\left[
9u^2+12u+4
-\alpha(36u^3+45u^2+21u+4)
\right]}{36}.
\]
Thus the zero-bid deviation is not profitable iff
\[
\alpha\le \alpha^*
\equiv
\frac{9u^2+12u+4}
{36u^3+45u^2+21u+4}.
\]

For \(u>0\), the other endpoint deviation is to bid enough to win for sure,
i.e., \(r_+=(u+1)/u\).  Since \(C(r_+)=\mu\), direct substitution gives
\[
U(1)-U(r_+)
=
\frac{(1-\alpha)
\left[
27u^3+45u^2+28u+6
-\alpha(9u^3+6u^2+u)
\right]}
{6u\left[6u^2+9u+4-\alpha(6u^2+3u+1)\right]}
\ge 0.
\]
Therefore, the endpoint \(r_+\) is never a profitable deviation.

The piecewise expression for \(C(r)\) also implies that on the two interior regions,
\[
\frac{r^4}{1-\alpha}U''(r)
=
\begin{cases}
u^3-\dfrac{b^{*}u^3}{3}r
-\dfrac{2b^{*}(u+1)^3}{3}r^4,
& \dfrac{u}{u+1}<r<1,\\[1.2em]
\dfrac{2b^{*}u^3}{3}r^4
+\dfrac{b^{*}(u+1)^3}{3}r
-(u+1)^3,
& 1<r<\dfrac{u+1}{u}.
\end{cases}
\]
The first expression is decreasing in \(r\), while the second is increasing in \(r\).  Hence $U'(r)$ cannot cross zero from above in this region and so it is sufficient to check deviations at the endpoints. In particular, any
profitable deviation below \(r=1\) would imply that the zero-bid deviation is profitable as well. Analogously, any
profitable deviation above \(r=1\) would imply that the endpoint \(r_+\) is profitable.  Consequently,
the candidate bid is a global best response if and only if the deviation to zero bid is not profitable, i.e., $
\alpha\le \alpha^*$.

\paragraph{Revenue comparisons.}
Using the revenue formulas described above, we get for the uniform distribution:
\[
REV_{GFP}
=
b^{*}
\left[
u+\frac23+\alpha\left(u+\frac13\right)
\right],
\]
\[
REV_{GSP}
=
(1-\alpha)\left(u+\frac13\right).
\]

Substituting further and simplifying we get that within the existence region
\(\alpha\le \alpha^*\),
\[
REV_{GFP}
>
REV_{GSP}
\quad\Longleftrightarrow\quad
\alpha>\widehat\alpha,
\]
where
\[
\widehat\alpha
=
\frac{u(u+1)}
{(3u+1)(4u^2+3u+1)}.
\]
Moreover, for every \(u>0\),
\[
\alpha^*-\widehat\alpha
=
\frac{
4(2u+1)(3u^2+3u+1)^2
}
{
(3u+1)(4u^2+3u+1)(36u^3+45u^2+21u+4)
}
>0,
\]
so the interval \((\widehat\alpha,\alpha^*]\) is nonempty.  For \(u=0\),
\[
\widehat\alpha=0,
\qquad
\alpha^*=1,
\]
recovering the baseline result that GFP revenue exceeds GSP revenue for all
\(\alpha\in(0,1)\).

We summarize our findings for this distribution as:
\begin{corollary}
Suppose the click probabilities are proportional to $q_{i}$ drawn from the uniform distribution on
$[u,u+1]$. Then the generalized first-price auction has a symmetric pure-strategy equilibrium if and
only if
\[
\alpha\le\alpha^{*}\equiv\frac{9u^{2}+12u+4}{36u^{3}+45u^{2}+21u+4},
\]
in which case the equilibrium is unique. 
Moreover, whenever the equilibrium exists, $REV_{GFP}>REV_{GSP}$ if and only if
$\alpha>\widehat\alpha=\frac{u(u+1)}{(3u+1)(4u^{2}+3u+1)}$, and the interval
$\left(\widehat\alpha,\alpha^{*}\right]$ is nonempty for every $u\ge0$.
\end{corollary}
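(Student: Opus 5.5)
The corollary collects the preceding computations for the uniform distribution on $[u,u+1]$, and I will organize the proof as four claims. Uniqueness, if an equilibrium exists, is Theorem 3. The general formula there gives the only candidate,
\[
b^*=\frac{C'(1)}{C(1)+C'(1)},
\]
with $C(1)=(H+\alpha L)/2$ and $C'(1)=(1-\alpha)D$, where $H=u+\tfrac23$, $L=u+\tfrac13$ and $D=u^2+u+\tfrac13$. So it only remains to decide when $r=1$ maximizes
\[
U(r)=(1-rb^*)C(r)
\]
on $[0,1/b^*]$.

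\textbf{Existence region.} I first use the piecewise form of $C$ to split the domain into the regions $[0,\tfrac{u}{u+1}]$, $[\tfrac{u}{u+1},1]$, $[1,\tfrac{u+1}{u}]$ and $[\tfrac{u+1}{u},1/b^*]$.
\begin{itemize}
\item On the two outer regions $C$ is constant, so $U$ is decreasing in $r$. The only candidates there are therefore $r=0$ and $r_+=(u+1)/u$.
\item On the two interior regions I use the expression for $r^4U''(r)/(1-\alpha)$. It is decreasing on the left interior region and increasing on the right one.
\item Hence $U'$ can vanish at most once on each side with the wrong sign pattern, which rules out an interior local maximum other than $r=1$. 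Combined with $U'(1)=0$, any profitable deviation below $1$ forces $U(0)>U(1)$, and any profitable deviation above $1$ forces $U(r_+)>U(1)$.
\end{itemize}

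\textbf{Endpoint checks.} The condition $U(1)\ge U(0)$ reduces to the factored polynomial inequality, which is exactly $\alpha\le\alpha^*$. The difference $U(1)-U(r_+)$ has a numerator that is positive because
\[
27u^3+45u^2+28u+6>9u^3+6u^2+u\ge\alpha(9u^3+6u^2+u).
\]
Its denominator is positive because
\[
6u^2+9u+4>6u^2+3u+1.
\]
Together these give existence if and only if $\alpha\le\alpha^*$. I expect the main obstacle to be this single-crossing argument on the interior regions, specifically checking the monotonicity claims for the $U''$ expressions given the sign of $b^*$ and the range of $r$. One must also confirm that the derived sign pattern really excludes a second interior critical point that is a maximum.

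\textbf{Revenue comparison.} I start from the general equivalence
\[
REV_{GFP}>REV_{GSP}\iff D(H-L+2\alpha L)>\frac{L(H+\alpha L)}{2}.
\]
This is linear in $\alpha$ with positive coefficient $2DL-L^2/2$, since $D\ge L^2$ here. Substituting $H$, $L$ and $D$ into the cutoff formula $\widehat\alpha$ and factoring yields $u(u+1)/((3u+1)(4u^2+3u+1))$.

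\textbf{Nonempty interval.} Finally I form $\alpha^*-\widehat\alpha$ over the common denominator. The numerator should factor as $4(2u+1)(3u^2+3u+1)^2$, which is positive; I will verify this by expanding both sides. The case $u=0$ is immediate: there $\widehat\alpha=0$ and $\alpha^*=1$, which recovers Theorem 1 and Theorem 2.
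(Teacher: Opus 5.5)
Your proposal is correct and follows the paper's own argument essentially step for step. The shared steps are: the unique candidate from Theorem 3; the endpoint comparisons $U(1)\ge U(0)\iff\alpha\le\alpha^*$ and $U(1)\ge U(r_+)$ for all $\alpha$; monotonicity of $r^4U''(r)/(1-\alpha)$ on the two interior regions to exclude any other interior maximum; the general cutoff formula for $\widehat\alpha$; and the factorization of $\alpha^*-\widehat\alpha$. One small slip: positivity of $2DL-L^2/2$ requires $D>L/4$ (here $2D-L/2=2u^2+\tfrac32u+\tfrac12>0$), and $D\ge L^2$ gives this only together with $L\ge\tfrac13$.
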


\begin{figure}
\centering
\begin{tabular}{cc}
\subfloat[$u = 0$]{\includegraphics[width = 2.5in]{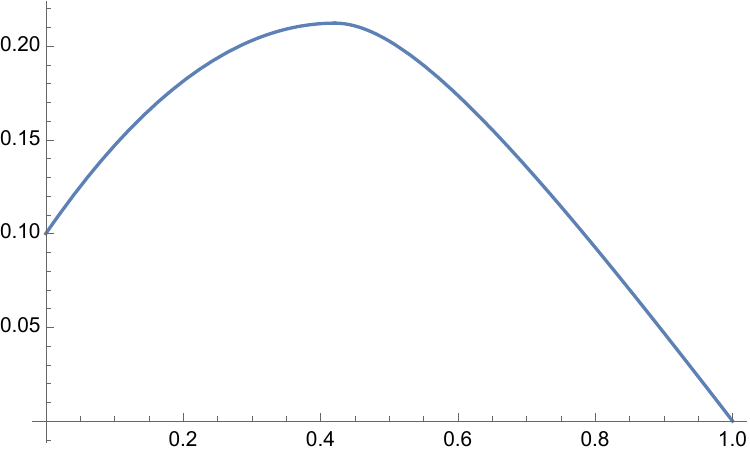}} &
\subfloat[$u = 0.5$]{\includegraphics[width = 2.5in]{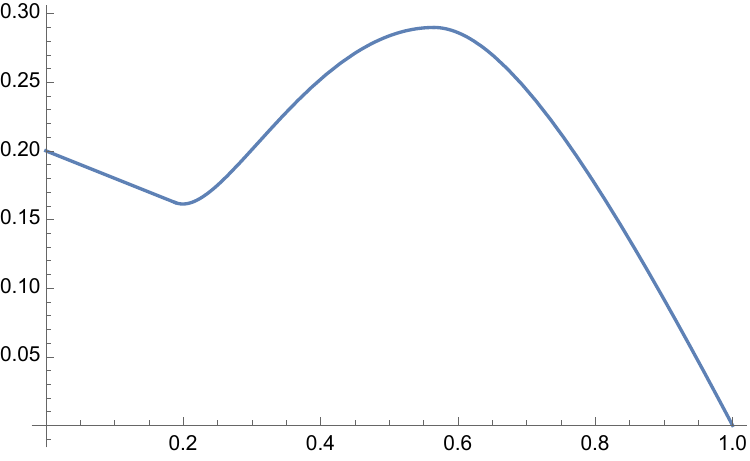}}\\[0.3in]
\subfloat[$u = 1$]{\includegraphics[width = 2.5in]{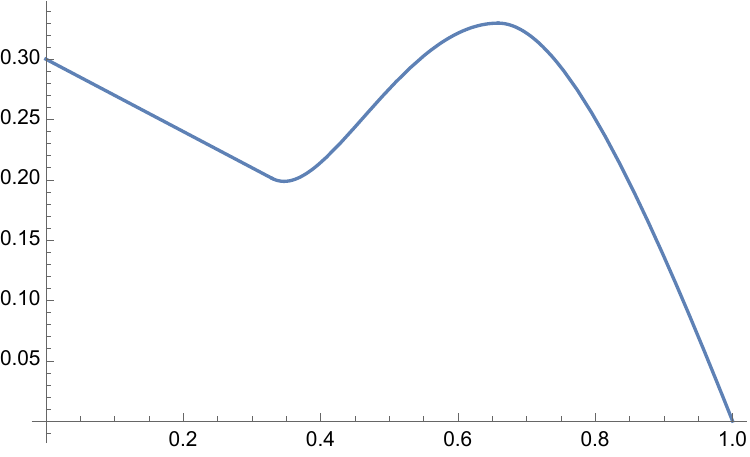}} &
\subfloat[$u=2$]{\includegraphics[width = 2.5in]{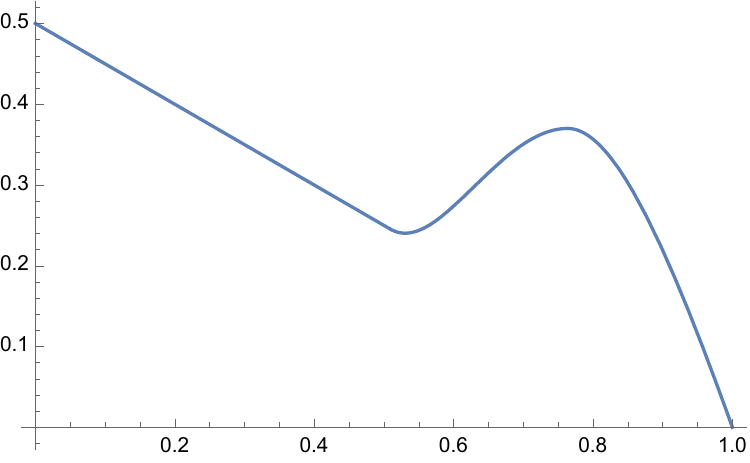}} \\[0.3in]
\end{tabular}
\caption{Expected payoffs from responding to a candidate equilibrium bid}
\label{checkingcandidate}
\end{figure}

To get the intuition for the reasons behind pure-strategy equilibrium existence (or non-existence) for different levels of uncertainty in click-through rates, in Figure~\ref{checkingcandidate} we present graphs of expected payoffs of bidder 1 as a function of her own bid (varying from 0 to 1). We set $\alpha = 0.2$ and vary the level of uncertainty from the maximum possible in our model ($u=0$) to lower ones ($u=2$, corresponding to the case in which the largest click-through rate is $1.5$ times the smallest, and two values in between, $u=0.5$ and $u=1$). In each panel, we assume that bidder 1's opponent, bidder 2, bids according to the ``candidate'' equilibrium bid $b^*$ that satisfies the first-order condition, and thus to check whether the ``candidate'' is indeed an equilibrium, we need to verify that the local maximum that corresponds to it in the expected payoff graph is in fact a global one. For the case of maximum uncertainty, $u=0$, the graph is simple: it is concave, and the global optimality (and thus equilibrium existence) is immediate. Once the amount of uncertainty is reduced, however, the picture becomes more subtle. Below a certain level (specifically, below $b^*\frac{u}{u+1}$), it does not make sense for bidder 1 to bid any positive amount: she is guaranteed to get the second slot (because even if she gets the highest possible score and her opponent gets the lowest one, she would still not be able to outbid him for the top slot), and so any positive bid amount simply reduces her payoff from winning that second slot. So the real test of global optimality is whether for bidder 1, the expected payoff from bidding $b^*$ (and thus trying to win the top slot with probability 50\%) is higher than the payoff from bidding zero and settling for the second slot at no cost. For intermediate values of uncertainty in click-through rates ($u=0.5$ and $u=1$) that is indeed the case, and so both bidders bidding $b^*$ is a pure-strategy equilibrium. However, once this uncertainty is reduced further ($u=2$), bidder 1 is better off bidding zero instead of bidding $b^*$, and a pure-strategy equilibrium no longer exists. Of course, in the limit, as $u \rightarrow +\infty$, we converge back to the deterministic case of no uncertainty in click-through rates \citep{EO2007}, in which a pure-strategy equilibrium likewise does not exist.

\subsection{Two Slots, More than Two Bidders}
\label{subsec:nge3k2}
In this subsection, assume that there are two slots with positive visibilities $\alpha_1=1$, $\alpha_2=\alpha$, and $n>2$ bidders.  

With $2$ slots (with positive visibilities) and more than $2$ bidders, the expression for $C(r)$ changes because a bidder can now receive the top slot, the second slot, or no slot at all. Moreover, a deviation to a zero bid now yields a zero payoff and is therefore never profitable. If all bidders' $q_i$'s were identical and commonly known, the equilibrium would simply be for all bidders to bid $1$ and make no profits. With uncertain $q_i$'s, however, bidding $b=1$ is not an equilibrium, and the existence of a symmetric pure-strategy equilibrium is not guaranteed---in fact, even if the distribution satisfies the conditions of Corollary~\ref{C1}, existence is not guaranteed.

For revenue comparisons with more than two bidders, we note that it is easier to compare expected revenues in the GFP with those in a VCG auction than with those in the GSP (since the equilibrium in the VCG is easier to characterize than the equilibrium in the GSP) and so we present these comparisons.

\subsubsection{Power Distributions}
Consider the distributions 
\[
F(q)=q^\theta,
\qquad q\in[0,1],
\qquad \theta>0.
\]
Note that we now allow $\theta>1$ so we are no longer constraining the family to have decreasing density.

Let $m:=n-1$ be the number of competitors to bidder $i$. Conditional on own quality \(q\), bidder \(i\)'s probability of receiving the top slot if she bids $r$ is
\[
F(rq)^m,
\]
and her probability of receiving the second slot is
\[
m(1-F(rq))F(rq)^{m-1}.
\]
Thus
\[
C(r)
=
\int_0^1
q\left[
F(rq)^m+\alpha m(1-F(rq))F(rq)^{m-1}
\right]f(q)\,dq.
\]

\begin{proposition}[Existence for the power distributions]
If the click-through rates are distributed according to
\[
F(q)=q^\theta,\qquad q\in[0,1],\qquad \theta>0,
\]
then the function \(M(r)\) defined in Lemma~\ref{lemma1SC} is weakly decreasing.  Hence the GFP with two slots has a symmetric pure-strategy equilibrium for every \(n>2\) and every \(\alpha\in(0,1)\). 
\label{Prop1}
\end{proposition}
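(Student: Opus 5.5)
We verify the hypothesis of Lemma~\ref{lemma1SC} directly, using an equivalent form of its condition. Since $C>0$ and $C'>0$ on $(0,\infty)$ (the support starts at $\underline q=0$, so the footnote to Theorem~3 gives $C(r)>0$ for all $r>0$), we have
\[
\frac{1}{M(r)}=\frac{C(r)}{C'(r)}+r .
\]
Hence $M$ is weakly decreasing iff $\frac{d}{dr}\bigl(C/C'\bigr)+1\ge0$. Where $C$ is twice differentiable, this reads $2-CC''/C'^2\ge 0$, i.e.
\[
C(r)\,C''(r)\le 2\,C'(r)^2 .
\]
This is exactly convexity of $1/C$, so the proposition could equally be routed through Lemma~\ref{Lemma2}. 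The plan is to write $C$ in closed form on the two natural regions $r\le1$ and $r\ge1$, verify this inequality on each region, and then handle the junction at $r=1$.

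\emph{Region $r\le 1$.} Here $rq\le1$, so $F(rq)=r^\theta q^\theta$. Put $s=r^\theta$ and integrate the monomials in $q$. This gives
\[
C=s^{m-1}(B+As),\qquad A=\frac{\theta(1-\alpha m)}{\theta(m+1)+1},\qquad B=\frac{\alpha m\theta}{\theta m+1}>0 .
\]
Since $r\,dC/dr=\theta s\,dC/ds$, the ratio $g:=rC'/C=\theta\bigl[(m-1)+As/(B+As)\bigr]$ is explicit, and $M=\frac{1}{r}\cdot\frac{g}{1+g}$. I would show $\frac{d}{dr}\log M\le 0$, i.e.
\[
r\,g'(r)\le g(1+g),
\]
with $rg'=\theta^2ABs/(B+As)^2$. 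When $A\le0$ (i.e.\ $\alpha m\ge1$) the left side is $\le0$ while $g>0$ wherever $C'>0$, so the inequality is immediate. When $A>0$, set $x=As/(B+As)\in(0,1)$. Then $rg'=\theta^2x(1-x)$ and $g=\theta(m-1+x)$, and the required inequality $\theta^2x(1-x)\le\theta(m-1+x)\bigl(1+\theta(m-1+x)\bigr)$ holds because $\theta^2x(1-x)\le\theta x\cdot\theta x\cdot\frac{1-x}{x}$ is dominated once we use $m\ge2$. I would check this elementary one-variable inequality carefully.

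\emph{Region $r\ge1$.} Now $F(rq)=1$ for $q\ge1/r$. Split the integral at $q=1/r$. On $[0,1/r]$ the integrand is the same as before. On $[1/r,1]$ the bidder takes the top slot, contributing $\int_{1/r}^1 q\,\theta q^{\theta-1}dq$. The result is again a short sum of powers of $t=1/r$:
\[
C(r)=\frac{\theta}{\theta+1}\Bigl(1-t^{\theta+1}\Bigr)+t^{\theta+1}\cdot(\text{linear combination with coefficients as above}).
\]
I then verify $CC''\le2C'^2$, again reducing to a one-variable inequality in $t\in(0,1]$. I expect this to be the main obstacle, because $C$ now combines a constant with two or three powers of $t$ whose coefficients have $\alpha$- and $m$-dependent signs. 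My first attempt would be to write $C'=t^{\theta+2}P(t^\theta)$ with $P$ a low-degree polynomial, and then show $\frac{d}{dr}(C/C')\ge -1$ by clearing denominators and grouping the terms by sign, using $\alpha\in(0,1)$ and $m\ge2$. If that becomes messy, the fallback is to prove the weaker single-crossing fact directly: $U'(r)=(C+rC')(M-b^*)$ has no zero on $(1,1/b^*]$. That only needs $M(r)\le M(1)$ for $r>1$, which can be checked from the explicit formula.

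\emph{Junction.} $C$ is $C^1$ at $r=1$ by the general differentiability result in the proof of Theorem~3, so $C/C'$ is continuous there. Monotonicity of $1/M$ on each side therefore yields monotonicity on all of $(0,1/b^*]$. Lemma~\ref{lemma1SC} then shows that $r=1$ is a global best response, so $b^*$ is a symmetric pure-strategy equilibrium, and uniqueness follows from Theorem~3.
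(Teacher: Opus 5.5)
Your route matches the paper's. You derive closed forms for $C$ on $r\le 1$ and on $r\ge 1$, sign $d\log M/d\log r$ on each piece, glue at $r=1$ using continuity of $C'$, and then apply Lemma~\ref{lemma1SC}.

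On $r\le1$ your computation is the paper's with $A$ and $B$ interchanged, and the normalization $x=As/(B+As)$ is tidier than the paper's term-dropping bound. However, your one-line justification is circular, since $\theta x\cdot\theta x\cdot\frac{1-x}{x}$ is just $\theta^2x(1-x)$ again. The actual reason is that for $m\ge2$,
\[
\theta(m-1+x)\bigl(1+\theta(m-1+x)\bigr)\ \ge\ \theta^2(m-1+x)^2\ >\ \theta^2\ >\ \theta^2x(1-x).
\]
Your observation that, given $C,C'>0$, monotonicity of $M$ is equivalent to $CC''\le 2C'^2$, i.e.\ to convexity of $1/C$, is correct and worth keeping. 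It shows that the hypotheses of Lemmas~\ref{lemma1SC} and~\ref{Lemma2} coincide wherever $C'>0$, which the paper does not point out. One correction: positivity of $C'$ on all of $(0,\infty)$ is not what the footnote to Theorem~3 says. It does follow, though, from $\Psi'(s)=ms^{m-2}\bigl[(1-\alpha m)s+\alpha(m-1)\bigr]>0$ on $(0,1]$.

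The gap is the region $r\ge1$, which you leave unproved and misjudge as the main obstacle. You only need to finish your own formula. On $[0,1/r]$ the integrand is $\theta q^{\theta}\Psi\bigl((rq)^\theta\bigr)$, so the substitution $u=rq$ turns that piece into exactly $r^{-(\theta+1)}C(1)$. In other words, your ``linear combination'' is the constant $A+B=C(1)$, and only one power of $t$ appears. Hence
\[
C(r)=\mu-Kr^{-(\theta+1)},\qquad \mu=\frac{\theta}{\theta+1},\quad K=\mu-C(1)>0,
\]
so $C'>0>C''$, and $CC''<0\le 2C'^2$ follows immediately. This is precisely how the paper treats $r\ge1$ in the appendix proof of Proposition~\ref{Prop_more_than_2_slots}. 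In the proof of this proposition it instead writes $M(r)=\frac{(\theta+1)Kr^{-(\theta+2)}}{\mu+\theta Kr^{-(\theta+1)}}$ and obtains $\frac{d\log M}{d\log r}<-(\theta+2)+(\theta+1)<0$.

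With this step filled in, your argument is complete. The polynomial bookkeeping is then unnecessary, and so is the fallback, which would give only $M(r)\le M(1)$ rather than the monotonicity the proposition asserts.
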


The proof is in the Appendix. It derives $M(r)$ explicitly and shows that it is decreasing, and then applies Lemma \ref{lemma1SC} to establish existence.

We can also calculate the equilibrium bid explicitly. We get (using the formulas derived in the proof of Proposition \ref{Prop1}):
\[
C(1)
=
\frac{
\theta\left(1+(n-1)(1+\alpha)\theta\right)
}{
\left((n-1)\theta+1\right)(n\theta+1)
},
\]
and
\[
C'(1)
=
\frac{
(n-1)\theta^2\left((n-1)\theta+1-\alpha(\theta+1)\right)
}{
\left((n-1)\theta+1\right)(n\theta+1)
},
\]
therefore
\[
b^*
=
\frac{
(n-1)\theta\left((n-1)\theta+1-\alpha(\theta+1)\right)
}{
\left((n-1)\theta+1\right)^2-(n-1)\alpha\theta^2
}.
\]

In the special case of the uniform distribution, $\theta=1$, the equilibrium bid simplifies to
    $$b^*=\frac{n-2\alpha}{\frac{n^2}{n-1}-\alpha}.$$

\paragraph{Revenue Comparisons.}
Note that with three or more bidders, the equilibrium of the Generalized \emph{Second-}Price Auction (GSP) is not immediate in our setting. However, we can compare the resulting revenue of GFP to the expected revenue in the Vickrey-Clarke-Groves (VCG) auction. Let
\[
q_{(1)}\ge q_{(2)}\ge q_{(3)}
\]
denote the first, second, and third highest order statistics.

\paragraph{GFP revenue.}
In the symmetric GFP equilibrium, all bidders use the same bid \(b^*\), so the allocation is efficient.  Total expected clicks equal
\[
n C(1).
\]
Therefore, the expected GFP revenue is
\[
REV_{GFP}
=
b^* n C(1)
=
n\frac{C(1)C'(1)}{C(1)+C'(1)}.
\]
Substituting the formulas above,
\[
REV_{GFP}
=
\frac{
n(n-1)\theta^2
\left(1+(n-1)(1+\alpha)\theta\right)
\left((n-1)\theta+1-\alpha(\theta+1)\right)
}{
\left((n-1)\theta+1\right)(n\theta+1)
\left(\left((n-1)\theta+1\right)^2-(n-1)\alpha\theta^2\right)
}.
\]

\paragraph{VCG revenue.}
  With two slots, the expected VCG revenue is
\[
REV_{VCG}
=
(1-\alpha)\mathbb E[q_{(2)}]
+
2\alpha \mathbb E[q_{(3)}].
\]
For the power distribution family,
\[
\mathbb E[q_{(2)}]
=
\frac{
n(n-1)\theta^2
}{
(n\theta+1)((n-1)\theta+1)
},
\]
and
\[
\mathbb E[q_{(3)}]
=
\frac{
n(n-1)(n-2)\theta^3
}{
(n\theta+1)((n-1)\theta+1)((n-2)\theta+1)
}.
\]
Hence
\[
REV_{VCG}
=
(1-\alpha)
\frac{
n(n-1)\theta^2
}{
(n\theta+1)((n-1)\theta+1)
}
+
2\alpha
\frac{
n(n-1)(n-2)\theta^3
}{
(n\theta+1)((n-1)\theta+1)((n-2)\theta+1)
}.
\]
Equivalently,
\[
REV_{VCG}
=
\frac{
n(n-1)\theta^2
\left((n-2)\theta+1+\alpha((n-2)\theta-1)\right)
}{
(n\theta+1)((n-1)\theta+1)((n-2)\theta+1)
}.
\]

\paragraph{Revenue ranking.}
Subtracting the expressions for the VCG revenue from the GFP revenue yields
\[
REV_{GFP}-REV_{VCG}
=
\frac{
n(n-1)\alpha\theta^3\left(1-(n-1)\alpha\right)
}{
\left((n-2)\theta+1\right)
\left((n-1)\theta+1\right)
\left(\left((n-1)\theta+1\right)^2-(n-1)\alpha\theta^2\right)
}.
\]
It is straightforward to verify that the denominator is strictly positive.  Thus the sign of the difference in expected revenues is determined entirely by the sign of $\left(1-(n-1)\alpha\right)$. Therefore,
\begin{proposition}
In the case of 2 slots, $n>2$ bidders and the power distributions of $q_i$'s, the expected revenue from the GFP is higher than the expected revenue from the VCG if and only if $\alpha<\frac{1}{n-1}$.   
\end{proposition}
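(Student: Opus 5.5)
The proposition follows from the displayed closed form for $REV_{GFP}-REV_{VCG}$, so the plan is to (i) verify that formula and (ii) show its denominator is strictly positive. I will put both revenue expressions over a common denominator and check that the numerator collapses to $n(n-1)\alpha\theta^3(1-(n-1)\alpha)$ times the stated factors. By Proposition~\ref{Prop1} the symmetric equilibrium $b^*$ exists for every $n>2$ and $\alpha\in(0,1)$, so $REV_{GFP}=nC(1)C'(1)/(C(1)+C'(1))$ is legitimate.

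\textbf{Inputs.} I take $C(1)$ and $C'(1)$ from the expressions above. These follow from
\[
C(1)=\E\!\left[q\left(F(q)^m+\alpha m(1-F(q))F(q)^{m-1}\right)\right],
\qquad
C'(1)=\int_0^1 q^2\,\Psi'(F(q))\,f(q)^2\,dq,
\]
with $\Psi(s)=s^m+\alpha m s^{m-1}(1-s)$. For the power family each term is an integral $\int_0^1 q^{k\theta}\,dq$, so both are elementary. For VCG I use
\[
\E[q_{(2)}]=n(n-1)\int_0^1 q\,F^{n-2}(1-F)f\,dq,
\qquad
\E[q_{(3)}]=\binom{n}{3}\cdot 3\int_0^1 q\,F^{n-3}(1-F)^2 f\,dq,
\]
which give the stated beta-type formulas. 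The VCG revenue formula itself is standard: the top winner pays the externality $(1-\alpha)q_{(2)}+\alpha q_{(3)}$ and the second winner pays $\alpha q_{(3)}$.

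\textbf{Algebra.} Write $a=(n-1)\theta+1$, $c=(n-2)\theta+1$, and $N=n\theta+1=a+\theta$. Then
\[
C(1)=\frac{\theta\bigl(1+(n-1)(1+\alpha)\theta\bigr)}{aN},
\qquad
C'(1)=\frac{(n-1)\theta^2\bigl(a-\alpha(\theta+1)\bigr)}{aN}.
\]
After cancelling $aN$, the sum $C(1)+C'(1)$ should simplify to $\theta\bigl(a^2-(n-1)\alpha\theta^2\bigr)/(aN)$. This is the identity behind the displayed $b^*$, and I will check it directly.

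From there,
\[
REV_{GFP}=\frac{n(n-1)\theta^2\bigl(1+(n-1)(1+\alpha)\theta\bigr)\bigl(a-\alpha(\theta+1)\bigr)}{aN\bigl(a^2-(n-1)\alpha\theta^2\bigr)}.
\]
Subtracting $REV_{VCG}$ over the common denominator $aNc\bigl(a^2-(n-1)\alpha\theta^2\bigr)$ leaves the numerator factor
\[
n(n-1)\theta^2\Bigl[\bigl(1+(n-1)(1+\alpha)\theta\bigr)\bigl(a-\alpha(\theta+1)\bigr)c-\bigl(c+\alpha((n-2)\theta-1)\bigr)\bigl(a^2-(n-1)\alpha\theta^2\bigr)\Bigr].
\]
The main obstacle is showing this bracket equals $N\alpha\theta\bigl(1-(n-1)\alpha\bigr)$, which cancels $N$ and yields the displayed difference. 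I will expand the bracket as a polynomial in $\alpha$.
\begin{itemize}
\item The $\alpha^0$ terms, $(1+(n-1)\theta)ac-ca^2$, vanish.
\item The $\alpha^2$ coefficient should be $-(n-1)\theta N$.
\item The $\alpha^1$ coefficient should be $\theta N$.
\end{itemize}
To keep this manageable I will substitute $a=c+\theta$ and $N=c+2\theta$ and expand in $c$ and $\theta$. As a sanity check, I will test the result at $n=3$, $\theta=1$ numerically.

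\textbf{Sign.} Each factor of the denominator is positive:
\begin{itemize}
\item $c,a>0$ because $n\ge3$ and $\theta>0$.
\item $a^2-(n-1)\alpha\theta^2>(n-1)^2\theta^2-(n-1)\theta^2\ge0$, using $a>(n-1)\theta$ and $\alpha<1$.
\end{itemize}
The numerator prefactor $n(n-1)\alpha\theta^3$ is positive. Hence $REV_{GFP}>REV_{VCG}$ if and only if $1-(n-1)\alpha>0$, that is, $\alpha<1/(n-1)$. At $\alpha=1/(n-1)$ the difference is exactly zero, so the ``only if'' direction holds with strict inequality as stated.
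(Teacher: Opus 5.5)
Your proposal is correct and follows the same route as the paper. Like the paper, you compute $REV_{GFP}=nC(1)C'(1)/(C(1)+C'(1))$ and $REV_{VCG}=(1-\alpha)\E[q_{(2)}]+2\alpha\E[q_{(3)}]$ in closed form, reduce the difference to $n(n-1)\alpha\theta^3(1-(n-1)\alpha)$ over a strictly positive denominator, and read off the sign. Your expansion of the bracket in powers of $\alpha$, with coefficients $0$, $\theta N$ and $-(n-1)\theta N$, and your bound $a^2-(n-1)\alpha\theta^2>(n-1)(n-2)\theta^2\ge 0$ both check out; they fill in the steps the paper calls straightforward.
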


Thus, the revenue ranking is independent of the power parameter of the distribution. The parameter \(\theta\) affects the levels of revenue, but not the sign of the revenue difference.

\subsubsection{Exponential Distributions}
Continuing with two slots and more than two bidders, we now consider the case of exponential distribution of click-through rates. 

\begin{proposition}[Existence for the exponential distributions]
\label{Prop_large_n_expon}
If the click-through rates are distributed according to the  exponential distribution  $F(q)=1-e^{-\lambda q},$
then the GFP has a symmetric pure-strategy equilibrium for every \(n>2\), every $\lambda>0$, and every \(\alpha\in(0,1)\).
\end{proposition}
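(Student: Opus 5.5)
Normalize $\lambda=1$: $C(r)$ depends only on ratios of $q$'s and scales as $1/\lambda$, so $b^*$ and the shape of $U(r)$ do not depend on $\lambda$. With $m=n-1$ and $s=F(rq)=1-e^{-rq}$,
\[
C(r)=\int_0^\infty q\left[s^m+\alpha m(1-s)s^{m-1}\right]e^{-q}\,dq .
\]
I will compute this in closed form. Expand $s^k=\sum_{j=0}^k\binom{k}{j}(-1)^je^{-jrq}$ and use $\int_0^\infty q e^{-(1+jr)q}\,dq=(1+jr)^{-2}$. This gives
\[
C(r)=\sum_{j=0}^{m} c_j\,(1+jr)^{-2},
\]
with explicit alternating coefficients $c_j$ depending on $m$ and $\alpha$. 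The $n=2$ formula $C(r)=1-(1-\alpha)(1+r)^{-2}$ is the check on this computation. It is cleaner, though, to write $C(r)=(1-\alpha)A_m(r)+\alpha m A_{m-1}(r)-\alpha(m-1)A_m(r)$, where $A_k(r):=\int_0^\infty q(1-e^{-rq})^ke^{-q}\,dq$. Substituting $t=e^{-q}$ turns $A_k$ into $\int_0^1(-\ln t)(1-t^r)^k\,dt$, which is a Beta-type integral. Differentiating in the exponent of the Beta function gives
\[
A_k(r)=\frac{1}{r}\,B\!\left(\tfrac1r,k+1\right)\left[\psi\!\left(\tfrac1r+k+1\right)-\psi\!\left(\tfrac1r\right)\right]=\frac1r B\!\left(\tfrac1r,k+1\right)\sum_{i=0}^{k}\frac{1}{1/r+i}.
\]
This product form is what I will use for monotonicity arguments.

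Since zero bids now give zero payoff, $C(r)>0$ for $r>0$. The natural route is Lemma~\ref{lemma1SC}: show that $M(r)=C'(r)/(C(r)+rC'(r))$ is weakly decreasing, exactly as the Appendix does for Proposition~\ref{Prop1}. Equivalently, I would show $1/M(r)=r+C(r)/C'(r)$ is increasing, i.e.\ $\frac{d}{dr}\big(C/C'\big)\ge -1$, which reads
\[
C(r)C''(r)\le 2C'(r)^2 ,
\]
i.e.\ $1/C$ convex, the condition of Lemma~\ref{Lemma2}. So both lemmas reduce to the same inequality. If that global inequality fails for some $(m,\alpha)$, I fall back on the weaker single-crossing requirement: show $U'(r)$ changes sign only once, by showing that $M(r)-b^*$ has a unique zero.

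To get the inequality I change variables to $x=1/r$, because in $x$ each $A_k$ is a rational function times Beta terms: $B(x,k+1)=k!/\prod_{i=0}^k(x+i)$. So
\[
A_k=\frac{k!\,x}{\prod_{i=0}^k(x+i)}\sum_{i=0}^k\frac1{x+i}.
\]
With this, $C$ is an explicit rational function of $x$, and the target inequality becomes positivity of a polynomial in $x$ and $\alpha$ that is linear or quadratic in $\alpha$. Linearity in $\alpha$ would let me check only $\alpha=0$ and $\alpha=1$; if it is quadratic, I would instead verify nonnegativity of the discriminant-free form or check the vertex. At $\alpha=0$ the payoff is just top-slot clicks. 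At $\alpha=1$ the bidder gets $1$ unless ranked last, so the payoff is $\mathbb E[q]$ minus the term for being ranked last.

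The main obstacle is proving this polynomial inequality for every $m\ge2$, not just numerically for small $m$. For that I would use the partial-fraction structure: write $C'/C$ and $C''/C'$ as sums of terms $\frac{1}{x+i}$ and compare termwise, or exploit log-concavity of $x\mapsto 1/\prod_i(x+i)$ together with the fact that a sum of log-concave functions of this shared form is again log-concave. Having established monotonicity of $M$, Lemma~\ref{lemma1SC} gives that the candidate bid $b^*=C'(1)/(C(1)+C'(1))$ is a symmetric equilibrium for all $n>2$, $\lambda$, and $\alpha$.
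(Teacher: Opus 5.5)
Your reduction is correct. Since $C'>0$ here, $(1/M)'=(2C'^2-CC'')/C'^2$, so the hypotheses of Lemma~\ref{lemma1SC} and Lemma~\ref{Lemma2} coincide, and both amount to $CC''\le 2C'^2$. But that inequality, for every $m\ge 2$, every $\alpha\in(0,1)$ and every $r>0$, is the entire content of the proposition, and you do not prove it. You reduce it to positivity of a polynomial whose degree grows with $m$, and then offer only heuristics, which do not work as stated:
\begin{itemize}
\item The function $x\mapsto 1/\prod_{i=0}^k(x+i)$ is log-\emph{convex}, not log-concave: the second derivative of its log is $\sum_i (x+i)^{-2}>0$.
\item $C$ itself is not log-concave in $x=1/r$. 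As $r\to 0$, $C(r)\approx \alpha m\, m!\, r^{m-1}=\alpha m\, m!\, x^{-(m-1)}$, which is log-convex in $x$.
\item Sums of log-concave functions are not log-concave in general. Here $C=(1-\alpha m)A_m+\alpha m A_{m-1}$ is not even a sum, since the first coefficient is negative once $\alpha>1/m$.
\item $C$ is affine in $\alpha$, so $2C'^2-CC''$ is genuinely quadratic in $\alpha$. The endpoint shortcut is therefore unavailable, and the ``check the vertex'' step is left undone.
\item The single-crossing fallback is likewise unsupported.
\end{itemize}
There are also two smaller slips. First, your Beta/digamma formula is missing a factor $1/r$: for $k=0$ it returns $r$ instead of $A_0=1$. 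The correct form is $A_k=\frac{k!\,x^2}{\prod_{i=0}^k(x+i)}\sum_{i=0}^k\frac{1}{x+i}$, and your own $n=2$ check would have caught this. Second, at $\alpha=1$ a bidder gets visibility $1$ if and only if he is ranked in the top two; this equals ``not last'' only when $n=3$.

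The missing idea is to avoid closed forms and work with the integrand, uniformly in $m$. Write $C(r)=\int_0^\infty qe^{-q}G(rq)\,dq$ with $G(x)=(1-e^{-x})^m+\alpha m e^{-x}(1-e^{-x})^{m-1}$. The paper shows that $(r,q)\mapsto G(rq)$ is jointly log-concave. Let $g=\log G$ and $x=rq$. The Hessian of $g(rq)$ has diagonal entries $q^2g''$ and $r^2g''$, and determinant $-g'(g'+2xg'')$. So it suffices to check two one-variable inequalities: $g''\le 0$ and $g'+2xg''\le 0$. Set $y=e^{-x}$ and $c=\alpha m-1$, so that $G=(1-y)^{m-1}(1+cy)$. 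The two inequalities then become low-degree polynomial inequalities in $y$, in which $m$ enters only as a parameter. The paper handles them by splitting on the sign of $c$ and using $-\log y\ge 1-y$. Since $qe^{-q}$ is log-concave, Pr\'{e}kopa's theorem makes $C$ log-concave in $r$. This gives $CC''\le C'^2$, which is stronger than what you need, and Lemma~\ref{Lemma2} finishes the argument.
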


The proof is in the Appendix and it proceeds by showing that in the case of exponential distributions, $C(r)$ is log-concave, and hence by Lemma \ref{Lemma2}, the unique candidate $b^*$ is an equilibrium.

\paragraph{Revenue Comparisons.}
We can also compare the expected revenues between the GFP and the VCG. We show that, unlike in the power-distribution case, the VCG always dominates in the exponential case:

\begin{proposition}
For 2 slots and $n>2$ bidders, if the distribution of $q_i$'s is exponential, then for any $\alpha \in (0,1)$ the expected revenue from the GFP is smaller than the expected revenue from the VCG.
\label{Prop_revenue_exponential}
\end{proposition}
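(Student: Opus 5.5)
The plan is to compute both revenues in closed form, using the structure of exponential order statistics, and then reduce the comparison to a single inequality. First I normalize: multiplying all $q_i$ by a constant leaves $C'(1)/C(1)$, and hence $b^*=C'(1)/(C(1)+C'(1))$, unchanged, while both revenues scale linearly in $1/\lambda$. So I set $\lambda=1$ without loss of generality. By Proposition~\ref{Prop_large_n_expon} the candidate $b^*$ is an equilibrium, and the allocation is efficient. As in the power case,
\[
REV_{GFP}=\frac{nC(1)C'(1)}{C(1)+C'(1)},\qquad REV_{VCG}=(1-\alpha)\mathbb E[q_{(2)}]+2\alpha\,\mathbb E[q_{(3)}].
\]

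\textbf{Step 1 (order statistics).} By symmetry, $nC(1)=\mathbb E[q_{(1)}]+\alpha\mathbb E[q_{(2)}]$. The R\'enyi representation gives $\mathbb E[q_{(1)}]=\mathcal H_n:=\sum_{k=1}^n 1/k$, $\mathbb E[q_{(2)}]=\mathcal H_n-1$ and $\mathbb E[q_{(3)}]=\mathcal H_n-\tfrac32$. A direct computation then gives the simple identity
\[
nC(1)-REV_{VCG}=1+\alpha .
\]
Since $REV_{GFP}<REV_{VCG}$ is equivalent to $C'(1)\bigl(nC(1)-REV_{VCG}\bigr)<REV_{VCG}\,C(1)$, the claim reduces to
\[
(1+\alpha)\,C'(1)<C(1)\,REV_{VCG}=\frac{(\mathcal H_n+\alpha(\mathcal H_n-1))\bigl(\mathcal H_n-1+\alpha(\mathcal H_n-2)\bigr)}{n}.
\]

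\textbf{Step 2 (computing $C'(1)$).} Writing $m=n-1$, we have $\Psi'(s)=m(1-\alpha)s^{m-1}+\alpha m(m-1)s^{m-2}(1-s)$. Hence $C'(1)=\int q^2f(q)^2\Psi'(F(q))\,dq$ is a combination of integrals of the form $\int q^2 f^2F^{k}\,dq$ and $\int q^2 f^2F^{k}(1-F)\,dq$. For the exponential distribution $f=1-F$, so each of these equals $\int q^2 f(1-F)F^{k}\,dq$ or $\int q^2 f(1-F)^2F^{k}\,dq$. These are normalized second moments of the second-highest (respectively third-highest) order statistic in a sample of size $k+2$ (respectively $k+3$). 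Concretely,
\[
\int q^2 f(1-F)F^{k}\,dq=\frac{\mathbb E[q_{(2:k+2)}^2]}{(k+2)(k+1)},\qquad \int q^2 f(1-F)^2F^{k}\,dq=\frac{2\,\mathbb E[q_{(3:k+3)}^2]}{(k+3)(k+2)(k+1)}.
\]
By R\'enyi, these second moments are expressible as $(\mathcal H_N-c)^2+\bigl(\mathcal H^{(2)}_N-c'\bigr)$, where $\mathcal H^{(2)}_N=\sum_{k\le N}1/k^2$ and $c,c'$ are explicit constants. With $k=m-1$ or $m-2$, every term involves sample size $n$, so $C'(1)$ becomes an explicit expression of the form $\frac{1}{n}\bigl[(1-\alpha)A_n+\alpha B_n\bigr]$, with $A_n,B_n$ built from $\mathcal H_n$ and $\mathcal H^{(2)}_n$.

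\textbf{Step 3 (the inequality).} Both sides of the reduced inequality are then quadratic polynomials in $\alpha$ (after clearing $n$), with coefficients in $\mathcal H_n$ and $\mathcal H^{(2)}_n$. I will show that the difference is positive at $\alpha=0$ and at $\alpha=1$, and then control the $\alpha^2$ coefficient. Either it has a favorable sign, so concavity or convexity finishes the argument, or I check the coefficients directly. The needed facts are elementary bounds such as $\mathcal H^{(2)}_n<\pi^2/6$ and $\mathcal H_n\ge 1+\tfrac12+\tfrac13$ for $n\ge3$, plus a direct check of small $n$. This final sign verification for all $n\ge3$ and $\alpha\in(0,1)$ is the main obstacle. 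My first attempt will be to express the difference as $(\mathcal H_n-a)^2$-type squares plus a nonnegative remainder in $\mathcal H^{(2)}_n$. If that fails, I will handle $n\le 5$ numerically and use asymptotic bounds on $\mathcal H_n$ for larger $n$.
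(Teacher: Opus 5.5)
Your proposal is correct and follows essentially the paper's route: closed-form $C(1)$, $C'(1)$ and VCG revenue from exponential order statistics (your identity $nC'(1)=(1-\alpha)\mathbb E[q_{(2)}^2]+2\alpha\,\mathbb E[q_{(3)}^2]$ is exactly the paper's $B_{n,\alpha}$, which the paper states without derivation), followed by a sign check. The step you flag as the main obstacle is short: expanding $C(1)\,REV_{VCG}-(1+\alpha)C'(1)$, the $\mathcal H_n^2$ terms cancel and it equals $\frac1n\bigl[(1+\alpha)^2(\mathcal H_n-\mathcal H^{(2)}_n)-\alpha\bigr]$, which is positive because $\mathcal H_n-\mathcal H^{(2)}_n=\sum_{s=2}^n\frac{s-1}{s^2}>\frac14\ge\frac{\alpha}{(1+\alpha)^2}$ for $n\ge 3$, as in the paper; note that this quadratic is convex in $\alpha$, so positivity at $\alpha=0$ and $\alpha=1$ alone would not have sufficed, and no numerics are needed.
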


\subsection{More than Two Slots}
\label{subsec:kge3}
We now discuss the case of more than 2 slots. In general, since we need to consider the probability of landing in any of the slots, the formula for $U(r)$ can get quite complicated. The most direct way to verify existence in any particular instance is to compute $U(r)$ and check numerically if $r=1$ is the global maximum. 

To develop some intuition, however, one canonical case turns out to be tractable. Suppose there are $n\ge 2$ bidders and $n$ slots with visibilities: 
\[
\alpha_j=\alpha^{j-1},\qquad j=1,\ldots,n,
\]
where $\alpha\in(0,1)$.

In that case, the expression for $C(r)$ is relatively tractable. Recall $m:=n-1,$ and suppose all other bidders bid $b^*$, while bidder $i$ bids $rb^*$. Conditional on own quality $q$, the number $\tilde M$ of opponents ranked above bidder $i$ is binomial with success probability $1-F(rq)$. Since slot $j+1$ has visibility $\alpha^j$,
\[
\mathbb E[\alpha^{\tilde M}\mid q]
=
\left[F(rq)+\alpha(1-F(rq))\right]^m.
\]
Assuming that the range of $q$'s is $[0,1]$, bidder $i$'s expected clicks are
\begin{equation}
C(r)
=
\int_0^1
q\left[\alpha+(1-\alpha)F(rq)\right]^m f(q)dq,    
\label{n_slots_Cr}
\end{equation}
where $F(rq)$ is understood to equal $1$ when $rq\ge 1$.

This observation simplifies calculations of $U(r)$. To illustrate, we show existence for the power distributions with $\theta \leq 1$ (for $\theta>1$ sometimes the pure-strategy symmetric equilibrium does not exist). 

\begin{proposition}[Power distributions of qualities and geometric visibilities]
Suppose there are $n$ slots and bidders, and slot visibilities decline geometrically  $\alpha_j=\alpha^{j-1}$ for some $\alpha \in (0,1)$. Suppose click-through rates  are distributed according to  
 \[
F(q)=q^\theta,\qquad q\in[0,1],
\]
for some $0<\theta\le 1$. Then $b^*$ is a symmetric pure-strategy equilibrium of the GFP.
\label{Prop_more_than_2_slots}
\end{proposition}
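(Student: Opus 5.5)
The plan is to verify the hypothesis of Lemma~\ref{Lemma2}, that $1/C(r)$ is convex on $(0,1/b^*]$, using the closed form of $C(r)$ from (\ref{n_slots_Cr}). Lemma~\ref{Lemma2} then shows that the unique candidate $b^*$ from the general theorem is a global best response. If the convexity of $1/C$ on $r<1$ resists, the fallback is the single-crossing condition of Lemma~\ref{lemma1SC}. That condition only needs $M(r)=C'(r)/(rC(r))'$ to be decreasing, which can be checked separately on each piece.

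\textbf{Step 1: reduce $C$ to a closed form.} Write $g(s):=\alpha+(1-\alpha)s$ and $k:=1+\theta$. In (\ref{n_slots_Cr}), substitute $t=q^\theta$, so that $f(q)\,dq=dt$, and then $s=r^\theta t$. For $r\le 1$ this gives
\[
C(r)=r^{-k}G(r^\theta),\qquad G(x):=\int_0^x s^{1/\theta}g(s)^m\,ds .
\]
For $r\ge1$, split the integral at $q=1/r$. On $q\ge 1/r$ we have $F(rq)=1$, and the second piece integrates explicitly, giving
\[
C(r)=A-Br^{-k},\qquad A=\tfrac{\theta}{\theta+1},\quad B=\tfrac{\theta}{\theta+1}-G(1)>0 ,
\]
where $B>0$ because $g<1$ on $[0,1)$. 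By the general theorem, $C$ is $C^1$ across $r=1$. Hence it suffices to show that $1/C$ is convex on each of $(0,1]$ and $[1,\infty)$: a $C^1$ function that is convex on two adjacent intervals is convex on their union.

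\textbf{Step 2: the region $r\ge1$.} Here $C=A-Br^{-k}$ is increasing and concave, so it is log-concave. By the last sentence of Lemma~\ref{Lemma2}, $1/C$ is convex there. This piece does not depend on $\theta\le 1$.

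\textbf{Step 3: the region $r\le1$ (main obstacle).} Set $x=r^\theta$. Then
\[
r\frac{C'(r)}{C(r)}=-k+\theta\,\frac{xG'(x)}{G(x)}=-k+\theta\,\frac{x^{1+1/\theta}g(x)^m}{G(x)} .
\]
I plan to show that $C$ is log-concave in $r$ on $(0,1]$, i.e.\ that $C'/C$ is decreasing. In the log-elasticity form above, this amounts to showing that $\varphi(x):=x^{1+1/\theta}g(x)^m/G(x)$ grows more slowly than $x^{1/\theta}$ does in the appropriate sense. The first thing I would try is to write $G(x)=\int_0^x s^{1/\theta}g(s)^m\,ds$ and compare the integrand to its endpoint value via the ratio $s^{1/\theta}g(s)^m/(x^{1/\theta}g(x)^m)$. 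The assumption $\theta\le1$ makes $1/\theta\ge1$, so $s^{1/\theta}$ is convex. Together with the linearity of $g$, this should give a Chebyshev- or Pr\'ekopa-type inequality that bounds the derivative of $\varphi$.

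If direct log-concavity fails near $r\to0$, I would instead check $M(r)$ from Lemma~\ref{lemma1SC} on this piece. We have $(rC)=r^{-\theta}G(r^\theta)=G(x)/x$, so $M$ becomes the ratio of derivatives $dC/d(rC)$. Showing that this ratio is decreasing reduces to a single-variable inequality in $x\in(0,1]$ involving $G$, $G'$, and $G''$. On $r\ge1$, $M$ is decreasing by the explicit form from Step~2. At $r=1$, $M$ is continuous because $C$ is $C^1$.

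The hardest part is the $r<1$ inequality, where the restriction $\theta\le1$ must enter. The paper notes that existence can fail for $\theta>1$, which confirms that this is the step where the hypothesis is genuinely needed.
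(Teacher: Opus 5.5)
Your overall architecture matches the paper's. You reduce to Lemma~\ref{Lemma2}, split at $r=1$, treat $r\ge1$ via $C=A-Br^{-(1+\theta)}$ being increasing and concave, and glue using $C\in C^1$; Steps 1 and 2 are correct. The gap is Step 3, which carries the entire content of the proposition, and your primary target there is false. Take $\theta=1$. On $[0,1]$, $C(r)=\int_0^1 q[\alpha+(1-\alpha)rq]^m\,dq$ is a polynomial with $C(0)=\alpha^m/2$, $C'(0)=m\alpha^{m-1}(1-\alpha)/3$ and $C''(0)=m(m-1)\alpha^{m-2}(1-\alpha)^2/4$. Hence
\[
C(0)C''(0)-C'(0)^2=\frac{m(m-9)}{72}\,\alpha^{2m-2}(1-\alpha)^2>0\qquad\text{for } m\ge 10 .
\]
So with $n\ge 11$ bidders, $\log C$ is strictly convex near $r=0$, and no Chebyshev- or Pr\'ekopa-type argument can give log-concavity of $C$ on $(0,1]$. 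The weaker property still holds there: $2C'(0)^2-C(0)C''(0)=\frac{m(7m+9)}{72}\alpha^{2m-2}(1-\alpha)^2>0$.

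Your fallback is the same condition restated. Wherever $C'>0$,
\[
\frac{d}{dr}\,\frac{1}{M(r)}=\frac{d}{dr}\left(\frac{C}{C'}+r\right)=\frac{2C'^2-CC''}{C'^2},
\]
so ``$M$ decreasing'' is exactly ``$1/C$ convex.'' You leave the resulting inequality in $G,G',G''$ unproved, so what is missing is an actual proof of $2C'^2\ge CC''$ on $(0,1]$. The paper gets it in two moves.

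First, set $z=r^\theta$ and $p=1/\theta$, so that $C(r)=\Phi(z)=\int_0^1x^{p}[\alpha+(1-\alpha)zx]^m\,dx$. Let $X$ have the tilted density $x^p[\alpha+(1-\alpha)zx]^m/\Phi(z)$ and put $T=X/(\alpha+(1-\alpha)zX)$. Then $\Phi'/\Phi=m(1-\alpha)\mathbb{E}[T]$ and $\Phi''/\Phi=m(m-1)(1-\alpha)^2\mathbb{E}[T^2]$. Moreover $T$ has a nondecreasing density on $[0,L]$, with $L=1/(\alpha+(1-\alpha)z)$. Hence $\mathbb{E}[T]\ge L/2$ and $\mathbb{E}[T^2]\le L\,\mathbb{E}[T]\le 2(\mathbb{E}[T])^2$. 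This gives $2\Phi'^2\ge\Phi\Phi''$, so $1/\Phi$ is convex and decreasing.

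Second, $1/C=(1/\Phi)\circ(r\mapsto r^\theta)$ is convex, because a convex decreasing function of a concave function is convex. This composition is where $\theta\le1$ actually enters. It does not enter through convexity of $s^{1/\theta}$, since the inequality for $\Phi$ holds for every $p\ge0$.
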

\medskip

The proof is in the Appendix. It establishes the claim by showing that $1/C(r)$ is convex and applying Lemma \ref{Lemma2}.

We can also verify, by tedious algebraic calculations, that the equilibrium exists in case the click-through rates  have exponential distributions by also showing that $1/C(r)$ is convex (and then applying Lemma \ref{Lemma2}). 

Finally, via direct calculations, we can compare expected revenues in the GFP and VCG. These calculations show that, as for $n=2$, the GFP dominates in the case of the power distributions and the VCG dominates in the case of the exponential distributions. We omit these calculations since they do not bring any major new intuition beyond what we have already learned in the $n=2$ case.

\section{Generalizations and Extensions: Heterogeneous Bidders}
\label{sec:ext-heterog}
In this section, we consider variations of the basic model that relax the assumption that all bidders are identical at the moment they submit their bids. 

In Subsection~\ref{subsec:vV}, we consider the case in which one bidder has per-click value $v>0$ and the other bidder has per-click value $V>v$. In Subsection~\ref{subsec:Bayesian}, we consider the case in which each bidder's per-click value $v_i$ is independently drawn from some distribution $G$ and is that bidder's private information. Finally, in Subsection~\ref{subsec:BayesianQS}, we consider another variation, going back to the base case in which each bidder's per-click value is equal to $v=1$, but now allowing each bidder $i$ to observe his quality score $q_i$ before submitting his bid.

\subsection{Unequal Per-Click Values $v>0$ and $V \ge v$}
\label{subsec:vV}

Consider the following variation of the basic model of Section~\ref{sec:basicmodel}. Quality scores are still distributed uniformly on $[0,1]$, but the per-click values of the two bidders are no longer necessarily identical. The value of the second bidder, $v_2$, is equal to some value $v > 0$, while the value of the first bidder, $v_1$, is equal to $V \ge v$.

In this asymmetric setting, the existence result continues to hold. 

\begin{theorem}
In the generalized first-price auction described above, there exists a unique pure-strategy equilibrium.
\end{theorem}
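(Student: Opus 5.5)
The plan is to reduce the problem to a one-dimensional equation in the bid ratio $r=b_1/b_2$, where bidder~1 is the high-value bidder ($v_1=V$) and bidder~2 the low-value one ($v_2=v$). Exactly as in the proof of Theorem~\ref{Theorem1}, bidder $i$'s payoff against an opponent bid $b_j>0$ is $(v_i-b_i)\,g(b_i/b_j)$. Here
\[
g(s)=\frac{\alpha}{2}+\frac{1-\alpha}{3}s \quad (s\le 1),
\qquad
g(s)=\frac{\alpha}{2}+\frac{1-\alpha}{6}\left(3-s^{-2}\right) \quad (s\ge 1),
\]
which is (\ref{eq2-main})--(\ref{eq3-main}) with $1$ replaced by $v_i$.

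\textbf{Step 1: concavity of each bidder's problem.} First I will show that $g$ is $C^1$ at $s=1$: both one-sided derivatives equal $(1-\alpha)/3$. I will also show that $g$ is increasing and concave, since $g''=-(1-\alpha)s^{-4}<0$ on $s>1$. It follows that $b_i\mapsto (v_i-b_i)g(b_i/b_j)$ is strictly concave on $[0,v_i]$. So each bidder has a unique best response to any $b_j>0$, characterized by the FOC or by the corner $b_i=0$.

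\textbf{Step 2: no zero bids.} If $b_j=0$, then any $b_i=\varepsilon>0$ wins the top slot almost surely, because $q_i>0$ a.s. The payoff $(v_i-\varepsilon)\cdot\frac12$ then has no maximizer, so bidder $i$ has no best response. Hence in any equilibrium both bids are positive. By Step~1, the bids must then be interior solutions of both FOCs. Writing $r_i=b_i/b_j$, bidder $i$'s FOC reads $v_i/b_j=\varphi(r_i)$ with $\varphi(s):=s+g(s)/g'(s)$. Direct computation gives $\varphi(s)=2s+\frac{3\alpha}{2(1-\alpha)}$ for $s\le1$, and $\varphi(s)=s+\frac{s^3\left(3-(1-\alpha)s^{-2}\right)}{2(1-\alpha)}$ for $s\ge1$.

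\textbf{Step 3: the ratio equation.} The two FOCs are $V=b_2\varphi(r)$ and $v=b_1\varphi(1/r)=rb_2\varphi(1/r)$. Dividing them eliminates $b_2$ and leaves
\[
\frac{V}{v}=\psi(r):=\frac{\varphi(r)}{r\varphi(1/r)},
\qquad\text{with}\qquad
\psi(r)=\frac{r\left((1-\alpha)+3r^2\right)}{4(1-\alpha)+3\alpha r}\quad (r\ge1).
\]
By construction $\psi(1/r)=1/\psi(r)$. The main (though elementary) step is to show that $\psi$ is strictly increasing on $[1,\infty)$, with $\psi(1)=1$ and $\psi(r)\to\infty$. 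For monotonicity I will use the log-derivative
\[
\frac1r+\frac{6r}{(1-\alpha)+3r^2}-\frac{3\alpha}{4(1-\alpha)+3\alpha r},
\]
which is positive because $1/r=3\alpha/(3\alpha r)>3\alpha/\left(4(1-\alpha)+3\alpha r\right)$.

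By the reciprocal symmetry, $\psi<1$ on $(0,1)$. So for $V\ge v$ there is exactly one solution $r^*\ge1$, with $r^*=1$ if and only if $V=v$, which recovers Theorem~\ref{Theorem1} after scaling. The bids are then uniquely determined as $b_2=V/\varphi(r^*)$ and $b_1=r^*b_2$.

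\textbf{Step 4: verification.} Finally I will check that these bids satisfy $0<b_i<v_i$, which follows from the FOC since $\varphi(r_i)>r_i$. Strict concavity from Step~1 then implies that the FOCs are sufficient, so each bid is a global best response to the other. This gives existence, and Steps~2--3 give uniqueness.
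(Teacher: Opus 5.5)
Your proof is correct, and it reaches the result by a different route from the paper.

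\textbf{The paper's route.} The paper works in bid space and only on the branch $b_1\ge b_2$. It writes bidder~1's FOC as an increasing inverse best response $b_2=\sqrt{3b_1^3/((1-\alpha)(2V-b_1))}$. It writes bidder~2's best response as the decreasing function $\max\{0,\frac{v}{2}-\frac{3\alpha b_1}{4(1-\alpha)}\}$. It then obtains a unique crossing. For the case $b_1<b_2$, it simply asserts that repeating the computation on that branch yields no consistent solution.

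\textbf{Your route.} You use homogeneity to collapse both FOCs into the single scalar equation $\psi(r)=V/v$. The identity $\psi(1/r)=1/\psi(r)$ then disposes of the $b_1<b_2$ case in one line. Your Step~1 shows that $g$ is $C^1$ and concave across the kink at $s=1$, so each payoff is strictly concave on $[0,v_i]$. This supplies two things the paper's branch-wise argument does not spell out. First, a branch FOC really is a global best response. Second, the crossing actually lies in the branch $b_1\ge b_2$; in the paper's formulation this amounts to checking $b_1^*\le 2V(1-\alpha)/(4-\alpha)$, which is true but not verified there. You also rule out zero bids explicitly. Finally, your argument shows directly that $b_1^*\ge b_2^*$, with equality if and only if $V=v$.

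\textbf{What the paper's formulation buys.} It produces explicit best-response curves in levels. The paper reuses these for its comparative statics of $b_1^*$ and $b_2^*$ in $v$, $V$, and $\alpha$. Your formulation gives monotonicity of the ratio $r^*$ in $V/v$ immediately, but the level comparative statics would take extra work.

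\textbf{A small caveat in Step~2.} You say bidder $i$ has ``no best response'' when $b_j=0$. This presumes bidder $i$ is not awarded the top slot for sure at a $0$--$0$ tie. Any tie-breaking rule can favor at most one bidder in that way, so the other bidder then has no best response. Hence $(0,0)$ is still not an equilibrium, and your conclusion stands.
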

\begin{proof}
Conditional on the two bids being $\{b_1,b_2\}$, the expected profit of bidder 1 is:
\begin{align*}
    E U_{1}\left(b_{1}, b_{2}\right)&=\left(V-b_{1}\right) E\left[q_{1}\left(\alpha+(1-\alpha) 1_{b_{1} q_{1}>b_{2} q_{2}}\right)\right].
\end{align*}
whereas the profit for bidder 2 is:
\begin{align*}
    E U_{2}\left(b_{1}, b_{2}\right)&=\left(v-b_{2}\right) E\left[q_{2}\left(\alpha+(1-\alpha) 1_{b_{1} q_{1}<b_{2} q_{2}}\right)\right].
\end{align*}
As before, if we simplify the expressions for expected payoffs, we get that if $b_1\geq b_2$:
\begin{align}
    E U_{1}\left(b_{1}, b_{2}\right)&=\left(V-b_{1}\right)\left[\frac{\alpha}{2}+(1-\alpha) \frac{1}{6}\left(3-\left(\frac{b_{2}}{b_{1}}\right)^{2}\right)\right]\label{eq1}\\
    E U_{2}\left(b_{1}, b_{2}\right)&=\left(v-b_{2}\right)\left[\frac{\alpha}{2}+(1-\alpha) \frac{1}{3} \frac{b_{2}}{b_{1}}\right],\label{eq2}
\end{align}
whereas if $b_1\leq b_2$:
\begin{align}
    E U_{1}\left(b_{1}, b_{2}\right)&=\left(V-b_{1}\right)\left[\frac{\alpha}{2}+(1-\alpha) \frac{1}{3} \frac{b_{1}}{b_{2}}\right]\label{eq3}\\
    EU_{2}\left(b_{1}, b_{2}\right)&=\left(v-b_{2}\right)\left[\frac{\alpha}{2}+(1-\alpha) \frac{1}{6}\left(3-\left(\frac{b_{1}}{b_{2}}\right)^{2}\right)\right].\label{eq4} 
\end{align}
If in equilibrium $b_1\geq b_2$ (as will be the case), then equations (\ref{eq1}) and (\ref{eq2}) are the relevant ones. 

Advertiser 1's problem is:
\begin{align*}
    \max_{b_1}\left(V-b_{1}\right)\left[\frac{\alpha}{2}+(1-\alpha) \frac{1}{6}\left(3-\left(\frac{b_{2}}{b_{1}}\right)^{2}\right)\right]
\end{align*}
with FOC:
\begin{align}
\frac{-3b_{1}^{3}+b_{2}^{2}\left( 2V-b_{1}\right) \left(1-\alpha \right)}{6b_{1}^{3}}   =0. \label{foc1}
\end{align}

The sign of this derivative  depends on the sign of the numerator and that changes once from positive to negative. Hence, for every $b_2$ there is a unique best response $b_1$ that is a solution to: 
\begin{align}
-3b_{1}^{3}+b_2^{2}\left( 2V-b_{1}\right) \left( 1-\alpha \right)=0
\end{align}

We can solve for the inverse of the best response:
\begin{eqnarray*}
    b_2 &=&\sqrt{\frac{3b_1^{3}}{\left( 1-\alpha \right) \left(2V-b_1\right) }}.
\end{eqnarray*}

Advertiser 2's problem is:
\begin{align*}
    \max_{b_2}\left(v-b_{2}\right)\left[\frac{\alpha}{2}+(1-\alpha) \frac{1}{3} \frac{b_{2}}{b_{1}}\right]    
\end{align*}
with FOC:
\begin{align}
    \frac{ 2\left( 1-\alpha \right) \left( v-2b_{2}\right)
-3\alpha b_1}{6b_{1}} =0.\label{foc2}
\end{align}

This derivative is decreasing in $b_2$. So the best response of bidder 2 is 
\begin{align}
b_2=\max \left\{0, \frac{1}{2}v-\frac{3\alpha b_{1}}{4\left( 1-\alpha \right) } \right\}.\label{BR2}
\end{align}

The equilibrium bid $b_1^*$ is the solution to:
\begin{equation}
\sqrt{\frac{3b_{1}^{3}}{\left( 1-\alpha \right) \left( 2V-b_{1}\right) }}%
=\max \left\{ \frac{2v\left( 1-\alpha \right) -3\alpha b_{1}}{4\left(
1-\alpha \right) },0\right\}.\label{bidding_asym}
\end{equation}
The function on the left is increasing and the function on the
right is decreasing in $b_1$. At $b_1=0$ the LHS is $0$ and the RHS is
positive. At $b_1=V$ the ranking is the opposite. 
So, there is a unique solution and at the solution $b_1>0$. 
Given the unique solution for $b_1^*$, there is a unique solution for $b_2^*$ as well (that can be found using (\ref{BR2})). That pair $(b_1^*,b_2^*)$ is thus the unique equilibrium for the case $b_1 \geq b_2$.

To rule out the cases in which $b_1 < b_2$ in equilibrium, we repeat the above computation for equations (\ref{eq3}) and (\ref{eq4}), and find that for all possible pairs of values $v<V$ and all possible values of $\alpha$, all candidate solutions have  $b_1> b_2$, leading to a contradiction. This completes the proof.
\end{proof}

In addition to showing equilibrium existence and uniqueness, we can also use the expressions in the proof to establish the comparative statics of equilibrium bids with respect to bidder values. 

First, observe that equilibrium bids 
$(b_1^*, b_2^*)$ are both increasing in $v$ (as long as $v$ remains smaller than $V$). To see this, note that the RHS of equation (\ref{bidding_asym}) increases in $v$, so the solution $b_1$ increases in $v$. Moreover, the LHS is the equilibrium bid $b_2$, and that is increasing in $b_1$. 

Second, the comparative statics of equilibrium bids with respect to the higher value, $V$, are more subtle. Again from equation (\ref{bidding_asym}), we observe that if $V$ increases, equilibrium bid $b_1^*$ has to increase as well. To determine the change in the other bid, $b_2^*$, we now consider equation (\ref{BR2}), and observe that as $b_1^*$ increases, $b_2^*$ has to \emph{decrease}. In words, if the value of the weaker bidder increases, both bidders increase their equilibrium bids. By contrast, if the value of the stronger bidder increases, then that bidder's equilibrium bid also goes up, but the weaker bidder's equilibrium best response goes down. 

Finally, we can also show that both equilibrium bids $(b_1^*, b_2^*)$ are decreasing in $\alpha$: as the second position becomes more valuable, the competition for the first one becomes less intense. To see this, rewrite equation (\ref{bidding_asym}) as 
\begin{equation}
\sqrt{\frac{3b_{1}^{3}}{\left( 1-\alpha \right) \left( 2V-b_{1}\right) }} + 
\frac{3\alpha b_{1}}{4\left(
1-\alpha \right) } = \frac{1}{2}v.\label{bidding_asym_alt}
\end{equation}
(We can do this because in equilibrium, the ``$\max$'' expression on the RHS of equation (\ref{bidding_asym}) is always equal to its first term, and can thus be simplified.) Next, observe that the expression on the LHS of equation (\ref{bidding_asym_alt}) is increasing in $b_1$ and also increasing in $\alpha$, and since this expression has to be constant (equal to the RHS), as $\alpha$ increases, $b_1$ has to decrease. To show that $b_2$ also has to decrease in $\alpha$, note that by construction, $b_2$ is equal to the first of the two terms in the sum on the LHS of equation (\ref{bidding_asym_alt}). Since the sum of the two terms is constant, if we want to show that the first one of them is decreasing in $\alpha$, it is sufficient to show that \emph{the ratio} of the first term to the second term is decreasing in $\alpha$. Dropping constant terms, this ratio is proportional to \begin{equation}\sqrt{\frac{(1-\alpha)b_1}{\alpha^2 (2V-b_1)}}\label{ratio},
\end{equation}
and since we've already established that $b_1$ is decreasing in $\alpha$, it is immediate by inspecting the terms under the square root that the expression in equation (\ref{ratio}) is also decreasing in $\alpha$.

Next, given equilibrium values ($b_1^*,b_2^*$), we can compute  equilibrium payoffs:
\begin{align*}
    E U_{1}\left(b_1^*, b_2^*\right)&=\left(V-b_1^*\right)\left[\frac{\alpha}{2}+(1-\alpha) \frac{1}{6}\left(3-\left(\frac{b_2^*}{b_1^*}\right)^{2}\right)\right]\\
    E U_{2}\left(b_1^*, b_2^*\right)&=\left(v-b_2^*\right)\left[\frac{\alpha}{2}+(1-\alpha) \frac{1}{3} \frac{b_2^*}{b_1^*}\right]
\end{align*}
and the expected revenue for GFP:
\begin{align*}
    REV_{GFP}&=\int_{0}^{1} \int_{q_{2} \frac{b_{2}^*}{b_1^*}}^{1}\left(q_{1} b_{1}^*+\alpha q_{2} b_{2}^*\right) d q_{1} dq_{2}+\int_{0}^{1} \int_{0}^{q_{2}\frac{b_{2}^*}{b_1^*}}\left(q_{2} b_{2}^*+\alpha q_{1} b_{1}^*\right) d q_{1}dq_2\\
    &=\frac{1}{2}b_{1}^{*}+\frac{1}{2}\alpha b_{2}^{*}+\frac{1-\alpha }{6}\frac{(b_{2}^{*})^{2}}{b_{1}^{*}}
    .
\end{align*}
As before, for GSP, the bidding strategies are  straightforward to find. Specifically, it is easy to see that advertiser 1 should bid $b_1=V(1-\alpha)$, while advertiser 2 should bid $b_2=v(1-\alpha)$. The argument is identical to the one in Section 4. To compute the expected revenue in GSP, it is convenient to normalize $V=1$ (and $v\le V=1$), and the revenue is then:
\begin{align*}
    REV_{GSP}&=\int_{0}^{1} \int_{q_{2} v}^{1}(1-\alpha) v\frac{q_2}{q_1} q_{1} d q_{1} d q_{2}+\int_{0}^{1} \int_{0}^{q_{2} v}(1-\alpha)\frac{q_1}{q_2} q_{2} d q_{1} d q_{2}\\
    &=\frac{(1-\alpha)v}{2}-\frac{(1-\alpha)v^2}{6}.
\end{align*}

The relationship between the revenues in GFP and GSP is now more complex. Figure~\ref{heteroval} plots the revenues in the two auction formats for $V=1$ and the values of $v$ between 0 and 1. We show three cases illustrating a wide range of possible outcomes: $\alpha \in\{0,0.4,0.8\}$. As we saw earlier, the revenue in GFP is greater than in GSP for $v=1$ and any $\alpha>0$ (they are equal for $\alpha=0$ and $v=1$). 
However, for lower values of $v$, that is not always the case. For instance, when $
\alpha$ is low (see the graphs for $\alpha=0$), for relatively high (but lower than 1) values of $v$, the revenue in GSP is higher, and the ranking reverses for small $v$. For intermediate values of $\alpha$ (e.g., $\alpha=0.4$),  GFP dominates for all $v$. Finally, for large $\alpha$  (see the graph for $\alpha=0.8$), GFP dominates for large $v$, and the ranking flips for small $v$ (but the differences are minimal for small $v$ and large $\alpha$, so in the bottom-right panel we zoom to a range of small $v$ to make the difference in revenues visible).

\pagebreak

\begin{landscape}

\begin{figure}
\centering
\begin{tabular}{ccc}
\subfloat[$\alpha = 0$, Auction Revenues]{\includegraphics[width = 2.5in]{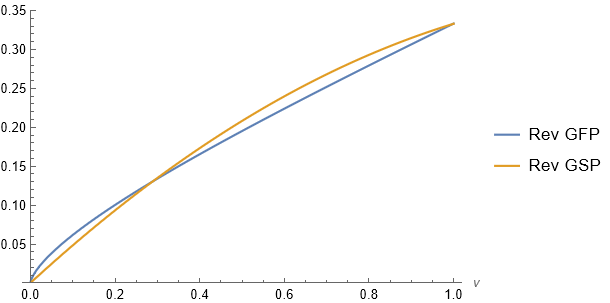}} &
\subfloat[$\alpha = 0.4$, Auction Revenues]{\includegraphics[width = 2.5in]{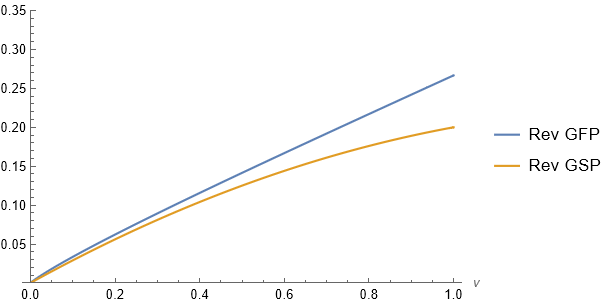}} &
\subfloat[$\alpha = 0.8$, Auction Revenues]{\includegraphics[width = 2.5in]{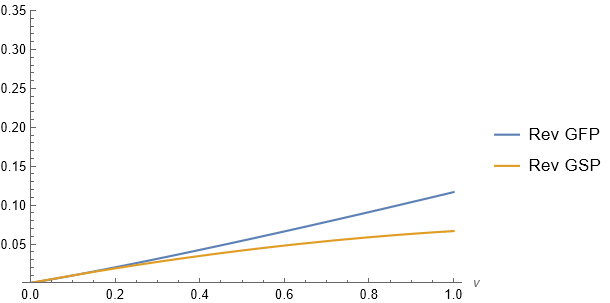}}\\[0.5in]
\subfloat[$\alpha = 0$, Revenue Difference]{\includegraphics[width = 2.5in]{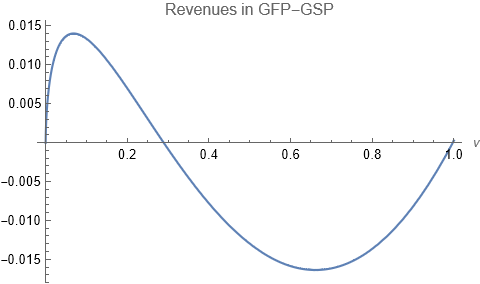}} &
\subfloat[$\alpha = 0.4$, Revenue Difference]{\includegraphics[width = 2.5in]{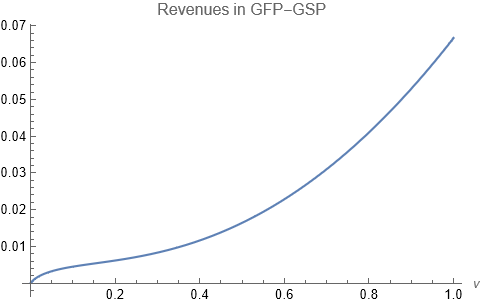}} &
\subfloat[$\alpha = 0.8$, Revenue Difference]{\includegraphics[width = 2.5in]{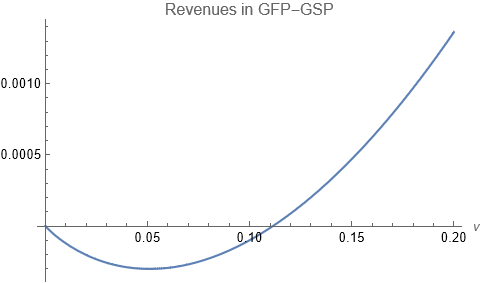}}\\[0.5in]
\end{tabular}
\caption{Revenues in GFP and GSP Auctions}
\label{heteroval}

\end{figure}
\end{landscape}
\newpage

\subsection{Privately Observed Per-Click Values}
\label{subsec:Bayesian}

In the preceding sections, including the asymmetric model of Subsection~\ref{subsec:vV}, we assumed that bidders' values were commonly known, and correspondingly the solution concept we used was the pure-strategy Nash Equilibrium: knowing the bid(s) of the other bidder(s), any particular bidder $i$ does not have an incentive to change his own. As we discussed in the Introduction, given the dynamic nature of advertising auctions, and the fact that bidders have many opportunities to revise their bids, this is a logical solution concept to apply: a profile of bids that is not a pure-strategy Nash Equilibrium will not be a stable outcome in this setting, since at least one bidder will always have an incentive to revise his or her bid. Nevertheless, both for completeness and because in some settings it may be a more applicable assumption, it is interesting to explore how our results extend to incomplete-information settings (with the corresponding solution concept of Bayesian Nash Equilibrium). We explore this question in the current subsection and the subsequent one. In the current subsection, we assume that each bidder's value $v_i$ is drawn independently from some distribution $G$, and is that bidder's private information. We prove that in this setting, a symmetric Bayesian Nash Equilibrium is always guaranteed to exist. In Subsection~\ref{subsec:BayesianQS}, we go back to the case of each bidder's value $v_i$ being equal to 1, and assume instead that each bidder's quality score is his or her private information. 

Consider a variation of the general case of Section~\ref{sec:ext-identical}, with $n\ge 2$ bidders and $n$ slots, with visibilities $1=\alpha_1 \ge \alpha_2 \ge \cdots \ge \alpha_n \ge 0$. As in that setting, the click-through rate of each bidder $i$, $q_i$, is drawn independently from the same distribution $F$ on $[\underline{q}, \overline{q}]$, and is not observed by the bidder at the time he submits his bid. What the bidder does observe is his value $v_i$, drawn independently from other random variables in the model, from some distribution $G$ with continuous and positive density $g$ on an interval $[0,\overline{v}]$. A symmetric equilibrium in this setting can be represented by a bidding function $\beta(v)$ such that for each bidder $i$, for every realized per-click value $v_i$, it is optimal in expectation to submit the bid $\beta(v_i)$ into the Generalized First-Price Auction in this setting, conditional on other bidders also using the bidding function $\beta(\cdot)$.

\begin{theorem}
A symmetric equilibrium bidding function $\beta(\cdot)$ is guaranteed to exist.
\end{theorem}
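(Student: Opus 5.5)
The plan is to treat the problem as a Bayesian game with one-dimensional types $v_i$ and to obtain a symmetric monotone pure-strategy equilibrium in two steps. First we solve finite bid grids using the single-crossing approach of Athey (2001). Then we pass to the limit as the grid becomes fine.

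\textbf{Interim payoff and single crossing.} Fix the opponents' (measurable, nondecreasing) bidding function $\beta$. A bidder with value $v$ who bids $b$ gets $(v-b)\,\mathcal{C}(b;\beta)$. Here $\mathcal{C}(b;\beta)$ is his expected number of clicks. It equals $\mathbb{E}\big[q\,\Psi_\beta(b,q)\big]$, where $\Psi_\beta(b,q)$ is the expected visibility of a bidder with score $bq$ facing independent opponents with scores $\beta(v_j)q_j$. The key observation is that $\mathcal{C}$ does not depend on $v$, and $\log(v-b)$ has cross-partial $1/(v-b)^2>0$ on $\{b<v\}$. Hence the payoff is log-supermodular in $(v,b)$ wherever it is positive, which gives Milgrom--Shannon single crossing for \emph{every} opponent strategy. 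Since $\mathcal{C}(\cdot;\beta)$ is nondecreasing, we may also restrict attention to bids in $[0,\overline v]$.

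\textbf{Finite grids.} For a grid $B_k=\{0,\overline v/k,2\overline v/k,\dots,\overline v\}$, the single-crossing property implies that best-response correspondences admit nondecreasing selections. We then apply Athey's fixed-point construction: Kakutani on the compact convex set of nondecreasing step functions from $[0,\overline v]$ into $B_k$, parametrized by jump points. We restrict this construction to the \emph{diagonal}, i.e.\ all opponents use the same $\beta$ and we look for a fixed point of the map $\beta\mapsto$ the monotone best-response set. Because the game is symmetric, this gives a symmetric monotone equilibrium $\beta_k$ on each grid. Ties in scores have probability zero whenever the bids are positive, because $F$ has a density. At zero bids we use uniform random tie-breaking, which keeps the payoffs well defined.

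\textbf{Limit.} By Helly's selection theorem, a subsequence $\beta_k$ converges pointwise, except possibly at countably many points (a $G$-null set, since $g$ is continuous), to a nondecreasing $\beta$. We must show that $\beta$ is a best response to itself. The main tool is continuity of $\mathcal{C}(b;\beta)$ jointly in $b>0$ and in $\beta$ (under a.e.\ convergence). Conditional on $v_j$, the score $\beta(v_j)q_j$ has a continuous distribution for $\beta(v_j)>0$. Therefore the probability of winning against opponent $j$, namely $F(bq/\beta(v_j))$ integrated against $q$, is continuous even when the opponents' bid distribution has atoms. This is exactly the smoothing effect behind Theorem 1, and the integrability assumption $\int q^2 f^2<\infty$ (together with finite $\mathbb{E}[q]$) supplies the dominated-convergence bounds. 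With this continuity, optimality of $\beta_k(v)$ against $\beta_k$ among grid bids converges to optimality of $\beta(v)$ against $\beta$ among all bids in $(0,\overline v]$, by a standard upper-hemicontinuity (Berge) argument.

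\textbf{Main obstacle: bids at zero.} The one genuine discontinuity is at $b=0$. If a positive mass of types bids $0$, then all zero scores tie, and a bidder who moves from $0$ to $\varepsilon>0$ jumps above all of them. I would handle this in two cases. First, if for some $v>0$ a deviation to a small $\varepsilon>0$ strictly dominates bidding $0$ against $\beta$, then by the continuity above the same is true against $\beta_k$ for large $k$. This contradicts $\beta_k(v)=0$ on a nonnull set of types. Second, for types that optimally bid $0$ in the limit, the payoff from $0$ is the tie-breaking value. I would check that this value is lower semicontinuous along the sequence, possibly by redefining the tie-breaking rule at zero in the limit, as in Jackson--Simon--Swinkels--Zame style endogenous sharing, and arguing that the sharing rule is irrelevant because the relevant types are indifferent. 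Type $v=0$ always bids $0$ and is trivial. If this step fails, the fallback is Reny's (2011) existence theorem for monotone equilibria in games with better-reply-secure payoffs, again restricted to symmetric profiles.
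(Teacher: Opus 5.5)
Your route is sound in outline and differs from the paper's at both stages. The paper discretizes values as well as bids and takes a possibly mixed symmetric equilibrium of each finite symmetric game from Nash's theorem. It extracts monotonicity by hand: the highest bid a type uses lies below the lowest bid of any higher type. It then builds $\beta$ by diagonalizing over dyadic values, asserts that $\beta$ is continuous, and appeals to ``standard limit arguments.''

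You keep values continuous, get a pure monotone symmetric equilibrium on each bid grid from Athey's theorem restricted to the diagonal, and pass to the limit with Helly. Both arguments rest on the same fact, which you can state more simply than via log-supermodularity: $\partial_v[(v-b)\mathcal{C}(b;\beta)]=\mathcal{C}(b;\beta)$ is nondecreasing in $b$, so the payoff has increasing differences everywhere.

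The paper's first step needs only Nash's theorem but must carry mixed strategies into the limit. Yours uses heavier machinery but gives pure strategies at once. More usefully, it makes the limit step explicit through joint continuity of $\mathcal{C}(b;\beta)$ at $b>0$. For that, continuity of $F$ and $\E[q]<\infty$ suffice, since $\Psi\le 1$; the condition $\int q^2f^2<\infty$ is not needed. You also never need continuity of $\beta$, which is just as well. With stochastic $q$, a jump in $\beta$ comes with a discrete change in clicks, so the usual first-price argument against gaps in the bid distribution does not directly apply.

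The step that would not go through as written is the treatment of zero bids, which the paper does not address either. In your first case, limit types with $\beta(v)=0$ will typically have $\beta_k(v)>0$ with $\beta_k(v)\to 0$. So there is nothing to contradict in ``$\beta_k(v)=0$ on a nonnull set,'' and your continuity result is unavailable at bids tending to zero. Your second case should be dropped. A positive mass of zero bids is never consistent with equilibrium under an anonymous tie-breaking rule, since bidding slightly above zero jumps over all of them at negligible cost. Moving to an endogenous sharing rule would change the game rather than prove the theorem.

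What you need instead is a lemma that $\beta(v)>0$ for every $v>0$. Suppose $\beta=0$ on $[0,v_0)$ with $v_0>0$, and fix $v\in(0,v_0)$.
\begin{itemize}
\item On the event that all opponents' values lie in $[v,v_0)$, which has probability $(G(v_0)-G(v))^{n-1}>0$, every opponent bids at least $\beta_k(v)$. So type $v$ does no better than with equal bids, earning at most $\frac1n\sum_j\alpha_j\E[q_{(j)}]$ expected clicks there, where $q_{(j)}$ is the $j$-th highest of $n$ draws.
\item A grid bid near a fixed $\varepsilon>0$ instead ranks first on that event with probability tending to one, because $\beta_k\to 0$ pointwise on $[v,v_0)$. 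It also does weakly better on every other realization once $\beta_k(v)<\varepsilon$.
\item Hence the click gain has $\liminf$ at least $\delta=(G(v_0)-G(v))^{n-1}\frac1n\sum_j(1-\alpha_j)\E[q_{(j)}]>0$, independent of $\varepsilon$. The extra payment is at most $\varepsilon\E[q]$.
\item Taking $\varepsilon<v\delta/(2\E[q])$ contradicts optimality of $\beta_k(v)$ for large $k$.
\end{itemize}
Once $\beta>0$ on $(0,\overline v]$, limiting opponents bid zero with probability zero, so tie-breaking at zero is irrelevant. Then $\mathcal{C}(\cdot;\beta)$ is continuous at $0$ as well, with value $\alpha_n\E[q]$, and your Berge argument covers every type.
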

\begin{proof}
The proof proceeds by discretizing the space of types and actions; showing that for each discretization, a symmetric (mixed-strategy) Bayesian-Nash Equilibrium exists and possesses certain properties; and then taking a limit of discretization and showing that a limiting bidding function is in fact a symmetric equilibrium of the original game.

Formally, for each integer $m\ge 1$, consider an auction game $\Gamma^{m}$ that is analogous to the game above, except that the space of allowed bids is restricted to $\left\{0,\tfrac{\overline{v}}{m},\tfrac{2\overline{v}}{m},\dots,\overline{v}\right\}$, and each bidder $i$ has $m$ possible values $v_i$, with the $j$-th possible value being equal to $E_G\lbrack v|\frac{j-1}{m}\overline{v}\le v \le \frac{j}{m}\overline{v}\rbrack$ and the probability of the bidder having that value being equal to $G(\frac{j}{m}\overline{v})-G(\frac{j-1}{m}\overline{v})$.

Since $\Gamma^m$ is a symmetric game, it has a symmetric equilibrium. For $j=1, \dots, m$ and $k=0,\dots,m$, let $p(m,j,k)$ denote the probability that when a player's realized value is the $j$-th possible one, $E_G\lbrack v|\frac{j-1}{m}\overline{v}\le v \le \frac{j}{m}\overline{v}\rbrack$, he bids $\frac{k}{m}\overline{v}$ in this symmetric equilibrium. (If for some $m$, there are multiple symmetric equilibria, pick any one of them.)

Of course, this equilibrium can be in mixed strategies, but nevertheless, it possesses an important structural property that is analogous to the standard equilibrium monotonicity property of continuous auction settings. Specifically, in the current discrete setting, the highest bid that a type $j$ submits with a positive probability has to be less than or equal to the lowest bid that any higher type $j'>j$ submits. This is because the expected allocation and the expected auction payment depend only on the bid (and not on the type), and the marginal value of the additional clicks from bidding a higher amount is strictly higher for the bidder with the higher value.

As $m$ increases, the bid of the lowest type has to converge to zero (because in equilibrium, bid never exceeds value). Define $\beta(0)=0$. 

Next, for each $m$, let $\overline{b}^m$ be the highest bid amount that is submitted in equilibrium with a positive probability. By the structural monotonicity property above, this bid is submitted by the bidder of the highest possible type. Since this bid always belongs to the compact interval $[0,\overline{v}]$, there must exist an infinite subsequence of indices $m$ along which $\overline{b}^m$ converges to some limit point~$\overline{b}$. Let $\beta(\overline{v})=\overline{b}$.

Next, we construct $\beta({\frac{\overline{v}}{2}})$ in the same way: from this infinite subsequence of indices $m$ that we selected above, further identify an infinite sub-subsequence for which the highest bid that occurs with positive probability for a bidder in a neighborhood of $\frac{\overline{v}}{2}$ converges to some limit point, and define $\beta({\frac{\overline{v}}{2}})$ as that limit point. 

Next, construct $\beta({\frac{\overline{v}}{4}})$, then $\beta({\frac{3\overline{v}}{4}})$, and so on for all fractions with powers of 2 in the denominator. For the remaining values $v$, define $\beta(v)$ as the supremum of bids $\beta(v')$ for all lower types $v'<v$. It is immediate that the resulting function $\beta(\cdot)$ is weakly monotone. Moreover, it has to be continuous because an infinitesimally small increase in the number of clicks associated with going from $\beta(v)$ to $\beta(v+\epsilon)$ cannot justify a discrete, non-infinitesimal increase in the bid (and thus in the expected GFP payment). The fact that $\beta(v)$ is a symmetric Bayesian Nash Equilibrium now follows from standard limit arguments. 
\end{proof}

\subsection{Privately Observed Quality Scores}
\label{subsec:BayesianQS}

In our baseline model and its variations considered above, we assumed that when submitting their bids, bidders have the same amount of uncertainty about their own ad quality scores for the upcoming impression and about their competitors' quality scores. We now consider a variation of the baseline model in which bidder $i$ is more informed about the quality score of his own ad than about the quality score of bidder $j$, and vice versa. 

Formally, we consider a variation of the general case of Section~\ref{sec:ext-identical}, with $n\ge 2$ bidders, each with value $v=1$, and $n$ slots, with visibilities $1=\alpha_1 \ge \alpha_2 \ge \cdots \ge \alpha_n \ge 0$. As in that setting, the click-through rate of each bidder $i$, $q_i$, is drawn independently from the same distribution $F$ on $[\underline{q}, \overline{q}]$. 

What is different from the general case of Section~\ref{sec:ext-identical} is that now each bidder $i$ knows his click-through rate $q_i$ before submitting his bid in the generalized first-price auction. 

The main result of this section is that in this setting, there always exists a unique symmetric equilibrium bidding function $\beta(q)$.

To see this, start with the following closely related setting. Suppose the auctioneer does not know the bidders' click-through rates, and runs the standard Vickrey-Clarke-Groves mechanism. I.e., she asks each bidder $i$ to report his private information (i.e., his click-through rate $q_i$), allocates the slots to the bidders in the descending order of reported click-through rates, and charges each bidder $j$ the amount equal to the reduction in the expected total number of clicks received by other bidders due to bidder $j$'s presence. (Since by assumption, per-click values of all bidders are normalized to 1, the expected number of clicks is equal to the expected value of clicks.) By the standard VCG logic, in this setting, it is optimal for each bidder to report his quality score truthfully. 

Let $\nu(q)$ denote the expected number of clicks received by a bidder with quality score $q$ in the above VCG auction. Let $\pi(q)$ be the expected payment made by a bidder with quality $q$ in the auction. Our main result shows that the unique symmetric equilibrium bidding function in the Generalized First-Price Auction is given by the ratio of these two functions.\footnote{One cosmetic wrinkle is that when $q=0$, $\nu(q)$ and $\pi(q)$ are also equal to zero, and their ratio is of course not defined. On the other hand, the bid of a bidder with $q=0$ in GFP does not matter, so for the purposes of the statement below, we will simply set $\beta(0)=0$ for concreteness.}

\begin{theorem}
\label{th:knownq}
Bidding function $$\beta(q) = \frac{\pi(q)}{\nu(q)}$$ is the unique symmetric equilibrium of the GFP auction model of this section. 
\end{theorem}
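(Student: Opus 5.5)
The plan is to reparametrize the game so that it becomes a standard private-value auction, and then apply the revenue-equivalence (envelope) logic of Section~4 together with the truthfulness of VCG. A bidder of type $q$ who submits bid $b$ is ranked by his score $s=bq$. When the opponents use a bidding function $\beta$, his expected visibility depends only on $s$; call it $A(s)$. His payoff is
$$(1-b)\,q\,A(bq)=q\,A(s)-s\,A(s).$$
So choosing a bid is equivalent to choosing a score $s$. In score space, type $q$ values each unit of visibility at $q$, and pays $t(s)=sA(s)$. This is quasilinear in the type, so the usual single-crossing, monotonicity and envelope tools apply directly. Write $a(q)=\Psi(F(q))$ for the efficient interim visibility of type $q$, so that $\nu(q)=q\,a(q)$. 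I will use this score formulation throughout.

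\emph{Necessity and uniqueness.} Fix a symmetric equilibrium $\beta$ with score map $\sigma(q)=q\beta(q)$.

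First, single crossing in $(q,s)$ implies that optimal scores are weakly increasing in $q$.

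Second, I would rule out atoms in the score distribution. Suppose a positive mass of types shared a score $s_0>0$ and $a$ were not locally constant there. Then a tiny increase above $s_0$ would produce a discrete gain in visibility at an infinitesimal extra cost. If instead $a$ is locally constant (equal adjacent visibilities), the ranking among those types is payoff-irrelevant, and I would argue that the bid is still pinned down where it matters.

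Hence $\sigma$ is strictly increasing on the relevant range. The allocation is then efficient, so type $q$ receives visibility $a(q)$ and expected clicks $\nu(q)$.

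Third, consider the lowest type $\underline q$. It receives the bottom visibility $\alpha_n$ whatever score it uses at the bottom of the support. Any positive score would therefore be a wasteful payment, so $U(\underline q)=\underline q\,\alpha_n$. This is exactly its VCG payoff, because the lowest type imposes no externality and pays $0$.

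The envelope theorem gives $U(q)=U(\underline q)+\int_{\underline q}^q a(x)\,dx$ in both GFP and VCG. Hence interim payments coincide: $\beta(q)\nu(q)=\pi(q)$. This forces $\beta=\pi/\nu$, which proves uniqueness.

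\emph{Sufficiency.} Define $\beta=\pi/\nu$. Integrating by parts,
$$\pi(q)=q\,a(q)-\underline q\,\alpha_n-\int_{\underline q}^q a=\int_{(\underline q,q]}x\,da(x).$$
Therefore
$$\sigma(q)=\frac{\pi(q)}{a(q)}=\frac{\int_{(\underline q,q]} x\,da(x)}{a(q)}.$$
This is a weighted average of the values $x\le q$, with weight $\alpha_n$ placed at $0$. It is continuous and nondecreasing in $q$, and strictly increasing wherever $a$ is. So when opponents use $\beta$, the equilibrium allocation is efficient, and choosing a score $\sigma(q')$ is equivalent to mimicking type $q'$.

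Mimicking $q'$ yields $q\,a(q')-\pi(q')$. The VCG incentive-compatibility inequality, which the paper already invokes, shows that this is maximized at $q'=q$.

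Scores outside the range of $\sigma$ remain to be handled. Scores above $\sigma(\overline q)$ buy no additional visibility and are dominated. Scores below $\sigma(\underline q)=0$ are infeasible. Scores in gaps of the range, which arise where $a$ is flat, give the same visibility as the lower endpoint of the gap at a higher price. So $\beta$ is an equilibrium.

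\emph{Main obstacle.} The delicate step is the no-atom and strict-monotonicity argument in the uniqueness part. The hardest case is when some adjacent visibilities are equal, so that $\Psi$ has flat pieces and ties are payoff-irrelevant. There I would argue that equality of interim payments still holds, and so $\beta$ is determined almost everywhere. The cases $q=0$ and $\alpha_n=0$ need the conventions of the footnote.
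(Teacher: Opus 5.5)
Your proposal is correct and follows the paper's proof. Sufficiency comes from the VCG incentive constraint applied to mimicking type $q'$, and uniqueness from weak monotonicity of the score $q\gamma(q)$, a jump-over/no-atom argument, and revenue equivalence with VCG; your score-space envelope formula and the weighted-average expression $\sigma=\pi/a$ only make explicit the paper's ``standard revenue equivalence machinery'' and replace its IC-rearrangement proof that $q\beta(q)$ is increasing. The obstacle you flag is vacuous, and the atom argument also works at $s_0=0$, which you should not exclude: since at least one $\alpha_j>\alpha_{j+1}$, the Bernstein polynomial $\Psi$ satisfies $\Psi'>0$ on $(0,1)$, so equal adjacent visibilities never make $a=\Psi\circ F$ flat on a positive-mass set of types, and jumping just above any score atom always gives a discrete visibility gain.
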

\begin{proof}
We first show that the bidding formula in the statement of Theorem~\ref{th:knownq} does in fact constitute an equilibrium. We then show that no other symmetric equilibria exist. 

As a preliminary step, it is useful to see what the expected outcomes for bidder $i$ will be if instead of reporting his true quality score $q$ to the VCG mechanism above, he reported a different quality score $q'$. His payment to the auctioneer would be simply equal to $\pi(q')$. His expected number of clicks, however, is somewhat more subtle, since misreporting the quality score will give the bidder the same distribution of positions as that of a bidder with true quality score $q'$, and thus with the same expected visibility, but for each impression, the bidder will end up with $\frac{q}{q'}$ as many clicks. Thus, the bidder's total expected number of clicks will be equal to $\frac{q}{q'} \nu(q')$. Finally, the expected payoff of the bidder from truthful reporting is $\nu(q)-\pi(q)$, while his payoff from misreporting is $\frac{q}{q'} \nu(q') - \pi(q')$, and so by the optimality of truthful bidding in VCG, we have \begin{equation}
\label{eq:incentives}
    \nu(q)-\pi(q) \ge \frac{q}{q'} \nu(q') - \pi(q').
\end{equation}

Now, suppose all bidders follow the bidding formula $\beta(q) = \frac{\pi(q)}{\nu(q)}$. Let us first show that a bidder with a higher $q$ will be ranked higher, i.e., that the total score used for ranking bidders, $q \beta(q)$, is increasing in $q$. Take some $q' > q$. Rewrite equation~\eqref{eq:incentives} as
\begin{equation}
\label{eq:incentives2}
    \frac{\nu(q)}{q}\left(q-q\frac{\pi(q)}{\nu(q)}\right) \ge \frac{\nu(q')}{q'}\left( q - q'\frac{\pi(q')}{\nu(q')}\right).
\end{equation}
Note that $\frac{\nu(q')}{q'} > \frac{\nu(q)}{q}$, because $\nu(q')\frac{q}{q'} > \nu(q)$ (if in the VCG auction above, the bidder with quality $q$ reports $q' > q$ instead, in expectation, he will receive higher positions and more clicks). Thus, it has to be the case that $q'\frac{\pi(q')}{\nu(q')} > q\frac{\pi(q)}{\nu(q)}$, and so $q\beta(q)$ is strictly increasing in $q$.

Thus, if all bidders follow the bidding formula $\beta(q)$, those with higher quality scores will receive higher positions, and so in expectation, each bidder receives the same number of clicks and makes the same payment to the auctioneer as in the VCG auction above. 

Suppose now a bidder with quality score $q$, instead of bidding $\beta(q)$, submits a bid equal to $\frac{q'}{q}\beta(q')$ for some other quality score $q'$ (since bidders are ranked on the basis of the product of their bid and their actual quality score, it can never be optimal for a bidder to submit a bid that cannot be represented this way). Then he will receive the same distribution of positions as a bidder with quality score $q'$ who reports truthfully, and will receive $\frac{q}{q'}$ as many clicks: $\frac{q}{q'} \nu(q')$. The amount he will pay for those clicks is $\left(
\frac{q}{q'} \nu(q')
\right)\times \left(\frac{q'}{q}\beta(q')  \right) = \pi(q').$ By inequality~\eqref{eq:incentives}, such a deviation is not profitable, and thus we have shown that bidding according to the bidding function $\beta(q) = \frac{\pi(q)}{\nu(q)}$ is a symmetric equilibrium of the Generalized First-Price Auction in this setting. 

Let us now show that there are no other symmetric equilibria. To do that, it is sufficient to show that for any symmetric equilibrium bidding function $\gamma(q)$, the product $q \gamma(q)$ is strictly increasing (i.e., those with higher quality scores get higher positions in equilibrium), which would then immediately imply that the expected allocations in this alternative equilibrium are identical to those in the equilibrium derived above. By the standard revenue equivalence machinery, this would in turn imply that $\gamma(q) = \beta(q)$.

To see that $q \gamma(q)$ must be (weakly) increasing in $q$, let $\mu(q)$ be the expected number of clicks received by a bidder of quality $q$ in this equilibrium. The expected profit made by this bidder is $(1-\gamma(q))\mu(q)$. If this bidder instead submitted a bid equal to $\gamma(q')\frac{q'}{q}$ for some $q' \ne q$, he would instead make an expected profit of $\left(1-\gamma(q')\frac{q'}{q}\right)\mu(q')\frac{q}{q'}$. Since we assumed that $\gamma$ is an equilibrium bidding function, it has to be the case that 
$$ (1-\gamma(q))\mu(q) \ge \left(1-\gamma(q')\frac{q'}{q}\right)\mu(q')\frac{q}{q'}.$$
Analogously,
$$ (1-\gamma(q'))\mu(q') \ge \left(1-\gamma(q)\frac{q}{q'}\right)\mu(q)\frac{q'}{q}.$$
Adding the two inequalities and canceling the same terms on both sides, we get
$$\mu(q) + \mu(q') \ge q \frac{\mu(q')}{q'} + q' \frac{\mu(q)}q ,$$
which in turn can be rewritten as 
$$(q-q')\left(\frac{\mu(q)}q -  \frac{\mu(q')}{q'}\right) \ge 0 .$$
Finally, observe that $\frac{\mu(q)}q$ is the expected visibility of the position obtained by a bidder of quality~$q$ (because the expected number of clicks, $\mu(q)$, is equal to the product of expected visibility and the bidder's quality score $q$). Thus, we have shown that a bidder with a higher quality score $q$ must receive (weakly) higher expected visibility---which of course means that he must have a (weakly) higher total score $q \gamma(q)$. 

Finally, to show that   $q \gamma(q)$ must be strictly increasing, we use a standard argument. Suppose there is an interval of quality scores over which $q \gamma(q)$ is constant. Then a bidder with a quality score in that interval would be able to increase his expected payoff by increasing his bid by a tiny amount and ``jumping over'' a mass of competitors at an arbitrarily small cost---which of course cannot be the case. 
\end{proof}

\section{Conclusion} \label{sec:conclusion}
In this paper, we revisit the classic result on the (non-)existence of pure-strategy Nash equilibria in the Generalized First-Price Auction and show that the conclusion may be reversed when ads are ranked based on a product of stochastic quality scores and bid amounts, as is commonly done in practice---provided there is enough uncertainty in these quality scores. Moreover, the expected revenue in the pure strategy equilibrium of the Generalized First-Price Auction may substantially exceed that of the Generalized Second-Price Auction, although under some conditions the relation may also be reversed.

Our baseline model is deliberately simple and streamlined, to illustrate the essential driving forces behind our results. We also show that some of the important features of our results continue to hold in various generalizations. Of course, in practice, sponsored search auction settings can be even more complex. Nevertheless, the overall intuition of the stochastic click-through rates ``smoothing out'' payoff functions in the Generalized First-Price Auction---and thus making the existence of a pure-strategy equilibrium more likely---should continue to hold in these more general settings. 

\newpage

\bibliographystyle{chicago}
\bibliography{auctions}


\section*{Appendix}
\begin{proof}[Proof of Proposition \ref{Prop1}]

First consider \(r\le1\).  Then
\[
F(rq)=r^\theta q^\theta,
\]
and we can write 
\[
C(r)
=
A r^{(m-1)\theta}+B r^{m\theta},
\]
where
\[
A=\frac{\alpha m\theta}{m\theta+1}>0,
\qquad
B=\frac{(1-\alpha m)\theta}{(m+1)\theta+1}.
\]
Let
\[
p:=(m-1)\theta,
\qquad x:=r^\theta.
\]
Then
\[
C(r)=r^p(A+Bx).
\]
Hence
\[
C'(r)
=
r^{p-1}\bigl(pA+(p+\theta)Bx\bigr),
\]
and
\[
C(r)+rC'(r)
=
r^p\bigl((1+p)A+(1+p+\theta)Bx\bigr).
\]
Thus
\[
M(r)
=
\frac1r
\frac{pA+(p+\theta)Bx}
{(1+p)A+(1+p+\theta)Bx}.
\]
If \(B\le0\), the second factor is decreasing in $x$, so \(M\) is decreasing in $r$. 
If \(B>0\) we can differentiate $\log M(r) $ with respect to \(\log r\),
\[
\frac{d\log M(r)}{d\log r}
=
-1+
\frac{\theta^2 ABx}
{\bigl(pA+(p+\theta)Bx\bigr)
 \bigl((1+p)A+(1+p+\theta)Bx\bigr)}.
\]
 To show that $M(r)$ is decreasing it is enough to show that the second term is less than $1$. 

Note that by collecting in the denominator only terms with $AB$ and dropping some positive terms we get (after substituting $c:=p+\theta=m\theta$):
\[
\begin{aligned}
        &(pA+cBx)\left[(1+p)A+(1+c)Bx\right]  \\
        &\qquad\geq
        \left[p(1+c)+c(1+p)\right]ABx .
\end{aligned}
\]
Moreover, because \(p=(m-1)\theta\geq \theta\) and \(c=m\theta\geq \theta\), we can bound:
\[
        p(1+c)+c(1+p)=p+c+2pc>\theta^2.
\]
 Therefore,
\[
        \frac{d\log M(r)}{d\log r}<0,
\]
and \(M(r)\) is decreasing on \((0,1]\).

Now consider \(r\ge1\).  Let
\[
\mu:=\mathbb E[q]=\frac{\theta}{\theta+1}.
\]
For \(r\ge1\), bidder \(i\) surely gets the top slot whenever \(q\ge1/r\).  A change of variables gives
\[
C(r)=\mu-K r^{-(\theta+1)},
\]
where
\[
K:=\mu-C(1)\ge0.
\]
Therefore
\[
C'(r)=(\theta+1)K r^{-(\theta+2)}.
\]
Thus
\[
M(r)
=
\frac{(\theta+1)K r^{-(\theta+2)}}
{\mu+\theta K r^{-(\theta+1)}}.
\]
If \(K=0\), then \(M(r)=0\).  If \(K>0\), then
\[
\frac{d\log M(r)}{d\log r}
=
-(\theta+2)
+
\frac{(\theta+1)\theta K r^{-(\theta+1)}}
{\mu+\theta K r^{-(\theta+1)}}.
\]
The second term is strictly smaller than \(\theta+1\), so $M(r)$ is decreasing for $r>1$ too. 

The two formulas agree at \(r=1\), since \(C\) is continuously differentiable.  Hence \(M(r)\) is weakly decreasing on the entire feasible domain.  Applying Lemma \ref{lemma1SC} establishes that $b^*$ is the symmetric pure-strategy equilibrium.
\end{proof}
\medskip

\begin{proof}[Proof of Proposition \ref{Prop_large_n_expon}]

To establish existence in the exponential case, we start with a technical lemma:
\begin{lemma}
Let \(m\geq 2\), \(\alpha\in(0,1)\), and
\[
        G(x)
        =
        (1-e^{-x})^m+\alpha m e^{-x}(1-e^{-x})^{m-1}.
\]
Then \((r,q)\mapsto G(rq)\) is log-concave on \(\mathbb R_{++}^2\).
\end{lemma}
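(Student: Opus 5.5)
The statement asks for joint log-concavity in $(r,q)$, not log-concavity in $\log r$ and $\log q$. So I will work directly with the Hessian of $(r,q)\mapsto \varphi(rq)$, where $\varphi:=\log G$. Writing $x=rq$, the Hessian entries are
\[
f_{rr}=q^2\varphi''(x),\qquad f_{qq}=r^2\varphi''(x),\qquad f_{rq}=\varphi'(x)+x\varphi''(x).
\]
Negative semidefiniteness requires two things. First, $\varphi''\le 0$. Second, $x^2\varphi''^2\ge(\varphi'+x\varphi'')^2$, that is, $|\varphi'+x\varphi''|\le -x\varphi''$.

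The function $G$ is increasing in $x$ (its derivative is a positive combination of the derivatives of the binomial ``top slot'' and ``second slot'' probabilities), so $\varphi'\ge0$. Hence the lower half of the absolute-value inequality, $\varphi'+x\varphi''\ge x\varphi''$, is automatic. The whole condition therefore reduces to the single inequality
\[
\varphi'(x)+2x\varphi''(x)\le 0\qquad\text{for all }x>0,
\]
which also implies $\varphi''\le0$. Equivalently, $t\mapsto \log G(t^2)$ is concave on $t>0$, since its second derivative is $2(\varphi'+2x\varphi'')$ at $x=t^2$. This reduction is the first step.

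Next I will compute $\varphi$ explicitly with $y:=e^{-x}$, using $dy/dx=-y$. Factoring gives
\[
G=(1-y)^{m-1}\bigl(1+ky\bigr),\qquad k:=\alpha m-1,
\]
so $\varphi=(m-1)\log(1-y)+\log(1+ky)$. The two pieces of the target expression are:
\[
\text{(I)}=\frac{(m-1)y}{(1-y)^2}\bigl[(1-y)-2x\bigr],\qquad
\text{(II)}=\frac{ky}{(1+ky)^2}\bigl[2x-(1+ky)\bigr].
\]
Piece (I) is strictly negative, because $1-e^{-x}\le x<2x$. Near $x\to0$ it behaves like $-(m-1)/x$. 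Piece (II) can have either sign, depending on the sign of $k$ and on $x$.

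The main obstacle is showing $\text{(I)}+\text{(II)}\le0$ uniformly in $x>0$, $m\ge2$ and $\alpha\in(0,1)$. I will first check the asymptotics to confirm the inequality is plausible:
\begin{itemize}
\item as $x\to0$, the $-(m-1)/x$ blow-up of (I) dominates, since (II) stays bounded;
\item as $x\to\infty$, the sum behaves like $e^{-x}(2x-1)\bigl(k-(m-1)\bigr)=e^{-x}(2x-1)m(\alpha-1)<0$.
\end{itemize}
For the global statement I will multiply through by the positive quantity $(1-y)^2(1+ky)^2/y$. This reduces the claim to
\[
(m-1)(1+ky)^2\bigl(2x-(1-y)\bigr)\;\ge\; k(1-y)^2\bigl(2x-1-ky\bigr).
\]
I will treat the two signs of $k$ separately.
\begin{itemize}
\item If $k\le0$ (so $\alpha m\le1$), the right side is positive only where $2x<1+ky$. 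There I will bound $|k|\le1$ and $1+ky\ge 1-y$, and use $2x-(1-y)\ge x$ to compare term by term.
\item If $k>0$, I will use $k<m-1$ (since $\alpha<1$), together with $(1+ky)^2\ge(1-y)^2$ and $2x-(1-y)\ge 2x-1-ky$. Note the last inequality holds because $y(1+k)\ge0$. Multiplying these gives the left side at least $(m-1)(1-y)^2(2x-1-ky)\ge k(1-y)^2(2x-1-ky)$ wherever $2x-1-ky\ge0$; where it is negative the right side is negative and the inequality is trivial.
\end{itemize}
I expect the $k\le0$ case to require the most care with the elementary exponential bounds. If a direct term-by-term comparison fails there, I will fall back on substituting $y=e^{-x}$ and proving the resulting one-variable inequality, which is linear in $k$ for the relevant range, at the endpoint $k=-1$ only.
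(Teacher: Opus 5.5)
Your proposal is correct and shares the paper's skeleton. The Hessian computation reduces joint log-concavity to $\varphi'\ge0$ together with $\varphi'+2x\varphi''\le0$; this matches the paper's determinant $-g'(g'+2xg'')\ge0$. You also use the same substitution $y=e^{-x}$, the same factorization $G=(1-y)^{m-1}(1+ky)$ with $k=\alpha m-1$ (the paper's $c$), the same case split on the sign of $k$, and the same bound $1-e^{-x}\le x$.

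The difference lies in how you verify the core inequality. The paper merges everything into $g'=yQ/\Delta$ and $g''=-yP/\Delta^2$. It then proves the $x$-free polynomial inequality $S=2P-Q(1+cy)\ge0$, which is exactly $g'+2(1-y)g''\le0$. It therefore needs $P\ge0$ (i.e.\ $g''\le0$) as a separate step before passing from $1-y$ to $x$ via $2xP\ge2(1-y)P$. You keep the two factors of $G$ apart. The $(1-y)^{m-1}$ contribution (I) is then visibly negative, only the $(1+ky)$ contribution (II) needs control, and $\varphi''\le0$ follows from the main inequality rather than being proved first. Your route shows more clearly where the sign comes from. The paper's gives a certificate depending on $y$ alone and proves the slightly stronger bound with $1-y$ in place of $x$.

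Your $k>0$ chain is complete. The $k\le0$ case, which you left as a plan, closes with the ingredients you list, provided you apply $1+ky\ge1-y$ to only one factor of $(1-y)^2$ and $1-y\le x$ to the other. On the region $2x<1+ky$, where your right-hand side equals $|k|(1-y)^2(1+ky-2x)>0$,
\begin{align*}
|k|(1-y)^2(1+ky-2x)&\le(1-y)^2(1+ky)\le(1-y)(1+ky)^2\\
&\le x(1+ky)^2\le(m-1)(1+ky)^2\bigl(2x-(1-y)\bigr).
\end{align*}
If you used $1+ky\ge1-y$ on both factors, you would be left needing $(m+1)x\ge1+ky$, which fails as $x\to0$.

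Two small corrections:
\begin{itemize}
\item Your fallback would not work. The inequality is a convex quadratic in $k$, not linear: the $k^2$ coefficient is $(m-1)y^2\bigl(2x-(1-y)\bigr)+y(1-y)^2>0$. So checking $k=-1$ alone proves nothing about interior $k$.
\item Your stated reason for monotonicity of $G$ is incomplete. The second-slot probability $m(1-s)s^{m-1}$, with $s=1-e^{-x}$, is not monotone. Instead write $G=(1-\alpha)s^m+\alpha\bigl[s^m+m(1-s)s^{m-1}\bigr]$, a positive combination of the top-slot and top-two probabilities, both increasing in $s$. Alternatively, read $\varphi'=y\bigl[\frac{m-1}{1-y}-\frac{k}{1+ky}\bigr]>0$ off your own decomposition using $k<m-1$; this is the paper's $Q>0$.
\end{itemize}
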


\begin{proof}
Let
\[
        g(x)=\log G(x),
        \qquad
        y=e^{-x},
        \qquad
        c=\alpha m-1.
\]
Then \(c\in(-1,m-1)\) and
\[
        G(x)=(1-y)^{m-1}(1+cy).
\]
Let
\[
        \Delta:=(1-y)(1+cy)>0.
\]
A direct differentiation gives
\[
        g'(x)=\frac{yQ}{\Delta},
        \qquad
        g''(x)=-\frac{yP}{\Delta^2},
\]
where
\[
        Q=(m-1)+c(my-1),
\]
and
\[
        P=(m-1)(1+c^2y^2)+c(2my-y^2-1).
\]

First, we claim that \(Q>0\). To see this, note that if \(c\geq 0\), then
\[
        Q\geq m-1-c=m(1-\alpha)>0.
\]
On the other hand, if \(c<0\), then
\[
        Q\geq (m-1)(1+c)=\alpha m(m-1)>0.
\]

Next, define
\[
        S:=2P-Q(1+cy).
\]
We claim that \(S\geq 0\). If \(c\geq 0\), then
\[
        S
        =
        (m-1-c)(1+cy)
        +
        cy\left[c((m-2)y+2)+m+2-2y\right]\geq 0.
\]
If \(c<0\), write \(c=-d\), where \(d\in(0,1)\). Then
\[
        S
        =
        (1-d)(m-1)+dm(1-y)^2
        -
        d(1-d)y((m-2)y+1).
\]
Since \(m\geq 2\) and \(0<y<1\),
\[
        y((m-2)y+1)\leq m-1.
\]
Therefore
\[
        S
        \geq
        (1-d)^2(m-1)+dm(1-y)^2
        \geq 0.
\]

Because \(Q>0\), \(1+cy>0\), and \(S=2P-Q(1+cy)\geq 0\), we have \(P>0\). Hence
\[
        g''(x)=-\frac{yP}{\Delta^2}\leq 0.
\]
Moreover, since \(x=-\log y\) and \(-\log y\geq 1-y\),
\[
\begin{aligned}
        2xP
        &=
        2(-\log y)P  \\
        &\geq
        2(1-y)P       \\
        &\geq
        Q(1-y)(1+cy)
        =
        Q\Delta.
\end{aligned}
\]
Thus
\[
        g'(x)+2xg''(x)
        =
        \frac{y}{\Delta^2}\left(Q\Delta-2xP\right)
        \leq 0.
\]

Now consider
\[
        \Phi(r,q):=\log G(rq)=g(rq).
\]
Let \(x=rq\). The Hessian of \(\Phi\) is
\[
        \boldsymbol{H}(\Phi(r,q))
        =
        \begin{pmatrix}
        q^2g''(x) & g'(x)+xg''(x) \\
        g'(x)+xg''(x) & r^2g''(x)
        \end{pmatrix}.
\]
The diagonal entries are nonpositive because \(g''(x)\leq 0\). Its determinant is
\[
\begin{aligned}
        q^2r^2g''(x)^2-\left(g'(x)+xg''(x)\right)^2
        &=
        x^2g''(x)^2-\left(g'(x)+xg''(x)\right)^2  \\
        &=
        -g'(x)\left(g'(x)+2xg''(x)\right)
        \geq 0,
\end{aligned}
\]
because \(g'(x)>0\) and \(g'(x)+2xg''(x)\leq 0\). Therefore $\boldsymbol{H}(\Phi)$ is negative
semidefinite, so \(\Phi\) is concave. Hence \((r,q)\mapsto G(rq)\) is log-concave.
\end{proof}

With this lemma, we proceed by showing that $C(r)$ is log-concave in this case and hence by Lemma \ref{Lemma2}, $b^*$ is an equilibrium.  Normalize \(\lambda=1\), since it does not affect log-concavity of $C(r)$. 

Conditional on own quality \(q\), an opponent is below bidder \(i\)'s score with probability
\[
        1-e^{-rq}.
\]
Therefore bidder \(i\)'s expected number of clicks from relative bid \(r\) is
\[
        C(r)
        =
        \int_0^\infty
        qe^{-q}
        \left[
        (1-e^{-rq})^m+\alpha m e^{-rq}(1-e^{-rq})^{m-1}
        \right]dq.
\]
Define
\[
        G(x)
        =
        (1-e^{-x})^m+\alpha m e^{-x}(1-e^{-x})^{m-1}.
\]
Then
\[
        C(r)=\int_0^\infty qe^{-q}G(rq)\,dq.
\]

By the technical lemma above, \((r,q)\mapsto G(rq)\) is log-concave. Also,
\[
        q\mapsto qe^{-q}
\]
is log-concave on \(\mathbb R_{++}\), since $\log(qe^{-q})=\log q-q$ is concave. Hence the full integrand
\[
        (r,q)\mapsto qe^{-q}G(rq)
\]
is jointly log-concave in \((r,q)\).

By Prékopa's theorem, integrating out \(q\) preserves log-concavity. Therefore
\[
        C(r)=\int_0^\infty qe^{-q}G(rq)\,dq
\]
is log-concave in \(r\), and Lemma \ref{Lemma2} establishes the claim.
\end{proof}

\medskip

\begin{proof}[Proof of Proposition \ref{Prop_revenue_exponential}]
For exponential distributions, the expected order statistics are
\[
\mathbb E[q_{(j)}]=\frac{H_n-H_{j-1}}{\lambda},
\]
where
\[
H_n=\sum_{s=1}^n \frac1s,
\qquad
H_n^{(2)}=\sum_{s=1}^n \frac1{s^2},
\qquad
H_0=H_0^{(2)}=0.
\]

\paragraph{GFP revenue.}
We can compute directly,
\[
C(1)=\frac{1}{n\lambda}\left[H_n+\alpha(H_n-1)\right].
\]
Define
\[
A_{n,\alpha}:=H_n+\alpha(H_n-1)
=(1+\alpha)H_n-\alpha.
\]
Then
\[
C(1)=\frac{A_{n,\alpha}}{n\lambda}.
\]

Next,
\[
C'(1)
=
\frac{1}{n\lambda}
\Bigg\{
(1-\alpha)
\left[
(H_n-1)^2+H_n^{(2)}-1
\right]
+
2\alpha
\left[
\left(H_n-\frac32\right)^2+H_n^{(2)}-\frac54
\right]
\Bigg\}.
\]
Define
\[
B_{n,\alpha}
:=
(1-\alpha)
\left[
(H_n-1)^2+H_n^{(2)}-1
\right]
+
2\alpha
\left[
\left(H_n-\frac32\right)^2+H_n^{(2)}-\frac54
\right].
\]
Equivalently,
\[
B_{n,\alpha}
=
(1+\alpha)\left(H_n^2+H_n^{(2)}\right)
-
2(1+2\alpha)H_n
+
2\alpha.
\]
Then
\[
C'(1)=\frac{B_{n,\alpha}}{n\lambda},
\]
and therefore
\[
b^*
=
\frac{B_{n,\alpha}}{A_{n,\alpha}+B_{n,\alpha}}.
\]

In the symmetric GFP equilibrium, all bidders bid the same \(b^*\), so the allocation is efficient.  Total expected clicks equal
\[
nC(1)=\frac{A_{n,\alpha}}{\lambda}.
\]
Thus GFP revenue is
\[
REV_{GFP}
=
b^* nC(1).
\]
Substituting,
\[
REV_{GFP}
=
\frac{1}{\lambda}
\frac{A_{n,\alpha}B_{n,\alpha}}
{A_{n,\alpha}+B_{n,\alpha}}.
\]

\paragraph{VCG revenue.}
As before, the expected revenue in the VCG auction is
\[
REV_{VCG}
=
(1-\alpha)\mathbb E[q_{(2)}]
+
2\alpha \mathbb E[q_{(3)}].
\]
Using the exponential order-statistic formula,
\[
\mathbb E[q_{(2)}]=\frac{H_n-1}{\lambda},
\qquad
\mathbb E[q_{(3)}]=\frac{H_n-\frac32}{\lambda}.
\]
Hence
\[
REV_{VCG}
=
\frac{1}{\lambda}
\left[
(1-\alpha)(H_n-1)
+
2\alpha\left(H_n-\frac32\right)
\right]
=
\frac{1}{\lambda}
\left[
(1+\alpha)H_n-(1+2\alpha)
\right].
\]

\paragraph{Revenue comparison.}
Using the expressions above,
\[
REV_{GFP}-REV_{VCG}
=
\frac{1}{\lambda}
\left[
\frac{A_{n,\alpha}B_{n,\alpha}}
{A_{n,\alpha}+B_{n,\alpha}}
-
\left((1+\alpha)H_n-(1+2\alpha)\right)
\right]
=
\frac{
\alpha-(1+\alpha)^2\left(H_n-H_n^{(2)}\right)
}{
\lambda\left(A_{n,\alpha}+B_{n,\alpha}\right)
}.
\]
The denominator is strictly positive because
\[
A_{n,\alpha}=n\lambda C(1)>0,
\qquad
B_{n,\alpha}=n\lambda C'(1)>0.
\]
Thus the sign of \(REV_{GFP}-REV_{VCG}\) is the same as the sign of
\[
\alpha-(1+\alpha)^2\left(H_n-H_n^{(2)}\right).
\]

Now observe that
\[
H_n-H_n^{(2)}
=
\sum_{s=2}^n\left(\frac1s-\frac1{s^2}\right)
=
\sum_{s=2}^n \frac{s-1}{s^2}.
\]
Since \(n>2\),
\[
H_n-H_n^{(2)}
>
\frac14.
\]
Also, for every \(\alpha\in[0,1]\),
\[
\frac{\alpha}{(1+\alpha)^2}\le \frac14.
\]
That establishes the claim.
\end{proof}
\medskip

\begin{proof}[Proof of Proposition \ref{Prop_more_than_2_slots}]
By Lemma \ref{Lemma2}, it is sufficient to show that $1/C(r)$ is convex. Let
\[
x=q^\theta,\qquad p=\frac1\theta.
\]
Then $q=x^p$ and $f(q)dq=dx$. We can rewrite (\ref{n_slots_Cr}) as:
\[
C(r)
=
\int_0^1
x^p
\left[\alpha+(1-\alpha)\min\{r^\theta x,1\}\right]^m dx.
\]

First consider $r\le 1$. Set $z=r^\theta$. Then
\[
C(r)=\Phi(z),
\]
where
\[
\Phi(z)
=
\int_0^1
x^p\left[\alpha+(1-\alpha)zx\right]^m dx.
\]
Equivalently,
\[
\Phi(z)
=
\sum_{\ell=0}^m
\binom{m}{\ell}
\alpha^{m-\ell}(1-\alpha)^\ell
\frac{z^\ell}{p+\ell+1}.
\]
A direct differentiation gives
\[
\left(\frac1{\Phi(z)}\right)''
=
\frac{2(\Phi'(z))^2-\Phi(z)\Phi''(z)}{\Phi(z)^3}
\ge 0,
\]
(where the inequality is established in Lemma \ref{Lemma4} below).
Thus $1/\Phi(z)$ is convex and decreasing in $z$. Since $0<\theta\le 1$, and $r^\theta$ is concave, \[
\frac1{C(r)}
=
\frac1{\Phi(r^\theta)}
\]
is convex on $[0,1]$.

Now consider $r\ge 1$. Let
\[
\mu=\mathbb E[q]=\frac{\theta}{\theta+1},
\qquad
C_1=C(1).
\]
For $r\ge1$, the expression truncates at $q=1/r$, and a change of variables gives
\[
C(r)=\mu-(\mu-C_1)r^{-(\theta+1)}.
\]
Therefore $C'(r)>0$ and $C''(r)<0$ for $r>1$, so
\[
\left(\frac1{C(r)}\right)''
=
\frac{2(C'(r))^2-C(r)C''(r)}{C(r)^3}
>0.
\]
Hence $1/C(r)$ is convex also for $r \geq 1$. 

The two formulas agree at $r=1$, and $C$ is continuously differentiable there, so $1/C(r)$ is convex on the whole feasible domain. By Lemma \ref{Lemma2}, $b^*$ is a pure strategy equilibrium.   
\end{proof}

\begin{lemma}[Convexity of $1/\Phi$]
\label{Lemma4}
Let $m\ge1$, $p\ge0$, and $\alpha\in(0,1)$, and define
\[
\Phi(z)
=\int_0^1 x^p\bigl[\alpha+(1-\alpha)zx\bigr]^m\dd x,
\qquad z\ge0.
\]
Then $1/\Phi(z)$ is strictly convex on $[0,\infty)$.
\end{lemma}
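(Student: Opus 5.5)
The plan is to reduce the convexity of $1/\Phi$ to a coefficientwise inequality, by expanding $\Phi$ as a polynomial in $z$ and using log-concavity of its coefficients. Write $\beta=1-\alpha$. As in the proof of Proposition~\ref{Prop_more_than_2_slots},
\[
\Phi(z)=\sum_{\ell=0}^m c_\ell z^\ell,\qquad c_\ell=\binom{m}{\ell}\frac{\alpha^{m-\ell}\beta^\ell}{p+\ell+1}>0 .
\]
Since $\Phi>0$, strict convexity of $1/\Phi$ is equivalent to $2\Phi'(z)^2-\Phi(z)\Phi''(z)>0$. Both terms are double sums, so
\[
2\Phi'^2-\Phi\Phi''=\sum_{i,j}c_ic_j\bigl[2ij-j(j-1)\bigr]z^{i+j-2}.
\]
Symmetrizing in $(i,j)$ and grouping by $k=i+j$, I would write $i=k/2+t$ and $j=k/2-t$. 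The symmetrized weight is then
\[
w_k(t)=\frac{k^2}{4}+\frac{k}{2}-3t^2 ,
\]
so the coefficient of $z^{k-2}$ is $S_k=\sum_t c_{k/2+t}\,c_{k/2-t}\,w_k(t)$. Here $t$ runs over the admissible values $|t|\le \min\{k/2,\;m-k/2\}$, in integer or half-integer steps according to the parity of $k$. It suffices to show $S_k\ge0$ for every $k$, with strict inequality for at least one $k$.

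The first ingredient is that $(c_\ell)$ is log-concave. The ratio $c_{\ell+1}^2/(c_\ell c_{\ell+2})$ factors as a binomial part and a denominator part. The binomial part equals $\frac{(m-\ell)(\ell+2)}{(m-\ell-1)(\ell+1)}\ge\frac{\ell+2}{\ell+1}$. The denominator part equals $1-1/a^2$ with $a=p+\ell+2\ge \ell+2$. Their product is at least $\frac{\ell+2}{\ell+1}\bigl(1-\frac{1}{(\ell+2)^2}\bigr)=\frac{\ell+3}{\ell+2}>1$. Log-concavity gives $c_ic_j\ge c_{i-1}c_{j+1}$ whenever $i\le j$. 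Hence, for fixed $k$, the products $d_t:=c_{k/2+t}c_{k/2-t}$ are even in $t$ and nonincreasing in $|t|$.

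The second ingredient is a computation showing that the weights sum to exactly zero over the full range $|t|\le k/2$. There are $k+1$ terms, with $\sum t^2=k(k+1)(k+2)/12$, so
\[
\sum_{|t|\le k/2}w_k(t)=(k+1)\Bigl(\tfrac{k^2}{4}+\tfrac{k}{2}\Bigr)-\tfrac{k(k+1)(k+2)}{4}=0 .
\]
Now let $W(T)=\sum_{|t|\le T}w_k(t)$ be the partial sums taken from the center outward. Because $w_k$ decreases in $|t|$, $W(T)$ first increases and then decreases, and it ends at $0$; hence $W(T)\ge0$ for every $T\le k/2$.

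Abel summation then gives
\[
S_k=\sum_T (d_T-d_{T+1})\,W(T)\ge0 ,
\]
since $d$ is nonincreasing outward. This covers the truncated case $k>m$ as well, because there the range $|t|\le m-k/2$ only involves initial partial sums $W(T)$. For strictness, I would look at a truncated top group, for example $k=2m$ with the single term $c_m^2\,w_{2m}(0)>0$, or at $k=2$, where $S_2=4c_1^2-3c_0c_2>0$ directly by log-concavity. In either case the polynomial $2\Phi'^2-\Phi\Phi''$ has nonnegative coefficients and at least one positive one. For strictness at every $z\ge0$, including $z=0$, the constant term $S_2>0$ suffices.

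The main obstacle I expect is the bookkeeping in the Abel-summation step: handling parity of $k$ (integer versus half-integer $t$) and the truncation when $k>m$. The key fact making the argument work is the exact cancellation $\sum_t w_k(t)=0$. If that step ran into trouble, my fallback would be the substitution $y=zx$, which gives $1/\Phi(z)=z^{p+1}\big/\int_0^z y^p(\alpha+\beta y)^m\,dy$, and then a direct verification of convexity.
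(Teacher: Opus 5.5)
Your proof is correct, but it takes a genuinely different route from the paper's. The paper argues pointwise in $z$. Under the tilted density $x^p(\alpha+(1-\alpha)zx)^m/\Phi(z)$ it writes $\Phi'/\Phi=m(1-\alpha)\mathbb{E}[T]$ and $\Phi''/\Phi=m(m-1)(1-\alpha)^2\mathbb{E}[T^2]$, where $T=X/(\alpha+(1-\alpha)zX)$. This reduces the claim to $\mathbb{E}[T^2]\le 2(\mathbb{E}[T])^2$. That inequality follows because $T$ has a nondecreasing density on $[0,L]$, so $\mathbb{E}[T]\ge L/2$ and $\mathbb{E}[T^2]\le L\,\mathbb{E}[T]$.

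You instead show that $2(\Phi')^2-\Phi\Phi''$ has nonnegative coefficients as a polynomial in $z$. The following steps all hold:
\begin{itemize}
\item the symmetrized weight $w_k(t)=k^2/4+k/2-3t^2$;
\item the log-concavity bound $(\ell+3)/(\ell+2)$;
\item the cancellation $\sum_{|t|\le k/2}w_k(t)=0$;
\item nonnegativity of the center-out partial sums (their increments change sign at most once, from $+$ to $-$, and the sums start and end at $0$);
\item the Abel summation, including the truncated case $k>m$.
\end{itemize}

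Your route buys a stronger, coefficientwise conclusion and a reusable principle: for any polynomial with positive log-concave coefficients, $1/\Phi$ is convex on $[0,\infty)$. The cancellation is just the boundary case $\Phi=1/(1-z)$, whose reciprocal is linear. The model enters only through log-concavity of $(c_\ell)$, and $p\ge 0$ is exactly what offsets the log-convex factor $1/(p+\ell+1)$. The paper's route buys brevity and avoids the parity and truncation bookkeeping. It also does not need $m$ to be an integer, whereas your binomial expansion does; that is harmless here since $m=n-1$.

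One small slip. Since $w_2(0)=2$ and $w_2(\pm1)=-1$, the constant term is $S_2=2\Phi'(0)^2-\Phi(0)\Phi''(0)=2(c_1^2-c_0c_2)$, not $4c_1^2-3c_0c_2$. It is still strictly positive by strict log-concavity (and equals $2c_1^2$ when $m=1$), so your strictness conclusion at every $z\ge 0$ stands. Also, positivity of $2(\Phi')^2-\Phi\Phi''$ implies strict convexity of $1/\Phi$ but is not equivalent to it; only the implication is needed.
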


\begin{proof}
Fix $z\ge 0$, let $X$ have density
$x^p\bigl(\alpha+(1-\alpha)zx\bigr)^m/\Phi(z)$ on $[0,1]$, and set
$T:=X/\bigl(\alpha+(1-\alpha)zX\bigr)$. Differentiating under the
integral sign, $\Phi'/\Phi=m(1-\alpha)\,\mathbb E[T]$ and
$\Phi''/\Phi=m(m-1)(1-\alpha)^2\,\mathbb E[T^2]$, so
\[
2(\Phi')^2-\Phi\Phi''
=m(1-\alpha)^2\Phi^2
\Bigl(2m\bigl(\mathbb E[T]\bigr)^2-(m-1)\,\mathbb E[T^2]\Bigr),
\]
and since $\mathbb E[T]>0$, it suffices to show
$\mathbb E[T^2]\le 2\bigl(\mathbb E[T]\bigr)^2$.

The substitution $t=x/\bigl(\alpha+(1-\alpha)zx\bigr)$ shows that $T$
has density
$f_T(t)\propto t^p\bigl(1-(1-\alpha)zt\bigr)^{-(p+m+2)}$ on $[0,L]$,
where $L:=1/\bigl(\alpha+(1-\alpha)z\bigr)$. As
$1-(1-\alpha)zL=\alpha/\bigl(\alpha+(1-\alpha)z\bigr)>0$, this density
is bounded, and it is nondecreasing. Hence
$F_T(t)/t=\frac1t\int_0^t f_T$ is nondecreasing in $t$ (its derivative
has the sign of $tf_T(t)-\int_0^t f_T\ge 0$), so $F_T(t)\le t/L$ on
$[0,L]$, and
\[
\mathbb E[T]=\int_0^L\bigl(1-F_T(t)\bigr)\,dt
\ \ge\ \int_0^L\Bigl(1-\frac tL\Bigr)dt=\frac L2 .
\]
Since $T\le L$ almost surely,
\[
\mathbb E[T^2]\ \le\ L\,\mathbb E[T]\ \le\ 2\bigl(\mathbb E[T]\bigr)^2 ,
\]
so $2(\Phi')^2-\Phi\Phi''\ge
2m(1-\alpha)^2\Phi^2\bigl(\mathbb E[T]\bigr)^2>0$, and
$\bigl(1/\Phi\bigr)''=\bigl(2(\Phi')^2-\Phi\Phi''\bigr)/\Phi^3>0$.

\end{proof}

\end{document}